\documentclass[a4paper,UKenglish,cleveref, autoref, thm-restate, numberwithinsect]{lipics-v2021}

\pdfoutput=1 
\hideLIPIcs  

\title{On (In)approximability of MaxMin Independent Set Reconfiguration}

\author{Hung P. Hoang}{Algorithms and Complexity Group, Faculty of Informatics, TU Wien, Austria \and \url{www.inf.ethz.ch/~hoangp} }{phoang@ac.tuwien.ac.at}{https://orcid.org/0000-0001-7883-4134}{Austrian Science Foundation (FWF, projects 10.55776/Y1329 and ESP1136425)}
\author{Naoto Ohsaka}{CyberAgent, Inc., Tokyo, Japan \and \url{https://todo314.github.io/} }{ohsaka\_naoto@cyberagent.co.jp}{https://orcid.org/0000-0001-9584-4764}{}
\author{Rin Saito}{Graduate School of Information Sciences, Tohoku University, Japan \and \url{https://srin728.github.io/} }{rin.saito@dc.tohoku.ac.jp}{https://orcid.org/0000-0002-3953-4339}{JST SPRING Grant Number JPMJSP2114}
\author{Yuma Tamura}{Graduate School of Information Sciences, Tohoku University, Japan \and \url{https://www.ecei.tohoku.ac.jp/alg/tamura/} }{tamura@tohoku.ac.jp}{https://orcid.org/0009-0001-5479-7006}{JSPS KAKENHI Grant Number JP25K21148}

\authorrunning{H. P. Hoang, N. Ohsaka, R. Saito, and Y. Tamura} 

\Copyright{Hung P. Hoang, Naoto Ohsaka, Rin Saito, and Yuma Tamura} 

\ccsdesc[500]{Theory of computation~Approximation algorithms analysis}
\ccsdesc[500]{Theory of computation~Problems, reductions and completeness}

\keywords{Combinatorial reconfiguration, independent set, approximation algorithms} 

\relatedversion{} 

\acknowledgements{
We wish to thank Shuichi Hirahara for helpful conversations.
We are also grateful to the anonymous referees for their helpful comments, in particular for pointing out the reference \cite{bhangale2022ug},
which removes the Unique Games Conjecture assumption in \cite{MR2804389}.
This work was initiated at the \emph{5th Combinatorial Reconfiguration Workshop} (CoRe 2024) in Fukuoka, Japan, in October 2024.
We would like to thank the organisers and participants for the inspiring atmosphere.
}

\nolinenumbers 

\usepackage{xcolor}
\usepackage[dvipsnames]{xcolor}
\usepackage{xspace}
\usepackage{tabularx}
\usepackage{booktabs}
\usepackage{environ}
\usepackage{xpatch}
\usepackage{mdframed}
\usepackage{mathtools}
\usepackage{mleftright}
\mleftright

\usepackage{tikz}
\usetikzlibrary{arrows.meta, positioning}

\hypersetup{colorlinks=true,citecolor=Blue,linkcolor=Red,urlcolor=Magenta}

\theoremstyle{definition}
\newtheorem{definition2}[theorem]{Definition}

\numberwithin{equation}{section}

\crefname{equation}{Eq.}{Eqs.}

\newcommand{\reco}{\leftrightsquigarrow}
\newcommand{\isr}{\textsc{ISR}\xspace}
\newcommand{\mmisr}{\textsc{MMISR}\xspace}
\newcommand{\mis}{\textsc{MIS}\xspace}
\newcommand{\mbb}{\textsc{MBB}\xspace}
\newcommand{\defeq}{\coloneq}

\newcommand{\iini}{I_{\mathrm{ini}}}
\newcommand{\itar}{I_{\mathrm{tar}}}
\newcommand{\imin}{\varphi}
\newcommand{\sqI}{\mathscr{S}}

\usepackage[full,sanserif,disableredefinitions]{complexity} 
\newclass{\XL}{XL}
\newcommand{\fpt}{$\mathsf{FPT}$\xspace}

\newcommand{\N}{\mathbb{N}}
\renewcommand{\epsilon}{\varepsilon}

\newcommand{\bigO}{O}

\DeclarePairedDelimiter{\floor}{\lfloor}{\rfloor}
\DeclarePairedDelimiter{\ceil}{\lceil}{\rceil}

\newcommand{\tw}{\operatorname{tw}}
\newcommand{\bw}{\operatorname{bw}}

\DeclareMathOperator{\val}{\mathsf{val}}
\DeclareMathOperator{\opt}{\mathsf{opt}}
\let\poly\relax\DeclareMathOperator*{\poly}{\mathrm{poly}}
\let\polylog\relax\DeclareMathOperator*{\polylog}{\mathrm{polylog}}

\definecolor{defblue}{rgb}{0, 0.4, 0.796}
\newcommand{\defi}[1]{\textcolor{defblue}{\emph{#1}}}

\makeatletter
\newcommand{\problemtitle}[1]{\gdef\@problemtitle{#1}}
\newcommand{\probleminput}[1]{\gdef\@probleminput{#1}}
\newcommand{\problemtask}[1]{\gdef\@problemtask{#1}}
\NewEnviron{problem}{
\problemtitle{}\probleminput{}\problemtask{}
\BODY
\noindent
\begin{tabularx}{\textwidth}{
	 						l X c}
	\multicolumn{2}{
						l}{\@problemtitle} \\
	\textbf{Input:} & \@probleminput \\
	\textbf{Output:} & \@problemtask
\end{tabularx}
}
\makeatother

\makeatletter
\xpatchcmd{\endmdframed}
  {\aftergroup\endmdf@trivlist\color@endgroup}
  {\endmdf@trivlist\color@endgroup\@doendpe}
  {}{}
\makeatother

\surroundwithmdframed[linecolor=black,linewidth=1pt,skipabove=1mm,skipbelow=1mm,leftmargin=0mm,rightmargin=0mm,innerleftmargin=1mm,innerrightmargin=1mm,innertopmargin=1mm,innerbottommargin=1mm]{problem}

\begin{document}

\maketitle

\begin{abstract}
In the \textsc{Independent Set Reconfiguration} problem under the Token Addition/Removal rule, given a graph $G$ and two independent sets $I$ and $J$ of $G$, we want to transform $I$ into $J$ by adding and removing vertices, such that all the sets throughout the process are independent sets.
Its approximate version called \textsc{MaxMin Independent Set Reconfiguration} aims to maximise the minimum size of the independent sets in the process above.
We study the (in)approximability of this problem for general graphs as well as restricted graph classes.
Firstly, on general graphs, we obtain a polynomial-time $(n / \log n)$-factor approximation algorithm, complementing the $\mathsf{PSPACE}$-hardness of $n^{\Omega(1)}$-factor approximation due to Hirahara and Ohsaka [STOC 2024, ICALP 2024] and the $\mathsf{NP}$-hardness of $n^{1-\varepsilon}$-factor approximation due to Ito, Demaine, Harvey, Papadimitriou, Sideri, Uehara, and Uno [TCS 2011].
Secondly, we present a polynomial-time approximation algorithm for degenerate graphs as well as $\mathsf{FPT}$-approximation schemes for bounded-treewidth graphs and $H$-minor-free graphs.
Lastly, we extend the above inapproximability results to bounded-degree graphs, graphs of bandwidth $n^{\frac{1}{2}+\Theta(1)}$, and bipartite graphs.
\end{abstract}

\clearpage
{\tableofcontents}

\clearpage
\setcounter{page}{1}

\section{Introduction}
\label{sec:intro}
Many combinatorial problems require solutions to be updated over time rather than recomputed from scratch. 
In such scenarios, intermediate solutions have to stay feasible, as abrupt changes are considered undesirable or impossible. 
This perspective is formalised by the framework of \emph{combinatorial reconfiguration}, which studies step-by-step transformations between feasible solutions of a combinatorial problem, which is referred to as the \emph{source problem}.
Each transformation must follow prescribed \emph{reconfiguration rules} and feasibility must be preserved throughout the entire sequence.

\textsc{Independent Set Reconfiguration} (\isr) \cite{hearn2005pspace,hearn2009games,kaminski2012complexity}
is a well-studied reconfiguration problem.
The source problem of \isr is \textsc{Independent Set};
i.e., the feasible solutions are the independent sets of a graph $G$.\footnote{
An \emph{independent set} is defined as a subset of vertices such that no two vertices are adjacent.
}
There are three popular reconfiguration rules, where we view an independent set as tokens placed on the vertices of $G$, described as follows:
(1)~Under the \emph{Token Jumping} rule \cite{kaminski2012complexity}, we may move one token from any vertex to any other vertex.
(2)~Under the \emph{Token Sliding} rule \cite{hearn2005pspace,hearn2009games}, we can only move a token along an edge of $G$.
(3)~Under the \emph{Token Addition/Removal} rule \cite{MR2797748}, we may add or remove a single token at each step, provided that the resulting independent set has size at least a given threshold.
Note that we can only apply a reconfiguration rule to transform an independent set to another independent set; that is, we can only transform between feasible solutions.

The common task is to decide if there exists a sequence to transform between two given independent sets $\iini$ and $\itar$ using one of the rules above.
Such a sequence of independent sets is called a \emph{reconfiguration sequence}.
For each of the three rules, \isr is $\PSPACE$-complete~\cite{hearn2005pspace,hearn2009games,MR2797748,kaminski2012complexity}, even for planar graphs of bounded bandwidth~\cite{MR3452428, wrochna2018reconfiguration}.
On bipartite graphs, \isr is $\PSPACE$-complete for the Token Sliding rule, while it is $\NP$-complete for the other two rules~\cite{lokshtanov2019complexity}.

In order to circumvent this hardness, a common method is to analyse the problem under the framework of parameterised complexity~\cite{CyganFKLMPPS15book}.
This paradigm aims to confine the intractability of a problem to certain well-defined \emph{parameters} of the input, allowing efficient algorithms when these parameters are small.
For an overview on the parameterised complexity of \isr, see \cref{subsec:related} and the recent survey~\cite{MR4782413}.

Another approach involves approximation algorithms, which trade off optimality for improved running time.
We note that, although approximability of reconfiguration problems often refers to that of the \emph{shortest reconfiguration sequence},
e.g., \cite{miltzow2016approximation,yamanaka2015swapping,heath2003sorting},
approximating the shortest reconfiguration sequence length for \isr is at least as hard as solving \isr itself.\footnote{
The shortest length for a YES-instance of \isr is at most $2^n$,
while for a NO-instance, the shortest length can be thought of as infinite since no feasible reconfiguration sequence exists.
Therefore, any approximation algorithm for the shortest reconfiguration sequence can distinguish between YES and NO instances of \isr.
}
In this paper, we study an approximate version of \isr, the so-called \textsc{MaxMin Independent Set Reconfiguration} (\mmisr) \cite{MR2797748}, defined as follows:
Given a pair of independent sets $\iini$ and $\itar$ of a graph~$G$, we want to transform $\iini$ into $\itar$ with the goal of maximising the minimum size of the independent sets in this sequence. 
This problem is only meaningful under the Token Addition/Removal rule, and further, to make sure it is always a YES-instance, the minimum size threshold is set to zero.
See \cref{sec:prelims} for a formal definition of the problem.

=We review known results on the approximability of \mmisr.
It is $\NP$-hard to approximate \mmisr on $n$-vertex graphs within an $n^{1-\varepsilon}$-factor for any $\varepsilon > 0$~\cite{MR2797748,zuckerman2007linear}.
Ohsaka~\cite{ohsaka2023gap} showed the $\PSPACE$-hardness of constant-factor approximation for graphs of maximum degree $3$, and
Hirahara and Ohsaka~\cite{hirahara2024optimal,MR4764919} showed
the $\PSPACE$-hardness of $n^{\Omega(1)}$-factor approximation for general graphs on $n$ vertices and
the $\PSPACE$-hardness of $\Delta^{\Omega(1)}$-factor approximation for graphs of maximum degree $\Delta$. 
Despite these strong hardness of approximation results, to the best of our knowledge,
no approximation algorithm for \mmisr is currently known.
See \cref{subsec:related} for existing results on approximability of \mmisr.

\subsection{Contributions}
In this paper, we investigate both approximation algorithms and inapproximability for \mmisr.
See \cref{tab:algo,tab:hardness} for summaries of the algorithmic and hardness results, respectively.
Note that all our algorithms also output a reconfiguration sequence rather than just the size of the smallest set.

\begin{table}[ht!]
    \centering
    \caption{Summary of our algorithmic results. Here, $n$ is the number of vertices in the graph, $\imin = \min\{|\iini|, |\itar|\}$, $\epsilon$ is an arbitrary positive real, and $f$ is some computable function. Note that the first algorithm for bounded-treewidth graphs runs in polynomial time, provided that a tree decomposition of width $k$ is given.}
    \label{tab:algo}
    \begin{tabular}{c c c c }
        \toprule
        Class of graphs & Factor & Run time &  Reference \\
        \midrule
        General & $n / \log n$ & $n^{\bigO(1)}$ & \cref{thm:general_aprox} \\
        Degeneracy $\leq k$ & 
        $k$
        & $n^{\bigO(1)}$ & \cref{thm:algo:degeneracy} \\
        Treewidth $\leq k$ & $\frac{\imin}{\imin-\bigO(k \log \imin)}$ & $n^{\bigO(1)}$ & \cref{thm:algo:treewidth} \\
        Treewidth $\leq k$ & $1 + \varepsilon$ & $f(k, \varepsilon)\cdot n^{\bigO(1)}$ & \cref{thm:treewidth_PTAS} \\
        $H$-minor-free & $1 + \varepsilon$ & $f(|V(H)|, \varepsilon)\cdot n^{\bigO(1)}$ & \cref{thm:minor_free} \\
        \bottomrule
    \end{tabular}
\end{table}

\subparagraph{Approximation algorithms.}
Our first contribution is an $(n/\log n)$-factor polynomial-time approximation algorithm for \mmisr on general $n$-vertex graphs (\cref{thm:general_aprox}).
This is the first non-trivial approximation algorithm for \mmisr, and
complements the $\NP$-hardness of $n^{1-\varepsilon}$-factor approximation \cite{MR2797748,zuckerman2007linear}.

\begin{figure}
    \centering
    \begin{tikzpicture}[
        box/.style={
            draw,
            rounded corners,
            minimum width=3cm,
            minimum height=0.75cm,
            align=center
        },
        >=Stealth
    ]
    
    \node[box] (deg) {Degeneracy};
    
    \node[box, below=2em of deg, xshift=-2.5cm] (tw) {Treewidth};
    \node[box, below=2em of deg, xshift= 2.5cm] (md) {Maximum degree};
    
    \node[box, below=2em of tw, xshift=2.5cm] (bw) {Bandwidth};
    
    \draw[->] (bw) -- (tw);
    \draw[->] (bw) -- (md);
    
    \draw[->] (tw) -- (deg);
    \draw[->] (md) -- (deg);
    
    \end{tikzpicture}
    \caption{Relationships between some parameters in this paper. An arrow from a parameter $A$ to a parameter $B$ indicates that if $A$ is bounded, then $B$ is also bounded.
    See \cref{sec:prelims} for the definitions of these graph classes.}
    \label{fig:parameters}
\end{figure}
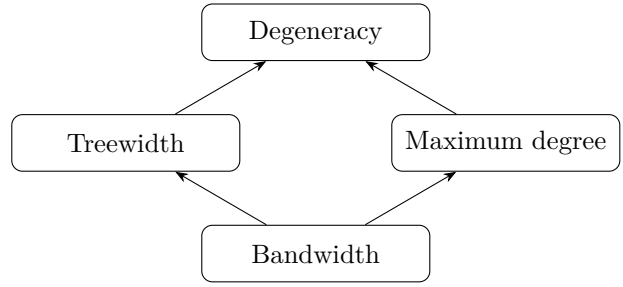

Our next contribution is to develop approximation algorithms for \mmisr on special graph classes,
by leveraging techniques from parameterised complexity.
See \cref{fig:parameters} for the relationships between graph parameters examined in this paper and \cref{sec:prelims} for the formal definitions of these parameters.
First, we show a polynomial-time algorithm for \emph{$k$-degenerate graphs}
whose approximation factor is 
$\approx k$
(\cref{thm:algo:degeneracy}).

Second, for \emph{graphs of treewidth at most $k$},
we show a polynomial-time approximation algorithm for \mmisr whose approximation factor is $\approx \frac{\imin}{\imin - \bigO(k\log\imin)}$, where $\imin$ is the input independent set size (\cref{thm:algo:treewidth}),
assuming that a tree decomposition of width $k$ is given as input.
As an application of this algorithm, we show an \emph{\fpt-approximation scheme} (FPT-AS) parameterised by treewidth $k$ (\cref{thm:treewidth_PTAS}); i.e.,
for every $\epsilon > 0$, there exists a $(1+\varepsilon)$-factor approximation algorithm that runs in time $f(k, \epsilon) \cdot n^{\bigO(1)}$ for some computable function $f$.
Our FPT-AS is obtained by combining \cref{thm:algo:treewidth} with
an \fpt algorithm for \mmisr, which is based on \cite{AgrawalHM25,kaminski2012complexity}.
Note that an FPT-AS implies an \emph{efficient polynomial-time approximation scheme} (EPTAS) for each fixed $k$.\footnote{
This is stronger than \cref{thm:algo:degeneracy},
which implies an $\bigO(k)$-factor approximation algorithm since the degeneracy is bounded by the treewidth.
}

Third, we show an FPT-AS for \emph{$H$-minor-free graphs} parameterised by the size of $H$ (\cref{thm:minor_free}).
$H$-minor-free graphs form a subclass of bounded-degeneracy graphs~\cite{Thomason2001} and include planar graphs and bounded-genus graphs.
Specifically, for every $\epsilon > 0$ and for graphs that exclude a fixed graph $H$ as a minor,
there exists a $(1+\varepsilon)$-factor approximation algorithm that runs in time $f(|V(H)|, \epsilon) \cdot n^{\bigO(1)}$ for some computable function $f$.

\subparagraph{Inapproximability results.}
Our last contribution is that we establish inapproximability results for \mmisr on restricted graph classes.
Specifically, we first prove the $\NP$-hardness of $\Theta(\sqrt{\Delta})$-factor approximation for graphs of maximum degree $\Delta$ (\cref{thm:hardness:degree}).
This hardness result is quantitatively stronger than
the $\PSPACE$-hardness of $\Delta^{\epsilon_0}$-factor approximation \cite{hirahara2024optimal},
where $\epsilon_0$ may be extremely small (e.g., $\epsilon_0 = 0.001$).
Then, we prove the $\NP$-hardness of $\Theta\bigl(n^{4\delta^2-\epsilon}\bigr)$-factor approximation
for $n$-vertex graphs of bandwidth $\bigO\bigl(n^{\frac{1}{2}+\delta}\bigr)$, where
$\delta \in \left(0,\frac{1}{2}\right)$ is a positive real and $\epsilon$ is any small positive real
(\cref{thm:hardness:bandwidth}).
This result means that for graphs of bandwidth, treewidth, and degeneracy $n^{\frac{1}{2}+\Theta(1)}$,
\mmisr cannot be approximated within a polynomial factor in polynomial time unless $\cP = \NP$.
Finally, we prove that
\mmisr on $n$-vertex bipartite graphs cannot be approximated within a factor of $n^{1-\epsilon}$ for any $\epsilon > 0$,
assuming the Small Set Expansion Hypothesis (SSEH)~\cite{raghavendra2010graph} and $\NP \nsubseteq \BPP$ (\cref{thm:hardness:bipartite}).
Therefore, \mmisr on bipartite graphs is as hard as on general graphs with respect to the approximation factor.

\begin{table}[ht!]
    \centering
    \caption{Summary of inapproximability results for \mmisr,
    where $n$ is the number of vertices in the graph,
    $\delta \in \left(0,\frac{1}{2}\right)$ is any positive real, and
    $\epsilon$ is any small positive real. These hardness results exclude polynomial-time approximation algorithms for \mmisr on each graph class under a particular hypothesis.}
    \label{tab:hardness}
    \begin{tabular}{c c c c }
        \toprule
        Class of graphs & Factor & Assumptions &  Reference \\
        \midrule
        General & $n^{1-\varepsilon}$ & $\NP \neq \cP$ & \cite{MR2797748,zuckerman2007linear} \\
        General & $n^{\Omega(1)}$ & $\PSPACE \neq \cP$ & \cite{MR4764919} \\
        Maximum degree $\Delta$ & $\Delta^{\Omega(1)}$ & $\PSPACE \neq \cP$ & \cite{hirahara2024optimal} \\
        Maximum degree $\Delta$ & $\Theta\bigl(\sqrt{\Delta}\bigr)$ & $\NP \neq \cP$ & \cref{thm:hardness:degree} \\
        Bandwidth $\bigO\bigl(n^{\frac{1}{2} + \delta}\bigr)$ & $\Theta\bigl(n^{4\delta^2-\epsilon}\bigr)$ & $\NP \neq \cP$ & \cref{thm:hardness:bandwidth} \\
        Bipartite & $n^{1-\varepsilon}$ & SSEH, $\NP \nsubseteq \BPP$ & \cref{thm:hardness:bipartite} \\
        \bottomrule
    \end{tabular}
\end{table}

\subparagraph{Discussions.}
A natural question is whether our $(n/\log n)$-approximation can be improved by applying stronger static approximation algorithms for Maximum Independence Set (MIS) on general graphs (see an overview in \cref{subsec:related} below).
We do not know how to do this in a black-box manner.
The main obstacle is that \mmisr requires not just one large independent set but a reconfiguration sequence where every independent set is large throughout.
By contrast, an approximation algorithm for MIS only returns a single large independent set, and it is not clear how to connect it with the prescribed initial and target sets such that all the independent sets throughout the reconfiguration process remain large.
Therefore, improving the approximation factor for \mmisr seems to require techniques that are aware of the reconfiguration rules.

Next, we compare our algorithmic and hardness results.
Since the degeneracy is bounded by the maximum degree,
\cref{thm:algo:degeneracy,thm:hardness:degree} imply that
the optimal approximation factor for \mmisr on graphs of degeneracy $k$ lies between $\Theta(\sqrt{k})$ and $\Theta(k)$.
Since the treewidth is bounded by the bandwidth,
\cref{thm:treewidth_PTAS,thm:hardness:bandwidth} imply that
\mmisr on graphs of treewidth $k$ admits an EPTAS if $k=\bigO(1)$, while
does not have a polynomial-factor approximation algorithm if $k=n^{\frac{1}{2}+\Theta(1)}$.
An immediate open question is thus
if there exists an (E)PTAS for \mmisr on graphs of superconstant treewidth $k$ (e.g., $k = \log n$).

\subsection{Organisation}
The rest of this paper is organised as follows.
In \cref{subsec:related}, we review related work.
In \cref{sec:prelims}, we formally define the \mmisr problem and introduce related notions.
In \cref{sec:tech_overview}, we present technical overviews of our results.
In \cref{sec:algo:general}, we develop an $\frac{n}{\log n}$-factor approximation algorithm for \mmisr (\cref{thm:general_aprox}).
In \cref{sec:algo}, we construct approximation algorithms for \mmisr on special graph classes
(\cref{thm:algo:degeneracy,thm:algo:treewidth,thm:treewidth_PTAS,thm:minor_free}).
In \cref{sec:hardness}, we show inapproximability results for \mmisr on restricted graph classes
(\cref{thm:hardness:bipartite,thm:hardness:degree,thm:hardness:bandwidth}).

\subsection{Related work}
\label{subsec:related}

\subparagraph{Parameterised complexity of \textsc{Independent Set Reconfiguration}.}
The parameterised complexity of \textsc{Independent Set Reconfiguration} under the Token Jumping rule (\textsc{ISR-TJ}) is well-studied with respect to the independent set size~$k$.
On general graphs, \textsc{ISR-TJ} is known to be \W[1]-hard~\cite{ISR:ItoKOSUY20,ISR:MouawadN0SS17} and \XL-complete~\cite{ISR:BodlaenderGS21}.
As with the classical \textsc{Independent Set} problem, FPT algorithms for \textsc{ISR-TJ} have been developed for several sparse graph classes.
Specifically, Bousquet, Mary and Parreau~\cite{BousquetMP17} showed that the problem is fixed-parameter tractable on biclique-free graphs, a wide class that encompasses planar graphs, bounded-degeneracy graphs, and nowhere dense graphs.
For a comprehensive review of these and related results, we refer the reader to the recent survey by Bousquet, Mouawad, Nishimura and Siebertz~\cite{MR4782413}.

\subparagraph{Approximability of \mmisr and other reconfiguration problems.}
For a reconfiguration problem,
its \emph{approximate version} allows to use infeasible solutions,
but requires optimising the ``worst'' feasibility along the reconfiguration sequence.
Ito, Demaine, Harvey, Papadimitriou, Sideri, Uehara, and Uno \cite{MR2797748}
showed that approximate versions of
\textsc{Independent Set Reconfiguration} and \textsc{SAT Reconfiguration} are $\NP$-hard to approximate.
Note that the reduction in \cite{MR2797748}
along with the inapproximability results of \textsc{Maximum Independent Set} due to
H\r{a}stad~\cite{MR1687331} and Zuckerman~\cite{zuckerman2007linear}
implies the $\NP$-hardness of $n^{1-\epsilon}$-approximation for any $\epsilon > 0$.
$\PSPACE$-hardness of approximation for reconfiguration problems was posed as an open problem by \cite{MR2797748}.
Ohsaka~\cite{ohsaka2023gap} postulated a reconfiguration analogue of
the PCP theorem \cite{arora1998probabilistic,arora1998proof},
called the \emph{Reconfiguration Inapproximability Hypothesis} (RIH),
and proved that assuming RIH, approximate versions of several reconfiguration problems are $\PSPACE$-hard to approximate, including \mmisr.
Recently,
Hirahara and Ohsaka~\cite{MR4764919} and
Guruswami, Karthik C.~S., Manurangsi, Ren, and Wu~\cite{guruswami2025inapproximability}
independently gave a proof of RIH,
thereby resolving the open problem of \cite{MR2797748} affirmatively.
Ohsaka~\cite{ohsaka2025yet} also gave an alternative proof of RIH.

Since the resolution of RIH,
the following results have been obtained on the hardness of approximating \mmisr:
Ohsaka~\cite{ohsaka2023gap} showed that
\mmisr on graphs of maximum degree $3$ is $\PSPACE$-hard to approximate within a constant factor.
Hirahara and Ohsaka showed that
\mmisr on $n$-vertex graphs is $\PSPACE$-hard to approximate
within a factor of $n^{\Omega(1)}$ \cite{MR4764919}, and
\mmisr on graphs of maximum degree $\Delta$ is $\PSPACE$-hard to approximate
within a factor of $\Delta^{\Omega(1)}$ \cite[Theorem~3.4]{hirahara2024optimal}.
In this paper, we 
improve the known inapproximability factors for \mmisr quantitatively.

Approximation algorithms and inapproximability results have also been studied for other reconfiguration problems,
including
\textsc{2-CSP Reconfiguration}~\cite{guruswami2025inapproximability,ohsaka2024alphabet,ohsaka2024gap,ohsaka2025approximate},
\textsc{Set Cover Reconfiguration}~\cite{hirahara2024optimal,guruswami2025inapproximability},
\textsc{$k$-Colouring Reconfiguration}~\cite{hirahara2025asymptotically},
\textsc{$k$-SAT Reconfiguration}~\cite{hirahara2025asymptoticallya},
\textsc{Subset Sum Reconfiguration}~\cite{ito2014approximability}, and
\textsc{Submodular Reconfiguration}~\cite{ohsaka2022reconfiguration}.

\subparagraph{Approximability of \textsc{Maximum Independent Set}.}
For the classical \textsc{Maximum Independent Set} (\mis) problem, it is $\NP$-hard to approximate within a factor of $n^{1-\varepsilon}$ for any $\epsilon > 0$ \cite{MR1687331,zuckerman2007linear}, while the best algorithm by Feige~\cite{feige2004approximating} achieves the approximation factor of $\bigO\left(\frac{n(\log \log n)^2}{(\log n)^3}\right)$.
For graphs with maximum degree $\Delta$, it is hard to approximate within a factor of $\bigO\left(\frac{\Delta}{(\log \Delta)^2}\right)$ \cite{bhangale2022ug}, while there exists $\bigO\left(\frac{\Delta \log \log \Delta}{\log \Delta}\right)$-approximation algorithm~\cite{MR1936662}.
On planar graphs, Baker~\cite{Baker94} showed a PTAS, and this has later been generalised into a PTAS for $H$-minor-free graphs~\cite{DemaineHK05bidimension}.

\subparagraph{Parameterised approximation schemes.}
\fpt-approximation schemes are a common approach in parameterised approximation, which combines the techniques from parameterised complexity and approximation in order to overcome the hardness in both paradigms.
See the surveys by Marx~\cite{marx2008parameterised} and by Feldmann, Karthik C.~S., Lee, and Manurangsi~\cite{feldmann2020survey}.

\section{Preliminaries}
\label{sec:prelims}
For a positive integer $n \in \mathbb{N}$, we write $[n] \coloneq \{1,2,\ldots,n\}$.
For a graph $G$, we denote its vertex and edge sets by $V(G)$ and $E(G)$, respectively.
For a subset $X \subseteq V(G)$, we denote by $G[X]$ the induced subgraph of $G$ on $X$.
The \defi{independence number} of $G$, denoted by $\alpha(G)$, is the size of a maximum independent set of $G$.
In this paper, unless otherwise stated, we assume that $\alpha(G) > 1$ (i.e., $G$ is not a complete graph).

For two independent sets $\iini$ and $\itar$ of a graph $G$, a \defi{reconfiguration sequence} from $\iini$ to $\itar$ is a sequence $(\iini = I^{(0)}, I^{(1)}, \dots, I^{(t)} = \itar)$ such that for $i \in [t]$, $I^{(i)}$ is an independent set of $G$, and either $I^{(i-1)} \subseteq I^{(i)}$ or $I^{(i-1)} \supseteq I^{(i)}$.\footnote{The Token Addition/Removal rule in \cref{sec:intro} only allows adding or removing a single vertex at each step. However, as the length of the reconfiguration sequence plays no role in our optimisation problem, we use a slightly relaxed but equivalent formulation of this rule.}
We also call it an \defi{$(\iini,\itar)$-reconfiguration sequence}.
Unless otherwise stated, we assume that $\iini$ and $\itar$ are distinct.
For a reconfiguration sequence
$\mathscr{S}$ of $G$,
we define its \defi{value} as
\begin{align*}
    \val_G(\mathscr{S}) \coloneq \min_{I \in \mathscr{S}} |I|.
\end{align*}

\begin{problem}
      \problemtitle{\textsc{MaxMin Independent Set Reconfiguration} (\mmisr)}
      \probleminput{A graph $G$ and two independent sets $\iini$ and $\itar$}
      \problemtask{The maximum value of $\val_G(\mathscr{S})$ among all $(\iini,\itar)$-reconfiguration sequences}
\end{problem}

We denote by $\opt_G(\iini \reco \itar)$ the optimal value of \mmisr.

\subparagraph{Parameterised complexity.}
In parameterised complexity~\cite{DowneyFellows13,CyganFKLMPPS15book}, the complexity of a problem is studied not only with respect to the input size, but also with respect to some problem parameter(s). 
The core idea behind parameterised complexity is that the combinatorial explosion resulting from the $\NP$-hardness of a problem can sometimes be confined to certain structural parameters that are small in practical settings. 

Formally, a \defi{parameterised problem} $Q$ is a subset of $\Omega^* \times \N$, where $\Omega$ is a fixed alphabet. 
Each instance of $Q$ is a pair $(x, \kappa)$, where $\kappa \in \N$ is called the \defi{parameter}. 
A parameterised problem $Q$ is \defi{fixed-parameter tractable} (\fpt) if there is an algorithm, called a \defi{fixed-parameter algorithm},  that decides whether an input $(x, \kappa)$ is a member of $Q$ in time $f(\kappa) \cdot |x|^{\bigO(1)}$, where $f$ is a computable function and $|x|$ is the input instance size.  
The class \fpt denotes the class of all fixed-parameter tractable parameterised problems.

Extending this notion to optimisation, an \defi{\fpt-approximation scheme} (FPT-AS) for a parameterised problem is an algorithm that, given an instance $(x, k)$ and an error parameter $\varepsilon > 0$, computes a $(1+\varepsilon)$-approximate solution in $f(k, \varepsilon) \cdot |x|^{O(1)}$ time.

\subparagraph{Bandwidth.}
For a graph $G$, its \defi{bandwidth} $\bw(G)$ is defined as $\bw(G) = \min_{\sigma} \max_{uv \in E(G)} |\sigma(u) - \sigma(v)|$, where the minimum is taken over all bijections $\sigma \colon V(G) \to [|V(G)|]$.

\subparagraph{Degeneracy.}
A graph $G$ is \defi{$d$-degenerate} if every subgraph of $G$ contains a vertex of degree at most $d$.
The \defi{degeneracy} of $G$ is the smallest value of $d$ such that $G$ is $d$-degenerate.

\subparagraph{Treewidth.}
A \defi{tree decomposition} of a graph $G = (V, E)$ is a pair $(T, \{X_i\}_{i \in V_T})$, where $T = (V_T, E_T)$ is a tree and each $X_i$ (called a \defi{bag}) is a subset of $V$, satisfying the following three conditions:
\begin{enumerate}
    \item $\bigcup_{i \in V_T} X_i = V$; 
    \item for each edge $uv \in E$, there exists at least one bag $X_i$ that contains both $u$ and $v$; and
    \item for each vertex $v \in V$, the set $\{ i \in V_T \mid v \in X_i \}$ induces a connected subtree of $T$.
\end{enumerate}
The \defi{width} of a tree decomposition $(T, \{X_i\}_{i \in V_T})$ is defined as $\max_{i \in V_T} |X_i| -1$. 
The \defi{treewidth} of a graph $G$, denoted by $\tw(G)$, is the minimum width among all tree decompositions of $G$.

\subparagraph{$H$-minor-free graphs.}
For an edge $e = uv$ of $G$, let $G/e$ denote the graph obtained by contracting $e$, i.e., identifying $u$ and $v$ and subsequently removing all loops and parallel edges.
A graph $H$ is a \defi{minor} of $G$ if $H$ can be obtained from $G$ by a sequence of vertex deletions, edge deletions, and edge contractions.
We say that $G$ is \defi{$H$-minor-free} if it does not contain $H$ as a minor.

\subparagraph{ISR under Token Jumping rule.}
Although this problem is not the focus of the paper, since it is relevant for a result in \cref{sec:algo:tw:fpt}, we also define it here.
For two independent sets $\iini$ and $\itar$ of a graph $G$, a \defi{TJ-reconfiguration sequence} from $\iini$ to $\itar$ is a sequence $(\iini = I^{(0)}, I^{(1)}, \dots, I^{(t)} = \itar)$ such that for $i \in [t]$, $I^{(i)}$ is an independent set of $G$, and $|I^{(i-1)} \setminus I^{(i)}| = |I^{(i)} \setminus I^{(i-1)}| = 1$.

\begin{problem}
      \problemtitle{\textsc{Independent Set Reconfiguration - Token Jumping} (ISR-TJ)}
      \probleminput{A graph $G$ and two independent sets $\iini$ and $\itar$}
      \problemtask{Whether there exists a TJ-reconfiguration sequence from $\iini$ to $\itar$}
\end{problem}

\section{Technical overview}
\label{sec:tech_overview}

We provide an overview of the approximation algorithms as well as the proofs of the inapproximability results.
Throughout this subsection, let $G$ be a graph, and $\iini, \itar$ be two independent sets of $G$.

\subsection{Approximation algorithms}
\subparagraph{General graphs.}

To obtain an $(n / \log n)$-factor approximation algorithm, we introduce the notion of a \emph{$\gamma$-sequence} for a nonnegative integer~$\gamma$.
Given an instance $(G, \iini, \itar)$ of \mmisr, a $\gamma$-sequence $\mathscr{S} = (I^{(1)}, I^{(2)}, \ldots, I^{(t)})$ is a sequence of independent sets of size at least~$\gamma$ such that the following procedure yields a reconfiguration sequence from $\iini$ to $\itar$.
(1) Remove vertices from $\iini$ to obtain $I^{(1)}$.
(2) For each $i = 1, 2, \ldots, t-1$, first add the vertices in $I^{(i+1)} \setminus I^{(i)}$ to obtain $I^{(i)} \cup I^{(i+1)}$, and then remove the vertices in $I^{(i)} \setminus I^{(i+1)}$ to obtain $I^{(i+1)}$.
(3) Add vertices to $I^{(t)}$ to obtain $\itar$.
We therefore require that $I^{(i)} \cup I^{(i+1)}$ is an independent set of $G$ for every $i$, so that all intermediate sets in the above procedure are feasible.
The formal definition is given in \Cref{sec:algo:general}.
By definition, the existence of a $\gamma$-sequence implies a reconfiguration sequence from $\iini$ to $\itar$ of value at least~$\gamma$.

Now consider an arbitrary partition of the vertex set $V(G)$ into disjoint subsets $V_1, V_2, \dots, V_{\ell}$ of equal size for some $\ell$ that we can choose later.
Then we can show the existence of a $\gamma$-sequence $\mathscr{S}$ such that each element of $\mathscr{S}$ is a subset of some (not necessarily the same) $V_j$, and $\mathscr{S}$ yields an $\ell$-factor approximation.
Specifically, the idea of this existence proof is that for each set in an optimal sequence, we select an index $j$ such that the size of its intersection with $V_j$ is maximised.
By concatenating all such intersections, we obtain the desired $\gamma$-sequence.

Next, our approach to find a $\gamma$-sequence as described above is to create an auxiliary graph $H_{\gamma}$ for each value $\gamma \in \{0, \dots, n/\ell\}$.
(Note that $n/\ell$ is the size of every $V_i$.)
The vertices of $H_{\gamma}$ are all the independent sets $I \subseteq V_i$ of $G$ with $|I| \ge \gamma$ for some $i \in \{1,\dots, \ell\}$.
Two such sets are connected by an edge in $H_{\gamma}$ if and only if their union is an independent set of $G$.
A desired $\gamma$-sequence then corresponds to a path in $H_\gamma$ from a subset of $\iini$ to a subset of $\itar$, and we choose the maximum $\gamma \in \{0, \dots, n/\ell\}$ for which such a path exists.

In order to find this path in polynomial time, we want $H_{\gamma}$ to have polynomial size.
Since each $V_i$ has $2^{n/\ell}$ subsets, we choose $\ell = n/\log n$.
We then obtain the approximation factor of $n / \log n$ as discussed above.

\subparagraph{Degenerate graphs.}
Suppose $G$ is $d$-degenerate; that is, every induced subgraph of $G$ contains a vertex of degree at most $d$.
The main idea of our algorithm is as follows.
We maintain two independent sets $A$ and $B$, where initially $A \coloneq \iini$ and $B \coloneq \itar$.
We reconfigure $A$ or $B$ in order to increase the size of $A \cap B$.
In particular, let $X \coloneq A \setminus B$ and  $Y \coloneq B \setminus A$.
Since $G$ is $d$-degenerate, in the subgraph of $G$ induced on $X \cup Y$, there is a vertex $v$ that has degree at most $d$.
If $v \in X$, we remove the neighbours of $v$ in $B$ from $B$ and add $v$ to $B$.
Note that the neighbours of $v$ in $B$ have to be in $Y$, since the vertices in $B \setminus Y$ are also in $A$ and hence are not adjacent to $v$.
In other words, we remove at most $d$ vertices from $B$ and add exactly one vertex to $B$.
If $v \in Y$, we analogously remove at most $d$ neighbours of $v$ in $A$ from $A$ and add $v$ to $A$.
Note that after every iteration, we remove at least one vertex from $X \cup Y$.
Hence, eventually, we have $A = B$.
We can then combine the reconfiguration sequence for $A$ and the reversed reconfiguration sequence for $B$ to obtain a reconfiguration sequence from $\iini$ to $ \itar$.

The algorithm clearly runs in polynomial time.
To analyse the approximation factor, observe that at every step in the reconfiguration sequence for $A$, we remove at most $d$ vertices from $A$ and add exactly one vertex to $A$.
Consequently, after $i$ steps, the size of $A$ decreases by at most $(d-1)i$.
On the other hand, exactly $i$ vertices have been added to $A$ during these $i$ steps. 
Hence, after $i$ steps, the size of the independent set $A$ is at least $\max \left\{ |\iini| - (d-1)i, i \right\} \geq \frac{1}{d}|\iini|$.
This implies all independent sets in the reconfiguration sequence for $A$ have size at least $\frac{1}{d}|\iini|$.
Similarly, all independent sets in the reconfiguration sequence for $B$ have size at least $\frac{1}{d}|\itar|$.
Since the optimal value is at most $\min\{|\iini|,|\itar|\}$, the algorithm achieves an approximation factor of $d$.

\subparagraph{Bounded-treewidth graphs.}
Suppose $G$ has treewidth $k$.
We assume that a tree decomposition of width $k$ is given; otherwise, this can be computed in \fpt time~\cite{Bodlaender96}.
Further, let $\imin = \min\{|\iini|,|\itar| \}$.
We now discuss the main idea of the $\frac{\imin}{\imin - \bigO(k \log \imin)}$-approximation algorithm.
This approach relies on the fact that there exists a \emph{$\frac{2}{3}$-balanced $\iini$-separator} $S$ of size at most $k+1$~\cite{CyganFKLMPPS15book}, i.e., we can partition $V(G) \setminus S$ into two sets $X$ and $Y$ such that there are no edges between $X$ and $Y$, $|X \cap \iini| \leq \frac{2}{3} |\iini|$, and $|Y \cap \iini| \leq \frac{2}{3} |\iini|$.
We then remove the vertices in $\iini \cap S$, recurse on $X$ and $Y$ in some suitable order, and finally add $\itar \cap S$.
Since $|X \cap \iini|$ and $|Y \cap \iini|$ are at most $\frac{2}{3}|\iini|$, the recursion depth is $\bigO(\log \imin)$, and hence the overall runtime is polynomial.
We then carefully analyse the size of the smallest independent set obtained from the recursion, using the order of processing $X$ and $Y$ as well as the bounds on $|S|$, $|X \cap \iini|$, and $|Y \cap \iini|$.

For the \fpt-approximation scheme, we aim to obtain a $(1+\epsilon)$-factor approximation in time $f(k, \epsilon)\cdot n^{\bigO(1)}$ for any $\epsilon > 0$ and some computable function $f$.
If the $\frac{\imin}{\imin - \bigO(k \log \imin)}$-approximation algorithm above already achieves this guarantee, we use it.
Otherwise, $\imin$ must be bounded from above by a function of $k$ and $\epsilon$.
We now apply an \fpt algorithm to solve \mmisr exactly parameterised by $\imin$ and $k$ (\cref{thm:FPT_degeneracy}), which is also an \fpt algorithm parameterised by $k$ and $\epsilon$ in this case.
In order to obtain this \fpt algorithm, we first extend the equivalence between the Token Jumping rule and Token Addition/Removal rule of ISR~\cite{kaminski2012complexity} to an equivalence between ISR under the Token Jumping rule (ISR-TJ) and \mmisr.
We then derive the \fpt algorithm above from a recent \fpt algorithm for ISR-TJ parameterised by degeneracy and the size of the input independent sets~\cite{AgrawalHM25}. 

\subparagraph{$H$-minor-free graphs.}
Let $\eta \coloneq \max\{|\iini|,|\itar|\}$.
We employ a generalisation of Baker's technique, which is a framework for developing polynomial-time approximation schemes (PTASs) for various problems on planar graphs~\cite{Baker94}.
It is known that the vertex set of any planar graph $G$ can be partitioned into $V_1, \dots, V_{k+1}$ such that, for each $i \in [k+1]$, the subgraph $G_i$ induced by $V(G) \setminus V_i$ has bounded treewidth.
Moreover, by the pigeonhole principle, there is an integer $j \in [k+1]$ such that $\max\{|\iini \cap V_j|, |\itar \cap V_j| \} \le \frac{2\eta}{k+1}$, which implies that $\imin \coloneq \min\{|\iini\cap V(G_j)|, |\itar\cap V(G_j)| \} \geq \left(1-\frac{2}{k+1}\right)\eta$.

The high-level idea of our algorithm is as follows:
first remove the vertices in $V_j \cap \iini$ from $\iini$, then apply the \fpt-approximation scheme for bounded-treewidth graphs to $G_j$, and finally add the vertices in $V_j \cap \itar$.
By appropriately setting $k$ and using the bound $\imin \ge \left(1-\frac{2}{k+1}\right)\eta$, this yields an \fpt-approximation scheme for planar graphs.
A similar argument can be established for $H$-minor-free graphs by using the decomposition theorem of Demaine, Hajiaghayi, and Kawarabayashi~\cite{DemaineHK05bidimension}.

\subsection{Inapproximability results}
We first review the $\NP$-hardness of approximating \mmisr on general graphs \cite{MR2797748}.
The proof is based on a gap-preserving reduction from \textsc{Maximum Independent Set} (\mis) to \mmisr.
Let $G$ be an $n$-vertex graph as an instance of \mis.
For a positive integer $k$,
we create a complete balanced bipartite graph $K \defeq K_{k,k}$ with bipartition $(L,R)$, where $|L|=|R|=k$.
Define
$H$ as the disjoint union of $G$ and $K$, $\iini \defeq L$, and $\itar \defeq R$,
which yields an instance $(H,\iini,\itar)$ of \mmisr.
Let $\alpha(G)$ denote the size of maximum independent sets of $G$ and
$\opt_H(\iini \reco \itar)$ denote the optimal value of \mmisr.
We obtain the following relation between $\alpha(G)$ and $\opt_H(\iini \reco \itar)$.

\begin{observation}
\label{obs:intro:hard}
$
    \opt_H(\iini \reco \itar)
    = \min\bigl\{ |L|, |R|, \alpha(G) \bigr\}
    = \min\bigl\{ k, \alpha(G) \bigr\}.
$
\end{observation}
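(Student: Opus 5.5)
We prove the two inequalities $\opt_H(\iini \reco \itar) \ge \min\{k,\alpha(G)\}$ and $\opt_H(\iini \reco \itar) \le \min\{k,\alpha(G)\}$ separately. The second equality in the statement is immediate from $|L|=|R|=k$.

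\emph{Lower bound (construction).} Fix a maximum independent set $M$ of $G$, so $|M|=\alpha(G)$. Since $H$ is the disjoint union of $G$ and $K$, there are no edges between $M$ and $L \cup R$. Hence $L \cup M$ and $R \cup M$ are independent sets of $H$, and so is $M$. Consider the reconfiguration sequence
\[
(L,\; L\cup M,\; M,\; M\cup R,\; R).
\]
Consecutive sets are nested, so this is a valid $(\iini,\itar)$-reconfiguration sequence. Its value is $\min\{|L|,|M|,|R|\}=\min\{k,\alpha(G)\}$.

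\emph{Upper bound.} Let $\mathscr{S}=(I^{(0)},\dots,I^{(t)})$ be any $(\iini,\itar)$-reconfiguration sequence. Its value is at most $\min\{|I^{(0)}|,|I^{(t)}|\}=k$, so it remains to find a set in $\mathscr{S}$ of size at most $\alpha(G)$. The key fact is that every independent set of $K_{k,k}$ lies entirely within $L$ or entirely within $R$, because every vertex of $L$ is adjacent to every vertex of $R$. Therefore every $I^{(i)}\cap V(K)$ is a subset of $L$ or a subset of $R$.

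Now let $i$ be the smallest index such that $I^{(i)}$ contains a vertex of $R$; such an $i$ exists because $I^{(t)}=R$, and $i\ge 1$ because $I^{(0)}=L$ contains no vertex of $R$. Since $I^{(i)}$ meets $R$, it contains no vertex of $L$. By the choice of $i$, $I^{(i-1)}$ contains no vertex of $R$. Consecutive sets are comparable, so one of the following holds.
\begin{itemize}
\item If $I^{(i-1)}\subseteq I^{(i)}$, then $I^{(i-1)}$ contains no vertex of $L$ either. So $I^{(i-1)}\subseteq V(G)$ and $|I^{(i-1)}|\le\alpha(G)$.
\item If $I^{(i)}\subseteq I^{(i-1)}$, then $I^{(i)}$ would be a subset of a set avoiding $R$, contradicting that $I^{(i)}$ meets $R$. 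So this case cannot occur.
\end{itemize}
Thus some set in $\mathscr{S}$ has size at most $\alpha(G)$, and the value of $\mathscr{S}$ is at most $\min\{k,\alpha(G)\}$.

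The only delicate step is this crossing argument: making precise that the sequence must pass through a set avoiding $V(K)$ entirely. The choice of the first index meeting $R$, together with nestedness of consecutive sets, handles it.
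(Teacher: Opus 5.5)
Your proof is correct and follows the same route as the paper: the lower bound uses the identical sequence $(L,\, L\cup M,\, M,\, M\cup R,\, R)$ through a maximum independent set of $G$, and the upper bound rests on the same fact that complete bipartiteness of $K$ forces some set in the sequence to avoid $L \cup R$ entirely. Your first-index-meeting-$R$ argument spells out rigorously the crossing step that the paper states in a single line.
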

\begin{proof}
Let $I_G$ be any maximum independent set of $G$, where $|I_G| = \alpha(G)$.
Consider a reconfiguration sequence $\sqI$ from $\iini$ to $\itar$, passing through $I_G$,
obtained by the following procedure:
\begin{description}
    \item[Step 1.] add all $\alpha(G)$ vertices of $I_G$ to obtain $L \cup I_G$.
    \item[Step 2.] remove all $k$ vertices of $L$ to obtain $I_G$.
    \item[Step 3.] add all $k$ vertices of $R$ to obtain $R \cup I_G$.
    \item[Step 4.] remove all $\alpha(G)$ vertices of $I_G$ to obtain $R$.
\end{description}
Observe that the objective value of $\sqI$ is at least $\min\bigl\{ |L|, |R|, \alpha(G) \bigr\}$, and
thus so is $\opt_H(\iini \reco \itar)$.

On the other hand,
let $\sqI^*$ be an optimal reconfiguration sequence from $\iini$ to $\itar$.
Since $K$ is a complete bipartite graph with bipartition $(\iini,\itar)=(L,R)$,
there must exist an independent set $I$ in $\sqI^*$ such that
$I \cap (L \cup R) = \emptyset$,
implying that $|I| = |I \cap V(G)| \leq \alpha(G)$.
Therefore, the objective value of $\sqI^*$ is at most $\min\bigl\{ |L|, |R|, \alpha(G) \bigr\}$, and
thus so is $\opt_H(\iini \reco \itar)$.
\end{proof}

Recall that it is $\NP$-hard to distinguish whether
$\alpha(G) \geq n^{1-\epsilon}$ or $\alpha(G) \leq n^{\epsilon}$
for any small $\epsilon > 0$ \cite{MR1687331,zuckerman2007linear}.
By applying \cref{obs:intro:hard}
with $k \defeq n$,
it is $\NP$-hard to distinguish whether
$\opt_H(\iini \reco \itar) \geq n^{1-\epsilon} = \Theta(|V(G)|^{1-\epsilon})$ or
$\opt_H(\iini \reco \itar) \leq n^{\epsilon} = \Theta(|V(G)|^{\epsilon})$
for any small $\epsilon > 0$.
Therefore,
\mmisr on $n$-vertex graphs is $\NP$-hard to approximate within a factor of $n^{1-\epsilon}$,
as desired.

Consider now proving the $\NP$-hardness of approximating \mmisr on restricted graph classes.
To this end, we modify the above gap-preserving reduction from \mis to \mmisr as follows
(see \cref{sec:hardness} for details):
\begin{description}
    \item[(Maximum degree $\Delta$)]
        By \cite{bhangale2022ug},
        it is $\NP$-hard to approximate \mis on graphs $G$ of maximum degree $\Delta$
        within a factor of $\frac{\Delta}{\polylog(\Delta)}$.
        We shall apply \cref{obs:intro:hard} with $k = \Theta_\Delta(n)$;
        however, this makes the degree of the bipartite graph $K$
        (constructed in the above reduction) too large.
        To avoid this issue,
        we replace $K$ by a $\Delta$-regular \emph{bipartite Ramanujan graph} $X$.
        Informally speaking, bipartite Ramanujan graphs behave like random bipartite graphs
        (see \cref{sec:hardness:degree} for the definition).
        By the bipartite expander mixing lemma \cite{haemers1979eigenvalue,haemers1995interlacing},
        we show that $X$ does not contain independent sets of size $\Omega\left(\frac{k}{\sqrt{\Delta}}\right)$,
        which allows us to create a $\sqrt{\Delta}$-factor gap.

    \item[(Bandwidth $n^{\frac{1}{2}+\delta}$)]
        By \cite{MR1687331}, we first show that
        for an $n$-vertex graph $G$ of bandwidth $n^{\frac{1}{2}+\delta}$,
        it is $\NP$-hard to approximate $\alpha(G)$ within a factor of $\approx n^{4\delta^2}$.
        By applying \cref{obs:intro:hard}
        with $k \approx n^{\frac{1}{2}+\delta}$,
        we obtain the $\NP$-hardness of approximating \mmisr on graphs of bandwidth $n^{\frac{1}{2}+\delta}$
        within the same factor.

    \item[(Bipartite graphs)]
        Instead of using the above reduction, we show the ``equivalence'' (up to a constant factor) between
        \textsc{Maximum Balanced Biclique} and
        \mmisr on bipartite graphs.
        See \cref{sec:hardness:bipartite} for the formal definition of \textsc{Maximum Balanced Biclique}.
        Since this problem cannot be approximated within a factor of $n^{1-\epsilon}$
        under SSEH and $\NP \nsubseteq \BPP$ \cite{Manurangsi18biclique},
        where $n$ is the number of vertices in an input graph,
        \mmisr cannot be approximated within the same factor.
\end{description}

\section{Approximation algorithms for general graphs}
\label{sec:algo:general}
In this section, we present a polynomial-time $\frac{n}{\log n}$-factor approximation algorithm for \mmisr.

\begin{theorem}
    \label{thm:general_aprox}
    There exists a polynomial-time algorithm that, given an $n$-vertex graph $G$
    and two independent sets $\iini$ and $\itar$ of $G$,
    outputs an $(\iini,\itar)$-reconfiguration sequence $\mathscr{I}$ satisfying
    \begin{align*}
        \val_G(\mathscr{I}) \geq \frac{\log n}{n} \cdot \opt_G(\iini \reco \itar).
    \end{align*}
\end{theorem}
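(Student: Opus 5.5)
We follow the $\gamma$-sequence approach from the technical overview. Fix $\ell \coloneq \lceil n/\lfloor\log n\rfloor\rceil$ and partition $V(G)$ arbitrarily into blocks $V_1,\dots,V_\ell$, each of size at most $\lfloor \log n\rfloor$. For an integer $\gamma\ge 0$, a \emph{$\gamma$-sequence} is a sequence $(J^{(1)},\dots,J^{(t)})$ of independent sets with the following properties: each $J^{(i)}$ lies inside a single block and has size at least $\gamma$; $J^{(1)}\subseteq \iini$; $J^{(t)}\subseteq\itar$; and $J^{(i)}\cup J^{(i+1)}$ is independent for every $i$. From such a sequence we build an $(\iini,\itar)$-reconfiguration sequence as follows: shrink $\iini$ to $J^{(1)}$, pass from each $J^{(i)}$ to $J^{(i)}\cup J^{(i+1)}$ and then to $J^{(i+1)}$, and finally grow $J^{(t)}$ to $\itar$. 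Every set visited contains some $J^{(i)}$, so the value of this sequence is at least $\gamma$.

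\textbf{Existence.} Let $\sqI^*=(I^{(0)},\dots,I^{(s)})$ be an optimal reconfiguration sequence with value $\mathrm{OPT}$. For each $m$, choose a block $V_{j_m}$ maximising $|I^{(m)}\cap V_{j_m}|$ and set $J^{(m)}\coloneq I^{(m)}\cap V_{j_m}$. By pigeonhole, $|J^{(m)}|\ge \mathrm{OPT}/\ell$. Consecutive sets in $\sqI^*$ are nested, so $I^{(m)}\cup I^{(m+1)}$ equals the larger of the two and is independent; hence $J^{(m)}\cup J^{(m+1)}$ is independent as well. We also have $J^{(0)}\subseteq\iini$ and $J^{(s)}\subseteq \itar$. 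Therefore a $\lceil \mathrm{OPT}/\ell\rceil$-sequence exists.

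\textbf{Algorithm.} For each $\gamma\in\{0,\dots,\lfloor\log n\rfloor\}$, build the auxiliary graph $H_\gamma$. Its vertices are the independent sets $J$ of $G$ with $|J|\ge\gamma$ that are contained in some block. Since each block has at most $2^{\log n}=n$ subsets, $H_\gamma$ has at most $\ell n\le n^2$ vertices. Two vertices of $H_\gamma$ are adjacent when their union is independent in $G$, which can be checked in polynomial time. Mark the vertices contained in $\iini$ as sources and those contained in $\itar$ as sinks, and use BFS to test whether some source reaches some sink. A path found this way is exactly a $\gamma$-sequence. Take the largest $\gamma$ for which a path exists (the case $\gamma=0$ always works via the empty set) and output the reconfiguration sequence it induces. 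The number of steps is polynomial because the path has at most $n^2$ vertices and each transition takes at most $O(n)$ single-vertex moves.

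\textbf{Approximation ratio.} The output has value at least $\lceil\mathrm{OPT}/\ell\rceil \ge \mathrm{OPT}\cdot\lfloor\log n\rfloor/(n+\log n)$. This gives $\Omega(\log n/n)\cdot\mathrm{OPT}$, not yet the exact constant. The main obstacle is making the bound exactly $\frac{\log n}{n}$, which is a matter of rounding. Two fixes are available. One is to allow blocks of size up to $c\log n$ for a constant $c>1$; the graph $H_\gamma$ then has about $n^{c+1}$ vertices, still polynomial, and the factor $\frac{n}{c\log n}$ dominates the rounding losses. The other handles small $\mathrm{OPT}$ separately. If $\mathrm{OPT}\le n/\log n$, the target bound $\frac{\log n}{n}\mathrm{OPT}$ is at most $1$. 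Any reconfiguration sequence of value at least $1$ then suffices, and since $\mathrm{OPT}\ge1$ implies $\lceil\mathrm{OPT}/\ell\rceil\ge1$, the algorithm's output already meets it. If $\mathrm{OPT}$ is larger, the ceiling absorbs the rounding, with integrality checked carefully. We would adopt the block size $c\log n$ with a suitable constant $c$ to keep the argument clean.
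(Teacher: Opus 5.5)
This is essentially the paper's proof: $\gamma$-sequences, projecting each set of an optimal sequence onto its heaviest block, and source-to-sink reachability in $H_\gamma$. Your first rounding fix is also what the paper does: it fixes $\ell=\lfloor n/\log n\rfloor\le n/\log n$ and lets the blocks have sizes between $\lfloor\log n\rfloor$ and $2\lceil\log n\rceil$. The guarantee $\lceil\mathrm{OPT}/\ell\rceil\ge\frac{\log n}{n}\,\mathrm{OPT}$ then holds for every $n\ge 2$, while each block still has at most $2^{2\lceil\log n\rceil}\le 4n^2$ subsets. The point is that the number of blocks sets the ratio, while block size only affects running time.

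Drop your second fix, though. If $\ell$ divides $\mathrm{OPT}$, the ceiling gains nothing. Your $\ell=\lceil n/\lfloor\log n\rfloor\rceil$ exceeds $n/\log n$ whenever $\log n$ is not an integer, and in that case $\lceil\mathrm{OPT}/\ell\rceil=\mathrm{OPT}/\ell$ falls short of $\frac{\log n}{n}\,\mathrm{OPT}$ no matter how large $\mathrm{OPT}$ is.
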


We begin with two simple but useful observations.

\begin{observation}
    \label{lem:pigeonhole}
    Let $V$ be a finite set partitioned into $V_1, V_2, \ldots, V_{\ell}$.
    For any subset $S \subseteq V$, there exists an index $i \in [\ell]$ such that $|S \cap V_i| \geq \frac{|S|}{\ell}$.
\end{observation}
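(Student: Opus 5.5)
This is the pigeonhole principle applied to the partition, argued by averaging. Since $V_1,\dots,V_\ell$ partition $V$, the sets $S\cap V_1,\dots,S\cap V_\ell$ partition $S$. The first step is to write
\begin{align*}
    |S| = \sum_{i=1}^{\ell} |S \cap V_i|.
\end{align*}

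Next, suppose for contradiction that every index $i\in[\ell]$ satisfies $|S\cap V_i| < \frac{|S|}{\ell}$. Summing these $\ell$ strict inequalities gives
\begin{align*}
    |S| < \ell\cdot\frac{|S|}{\ell} = |S|,
\end{align*}
which is absurd. Hence some index $i$ satisfies $|S\cap V_i|\ge |S|/\ell$.

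Equivalently, one can take $i$ to maximise $|S\cap V_i|$. The maximum of $\ell$ nonnegative numbers is at least their average, which here is $|S|/\ell$. This form is the one used later, when choosing for each set of an optimal sequence the part $V_j$ that captures the largest share of it.

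No step presents a real obstacle. The only point needing care is that the $V_i$ are pairwise disjoint and cover $V$; this is exactly what makes the intersection sizes sum to $|S|$. The degenerate case $S=\emptyset$ is trivial, since then any index works because $0\ge 0$.
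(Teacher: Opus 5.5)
Your proof is correct and matches the paper's argument: assume $|S \cap V_i| < |S|/\ell$ for every $i$, sum over the partition, and reach the contradiction $|S| < |S|$. The remark that the maximum is at least the average is an equivalent restatement of the same idea.
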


\begin{proof}
    Suppose, for a contradiction, that $|S \cap V_i| < \frac{|S|}{\ell}$ holds for all $i \in [\ell]$.
    Then, $|S| = \sum_{i = 1}^{\ell} |S \cap V_i| < \ell \cdot \frac{|S|}{ \ell} = |S|$,
    which is a contradiction.
\end{proof}

\begin{observation}
    \label{lem:subset_incl}
    Let $I_1$ and $I_2$ be finite sets such that either $I_1 \subseteq I_2$ or $I_2 \subseteq I_1$.
    For any subsets $J_1 \subseteq I_1$ and $J_2 \subseteq I_2$,
    either $J_1 \subseteq I_2$ or $J_2 \subseteq I_1$ holds.
\end{observation}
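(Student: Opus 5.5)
The statement to prove is \cref{lem:subset_incl}: if $I_1 \subseteq I_2$ or $I_2 \subseteq I_1$, then for any $J_1 \subseteq I_1$ and $J_2 \subseteq I_2$ we have $J_1 \subseteq I_2$ or $J_2 \subseteq I_1$. The plan is a direct case split on which of the two inclusions between $I_1$ and $I_2$ holds, using only transitivity of set inclusion.

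In the first case, $I_1 \subseteq I_2$. Then $J_1 \subseteq I_1 \subseteq I_2$, so transitivity gives $J_1 \subseteq I_2$, which is the first alternative of the conclusion. In the second case, $I_2 \subseteq I_1$. Symmetrically, $J_2 \subseteq I_2 \subseteq I_1$ gives $J_2 \subseteq I_1$, which is the second alternative. The two cases exhaust the hypothesis, so the statement follows.

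There is no genuine obstacle. The only point to get right is matching each case to the correct alternative: in the first case $J_1$ goes into the larger set $I_2$, and in the second case $J_2$ goes into the larger set $I_1$. Finiteness of the sets plays no role.

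For context, this observation is what will later let one take, for consecutive sets $I^{(i-1)}, I^{(i)}$ of an optimal sequence, their intersections with chosen parts $V_j$ and $V_{j'}$ (given by \cref{lem:pigeonhole}). One of these intersections is then contained in the other original set, so the union of the two intersections lies inside the larger of $I^{(i-1)}$ and $I^{(i)}$ and is therefore independent. That is exactly the adjacency condition needed for the $\gamma$-sequence in the proof of \cref{thm:general_aprox}.
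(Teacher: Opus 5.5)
Your proof is correct and is the same as the paper's: a case split on which inclusion between $I_1$ and $I_2$ holds, then transitivity ($J_1 \subseteq I_1 \subseteq I_2$ in the first case, $J_2 \subseteq I_2 \subseteq I_1$ in the second). Your remark on how this gives condition (S3) for the $\gamma$-sequence also matches how the paper uses it.
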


\begin{proof}
    If $I_1 \subseteq I_2$,
    then $J_1 \subseteq I_1 \subseteq I_2$.
    Otherwise,
    $J_2 \subseteq I_2 \subseteq I_1$.
\end{proof}

Let $I = (G, \iini, \itar)$ be an instance of \mmisr.
For a nonnegative integer $\gamma$, a sequence $(J^{(1)}, J^{(2)}, \ldots, J^{(t)})$ of independent sets of $G$ is called a \defi{$\gamma$-sequence} with respect to $\iini$ and $\itar$ if it satisfies the following conditions:
\begin{description}
    \item[S1] $J^{(1)} \subseteq \iini$ and $J^{(t)} \subseteq \itar$,
    \item[S2] $|J^{(i)}| \geq \gamma$ for all $i \in [t]$, and
    \item[S3] $J^{(i)} \cup J^{(i+1)}$ is an independent set of $G$
    for all $i \in [t-1]$.
\end{description}

We show that the existence of a $\gamma$-sequence guarantees the existence of an $(\iini,\itar)$-reconfiguration sequence whose value is at least~$\gamma$.

\begin{lemma}
    \label{lem:ganma_set_to_reconf_seq}
    Let $I = (G, \iini, \itar)$ be an instance of \mmisr,
    and let $\gamma$ be a nonnegative integer.
    Given a $\gamma$-sequence $\mathscr{J} = (J^{(1)}, J^{(2)}, \ldots, J^{(t)})$ of length $t$,
    one can compute an $(\iini,\itar)$-reconfiguration sequence $\mathscr{S}$ such that $\val_G(\mathscr{S}) \geq \gamma$ in time polynomial in $|V(G)|$ and~$t$.
\end{lemma}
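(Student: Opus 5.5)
The plan is to write down the explicit sequence described in the technical overview and check feasibility and value step by step. Define $\mathscr{S}$ as the concatenation
\[
\iini,\; J^{(1)},\; J^{(1)}\cup J^{(2)},\; J^{(2)},\; J^{(2)}\cup J^{(3)},\; \dots,\; J^{(t-1)}\cup J^{(t)},\; J^{(t)},\; \itar .
\]
Each set can be written down in time polynomial in $|V(G)|$, and there are $2t+1$ sets. The total running time is therefore polynomial in $|V(G)|$ and $t$. If one insists on single-vertex moves, each transition can be refined into at most $|V(G)|$ one-vertex additions or removals; every intermediate set is then sandwiched between the two endpoints of its transition, which preserves the bounds below.

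\textbf{Feasibility.} Every set in $\mathscr{S}$ must be independent, and consecutive sets must be comparable under inclusion. The sets $\iini$ and $\itar$ are independent by assumption, each $J^{(i)}$ is independent by definition, and each union $J^{(i)}\cup J^{(i+1)}$ is independent by condition S3. For comparability, $J^{(1)}\subseteq \iini$ and $J^{(t)}\subseteq\itar$ by S1. For the middle transitions we use $J^{(i)}\subseteq J^{(i)}\cup J^{(i+1)}\supseteq J^{(i+1)}$. Hence $\mathscr{S}$ is an $(\iini,\itar)$-reconfiguration sequence. One degenerate case needs care: if some consecutive sets coincide, for instance when $J^{(1)}=\iini$, repeated entries are harmless or can simply be deleted.

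\textbf{Value.} By S2 every $J^{(i)}$ has size at least $\gamma$. Each union $J^{(i)}\cup J^{(i+1)}$ contains $J^{(i)}$, so its size is also at least $\gamma$. Finally, $|\iini|\ge|J^{(1)}|\ge\gamma$ and $|\itar|\ge|J^{(t)}|\ge\gamma$. Therefore $\val_G(\mathscr{S})\ge\gamma$.

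There is no real obstacle here. The only point requiring attention is the convention that consecutive sets must be comparable and independent, which is exactly what S1 and S3 supply. The lemma is essentially bookkeeping that turns the abstract $\gamma$-sequence into a concrete reconfiguration sequence.
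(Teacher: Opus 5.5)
Your proposal is correct and follows essentially the same route as the paper. You build the chain $\iini \supseteq J^{(1)} \subseteq J^{(1)}\cup J^{(2)} \supseteq J^{(2)} \subseteq \cdots \supseteq J^{(t)} \subseteq \itar$, verify it using S1--S3, and note that it can be refined into single-vertex moves; the paper does the same thing directly with single-vertex additions and removals.
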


\begin{proof}
    From $\mathscr{J}$, we construct a reconfiguration sequence by the following procedure:
    \begin{description}
        \item[Step 1.]
        Remove all vertices in $\iini \setminus J^{(1)}$
        to obtain $J^{(1)}$.
        \item[Step 2.]
        For each $i = 1, 2, \ldots, t-1$,
        add all vertices in $J^{(i+1)} \setminus J^{(i)}$ to $J^{(i)}$
        to obtain $J^{(i)} \cup J^{(i+1)}$,
        and then remove all vertices in $J^{(i)} \setminus J^{(i+1)}$ from $J^{(i)} \cup J^{(i+1)}$ 
        to obtain $J^{(i+1)}$.
        \item[Step 3.] Add all vertices in $\itar \setminus J^{(t)}$ to $J^{(t)}$ 
        to obtain $\itar$.
    \end{description}
    Note that each operation adds or removes exactly one vertex. 
    Each step can be done in time polynomial in $|V(G)|$,
    and hence the total running time is polynomial in $|V(G)|$ and $t$.

    Since $\iini$ and $\itar$ are independent sets of $G$,
    all intermediate sets appearing in Steps~1 and~3 are independent sets.
    Moreover, since $J^{(i)} \cup J^{(i+1)}$ is an independent set of $G$ for every $i \in [t-1]$,
    all sets appearing in Step~2 are also independent sets.
    Therefore, the obtained sequence $\mathscr{S}$ is an $(\iini,\itar)$-reconfiguration sequence.
    
    Since $|J^{(i)}| \geq \gamma$ for every $i \in [t]$, every intermediate set in $\mathscr{S}$ has size at least $\gamma$.
    Thus, $\val_G(\mathscr{S}) \geq \gamma$, as claimed.
\end{proof}

We now show how to efficiently construct a $\gamma$-sequence with sufficiently large $\gamma$.

\begin{lemma}
    \label{lem:computing_ganma_seq}
    Let $I = (G, \iini, \itar)$ be an instance of \mmisr,
    where $G$ has $n$ vertices.
    There exists a polynomial-time algorithm that finds a $\gamma$-sequence of length polynomial in $n$ satisfying
    \begin{align*}
        \gamma \geq \frac{\log n}{n} \cdot \opt_G(\iini \reco \itar).
    \end{align*}
\end{lemma}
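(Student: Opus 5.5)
We follow the outline in the technical overview. Partition $V(G)$ arbitrarily into $\ell \coloneq \lceil n/\log n\rceil$ parts $V_1,\dots,V_\ell$, each of size at most $\lceil\log n\rceil$. For each candidate $\gamma \in \{0,1,\dots,\lceil \log n\rceil\}$, build an auxiliary graph $H_\gamma$. Its vertices are all independent sets $J$ of $G$ with $J\subseteq V_i$ for some $i$ and $|J|\ge\gamma$. Two vertices $J,J'$ are adjacent iff $J\cup J'$ is independent in $G$. Each $V_i$ has at most $2^{\lceil\log n\rceil}\le 2n$ subsets, so $H_\gamma$ has $O(n^2)$ vertices and can be built in polynomial time. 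Call $J$ a \emph{source} if $J\subseteq\iini$ and a \emph{sink} if $J\subseteq\itar$. By conditions S1--S3, a walk in $H_\gamma$ from a source to a sink is exactly a $\gamma$-sequence. We therefore use BFS to find a shortest such path, which has length at most $|V(H_\gamma)|=O(n^2)$. We output the path for the largest $\gamma$ that admits one. Note that $\gamma=0$ always works, via the path $(\emptyset)$.

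It remains to show that a path exists for some $\gamma \ge \frac{\log n}{n}\opt_G(\iini\reco\itar)$. Let $\mathscr{S}^*=(I^{(0)},\dots,I^{(t)})$ be an optimal reconfiguration sequence with value $\opt$. For each $s$, \cref{lem:pigeonhole} gives an index $i_s$ with $|I^{(s)}\cap V_{i_s}|\ge \opt/\ell$. Set $J^{(s)}\coloneq I^{(s)}\cap V_{i_s}$ and $\gamma\coloneq\lceil\opt/\ell\rceil$. Each $J^{(s)}$ is independent, lies in one part, and has size at least $\gamma$. Moreover $J^{(0)}\subseteq\iini$ and $J^{(t)}\subseteq\itar$. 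For consecutive indices, $I^{(s)}$ and $I^{(s+1)}$ are nested, so by \cref{lem:subset_incl} either $J^{(s)}\subseteq I^{(s+1)}$ or $J^{(s+1)}\subseteq I^{(s)}$. In either case $J^{(s)}\cup J^{(s+1)}$ is contained in one of $I^{(s)}$ or $I^{(s+1)}$, hence is independent. This verifies S3, so the sequence is a walk in $H_\gamma$ from a source to a sink. The walk may be exponentially long, since $\mathscr{S}^*$ may be, but only existence matters: BFS then returns a polynomial-length path.

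Finally, $\gamma \ge \opt/\ell$, and $\ell=\lceil n/\log n\rceil$ differs from $n/\log n$ only by rounding. To get exactly $\frac{\log n}{n}$, choose part sizes with a little care: take parts of size $\lfloor \log n\rfloor$ or slightly more, and absorb constants via $2^{O(\log n)}=\mathrm{poly}(n)$. Concretely, use part size $b=\lceil\log n\rceil$, so $\ell=\lceil n/b\rceil\le n/\log n$ up to the degenerate small-$n$ cases, which are handled by brute force. The main obstacle is only this bookkeeping of rounding; the structural argument is the pigeonhole-plus-nestedness step above.
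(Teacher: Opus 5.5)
Your structural argument matches the paper's: an arbitrary partition into about $n/\log n$ parts, pigeonhole on every set of an optimal sequence, the nestedness observation for (S3), and reachability in $H_\gamma$. Your BFS shortest path replaces the paper's deletion of redundant pairs, and either works.

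The gap is in your last paragraph, which is exactly where the constant $\frac{\log n}{n}$ is decided. Parts of size at most $\lceil\log n\rceil$ force $\ell\ge\lceil n/\lceil\log n\rceil\rceil$. Your claim that this is at most $n/\log n$ outside small cases is false. It fails whenever the interval $[n/\lceil\log n\rceil,\,n/\log n]$ contains no integer, e.g.\ for every $n=2^k$ with $k$ not a power of two, so brute force on small $n$ does not rescue it.

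The loss is real, not just an artefact of the analysis. For $n=32$ your partition has parts of sizes $5,5,5,5,5,5,2$. Let $G$ be edgeless, let $\iini$ consist of $4$ vertices from each $5$-part plus both vertices of the $2$-part (so $|\iini|=26$), and let $\itar=\iini\cup\{x\}$ for a remaining vertex $x$. Then $\opt_G(\iini\reco\itar)=26$, so the lemma requires an integer $\gamma\ge\frac{5}{32}\cdot 26>4$, i.e.\ $\gamma\ge 5$. But no part meets $\iini$ in $5$ vertices, so $H_5$ has no source and your algorithm returns $\gamma=4$.

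The repair is the paper's choice. Use $\ell=\lfloor n/\log n\rfloor\le n/\log n$ parts, as balanced as possible. Then each part has size at most $\lceil n/\ell\rceil\le 2\lceil\log n\rceil$, because $\lfloor y\rfloor\ge y/2$ for $y\ge 1$ and $n/\log n\ge 1$ for $n\ge 2$. This gives $\lceil\opt_G(\iini\reco\itar)/\ell\rceil\ge\frac{\log n}{n}\opt_G(\iini\reco\itar)$. Each part still has at most $2^{2\lceil\log n\rceil}\le 4n^2$ subsets, so $H_\gamma$ stays polynomial. You may either let $\gamma$ range up to the largest part size or keep your cap $\lceil\log n\rceil$. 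The cap is harmless because $\frac{\log n}{n}\opt_G(\iini\reco\itar)\le\log n$ and a $\gamma'$-sequence is also a $\gamma$-sequence for every $\gamma\le\gamma'$.

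Your first, vaguer suggestion (parts of size $\lfloor\log n\rfloor$ or slightly more) pointed in this direction. The slack from $2^{O(\log n)}=\mathrm{poly}(n)$ has to be spent on larger parts, not on more parts.
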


\begin{proof}
    Without loss of generality, assume that $n \geq 2$.
    We first prove the existence of such a $\gamma$-sequence.

    Let $\ell = \floor{n / \log n}$.
    We fix an arbitrary partition of the vertex set $V(G)$ into $V_1 \cup V_2 \cup \cdots \cup V_{\ell}$
    such that $\floor{\log n} \leq |V_i| \leq 2\ceil{\log n}$ for all $i \in [\ell]$.
    Such a partition exists since $\floor{\log n} \cdot \ell \leq n \leq 2\ceil{\log n} \cdot \ell$ for $n \geq 2$.

    Let $\mathscr{S} = (I^{(0)}, I^{(1)}, \ldots, I^{(t)})$ be an $(\iini,\itar)$-reconfiguration sequence attaining the optimum, that is, $\val_G(\mathscr{S}) = \opt_G(\iini \reco \itar)$.
    Define $\gamma \coloneq \frac{\val_G(\mathscr{S})}{\ell}\geq \frac{\log n}{n} \cdot \val_G(\mathscr{S})$.

    By \Cref{lem:pigeonhole},
    for each $j \in [t]$, there exists an index $\pi(j) \in [\ell]$ such that
    \begin{align}
        |I^{(j)} \cap V_{\pi(j)}| \geq\gamma .
        \label{ineq:set_size_constraint}
    \end{align}
    Define $J^{(j)} \coloneq I^{(j)} \cap V_{\pi(j)}$ for each $j \in [t]$.
    Clearly, each $J^{(j)}$ is an independent set of $G$.

    We claim that $\mathscr{J} = (J^{(1)}, J^{(2)}, \ldots, J^{(t)})$ is a $\gamma$-sequence.
    Condition~(S1) holds since $J^{(1)} \subseteq \iini$ and $J^{(t)} \subseteq \itar$.
    Condition~(S2) follows immediately from \Cref{ineq:set_size_constraint}.
    We verify Condition~(S3).
    Since $I^{(j)}$ and $I^{(j+1)}$ are two consecutive elements of a reconfiguration sequence, either $I^{(j)} \subseteq I^{(j+1)}$ or $I^{(j)} \supseteq I^{(j+1)}$.
    As $J^{(j)} \subseteq I^{(j)}$ and $J^{(j+1)} \subseteq I^{(j+1)}$,
    \Cref{lem:subset_incl} implies that either $J^{(j+1)} \subseteq I^{(j)}$ or $J^{(j)} \subseteq I^{(j+1)}$.
    Hence, $J^{(j)} \cup J^{(j+1)}$ is contained in either $I^{(j)}$ or $I^{(j+1)}$,
    and hence forms an independent set of $G$.
    Therefore, $\mathscr{J}$ is a $\gamma$-sequence.

    We next bound the length of a $\gamma$-sequence by modifying $\mathscr{J}$.
    For a sequence $\mathscr{X}' = (X^{(1)},X^{(2)}, \ldots, X^{(t)})$ of independent sets of $G$,
    a pair $(p,q)$ with $1 \leq p < q \leq t$ is called \defi{redundant} if $X^{(p)} = X^{(q)}$.
    For each redundant pair $(p,q)$ for $\mathscr{X}$,
    we replace the substring $(X^{(p)}, X^{(p+1)}, \ldots, X^{(q)})$ by the single set $X^{(p)}$.
    This operation preserves the property of being a $\gamma$-sequence.

    Since $|V_i| = O(\log n)$ for each $i \in [\ell]$,
    the number of subsets of $V_i$ is $O(n)$.
    Thus, the total number of distinct independent sets that may appear in $\mathscr{J}$ is at most $\sum_{i=1}^{\ell} 2^{|V_i|} = O(\ell \cdot n) = O(n^2 / \log n)$.
    After exhaustively processing redundant pairs, we obtain a $\gamma$-sequence where every element is unique. 
    The length of such a sequence is bounded by $O(n^2 / \log n)$.
    This establishes the existence of a $\gamma$-sequence with the desired properties.

    We finally describe a polynomial-time algorithm for finding such a sequence.
    Fix the above partition of $V(G)$.
    For each integer $\gamma \in \{0,1,\ldots,\lceil \log n \rceil\}$,
    we construct an auxiliary graph $H_\gamma$.
    The vertex set of $H_\gamma$ consists of all independent sets $I$ of $G$ such that $|I| \geq\gamma$ and $I \subseteq V_i$ for some $i \in [\ell]$.
    Two vertices $I$ and $J$ of $H_\gamma$ are adjacent if and only if $I \cup J$ is an independent set of $G$.
    Since $|V_i| = O(\log n)$, we have $|V(H_\gamma)| = O(n^2 / \log n)$.

    For a given $\gamma$, we test whether there exist independent sets $J_{\mathrm{ini}} \subseteq \iini$ and $J_{\mathrm{tar}} \subseteq \itar$ of size at least $\gamma$ such that there is a path between $J_{\mathrm{ini}}$ and $J_{\mathrm{tar}}$ in $H_\gamma$.
    Note that connectivity in $H_\gamma$ can be tested in polynomial time.
    If such a path exists, the sequence of independent sets along the path forms a $\gamma$-sequence.
    Conversely, the $\gamma$-sequence obtained in the above construction corresponds to a path in $H_\gamma$.
    Thus, for a fixed $\gamma$, the existence of a $\gamma$-sequence can be decided in polynomial time.

    Therefore, we can compute the maximum $\gamma \in \{0,1,\ldots,\lceil \log n \rceil\}$ for which a $\gamma$-sequence exists.
    Since the algorithm returns a $\gamma$-sequence with $\gamma \geq\frac{\log n}{n} \cdot \val_G(\mathscr{S})$, the approximation factor is at least $\frac{\log n}{n}$.
\end{proof}

By combining \Cref{lem:ganma_set_to_reconf_seq,lem:computing_ganma_seq}, we obtain \Cref{thm:general_aprox}.

\section{Approximation algorithms for special classes of graphs}
\label{sec:algo}

In this section, we present approximation algorithms for graphs under different restrictions, specifically, degenerate graphs, graphs with bounded treewidth, and $H$-minor-free graphs.

\subsection{Degenerate graphs}
In this subsection, we show the following.

\begin{theorem}
\label{thm:algo:degeneracy}
    There exists a polynomial-time algorithm that, given a $d$-degenerate graph $G$ and two independent sets $\iini$ and $\itar$ of $G$, outputs an $(\iini, \itar)$-reconfiguration sequence $\mathscr{J}$ such that 
    \[
        \val_G(\mathscr{J}) \geq \frac{1}{d} \cdot \opt_G(\iini \reco \itar)-1.
    \]
\end{theorem}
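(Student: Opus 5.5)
We follow the strategy outlined in the technical overview. We maintain two independent sets $A$ and $B$, initialised as $A \coloneq \iini$ and $B \coloneq \itar$, together with two reconfiguration sequences $\mathscr{A}$ (starting at $\iini$) and $\mathscr{B}$ (starting at $\itar$) recording their histories. While $A \neq B$, let $X \coloneq A \setminus B$ and $Y \coloneq B \setminus A$. Since $G$ is $d$-degenerate, $G[X \cup Y]$ contains a vertex $v$ with at most $d$ neighbours in $X \cup Y$; such a $v$ is found in polynomial time by checking degrees.

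If $v \in X$, then every neighbour of $v$ in $B$ lies in $Y$, because $B \setminus Y = A \cap B \subseteq A$ and $A \ni v$ is independent. So we first remove $N(v) \cap B$ from $B$, which deletes at most $d$ vertices, and then add $v$ to $B$. Both intermediate sets are independent, and we append them to $\mathscr{B}$. The case $v \in Y$ is symmetric, with the roles of $A$ and $B$ swapped. Each iteration strictly decreases $|X \cup Y|$: the vertex $v$ moves into $A \cap B$, and the removed vertices leave $X \cup Y$ because they are no longer in either set. Hence there are at most $n$ iterations. When $A = B$, the output is $\mathscr{A}$ followed by the reverse of $\mathscr{B}$, which is a valid $(\iini,\itar)$-reconfiguration sequence computed in polynomial time.

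For the value, we bound the sets in $\mathscr{A}$; the bound for $\mathscr{B}$ is symmetric. Suppose $A$ has undergone $i$ modification steps so far. Each vertex added to $A$ lies in $A \cap B$ at the moment it is added. The vertices later deleted from $A$ belong to $A \setminus B$ at the time of deletion, and vertices in $A \cap B$ are never removed from $A$ or $B$. So all $i$ added vertices remain in $A$. Moreover, at most $d$ vertices are deleted per step. The smallest set in $\mathscr{A}$ occurs right after a deletion phase, so its size is at least
\[
\max\{|\iini| - d i + (i-1),\; i-1\} \ge \frac{|\iini|-d}{d} = \frac{|\iini|}{d} - 1,
\]
since the maximum of these two linear functions in $i$ is minimised where they are equal, at $i-1 = (|\iini|-d)/d$. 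The same computation gives $|\itar|/d - 1$ for $\mathscr{B}$. Since $\opt_G(\iini \reco \itar) \le \min\{|\iini|,|\itar|\}$, every set in the output has size at least $\frac{1}{d}\opt_G(\iini\reco\itar) - 1$.

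The main obstacle is the accounting in the value bound. One must check that added vertices are never subsequently removed, which is exactly the invariant that $A \cap B$ only grows. One must also handle the intermediate set taken right after the deletions and before adding $v$, which is the source of the additive $-1$. For the degenerate case $d = 0$, there are no edges, so we simply take $\iini \to \iini \cup \itar \to \itar$.
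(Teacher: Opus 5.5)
Your proposal is correct and is essentially the paper's proof: the paper runs the same algorithm, tracking $Z = A \cap B$, $X = A \setminus B$ and $Y = B \setminus A$ and choosing a minimum-degree vertex of $G[X,Y]$, and it reaches the same $\frac{|\iini|}{d}-1$ bound. The differences are only in bookkeeping: the paper's sequence for $A$ also removes and re-adds $v$ when $v \in X$, and it bounds the post-iteration sets by $\max\{i, |\iini|-(d-1)i\}$ before subtracting one for the intermediate set, whereas you count only the steps that actually change $A$ and bound the post-deletion sets directly.
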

\begin{proof}
For two vertex subsets $A, B$ of $V(G)$, let $G[A,B]$ denote the bipartite subgraph of $G$ induced by the edge set between $A$ and $B$.
If $A$ and $B$ are independent sets, then $G[A,B]$ coincides with the induced subgraph $G[A \cup B]$.

\subparagraph{Algorithm.}
In our algorithm, we maintain three variables $X$, $Y$, and $Z$, which store independent sets of $G$.
Initially, we set $Z \coloneq \iini \cap \itar$, $X \coloneq \iini \setminus Z$, and $Y \coloneq \itar \setminus Z$.
Then we repeat the following procedure until $X = Y = \emptyset$.
Let $v$ be a vertex with the smallest degree in the bipartite graph $G[X,Y]$.
Let $T \in \{X, Y\}$ be the set containing $v$, and let $T'$ denote the other set.
Moreover, let $N_{X,Y}(v)$ be the set of neighbours of $v$ in $G[X,Y]$.
We then remove $v$ from $T$, remove all vertices in $N_{X,Y}(v)$ from $T'$, and add $v$ to $Z$.

Note that, throughout this procedure, the sets $X$ and $Y$ are maintained to be independent sets of $G$.
Consequently, the induced subgraph $G[X \cup Y]$ remains bipartite.

Suppose that the above procedure is executed for $t-1$ iterations.
Let $(X^{(1)}, X^{(2)}, \dots, X^{(t)} = \emptyset)$, $(Y^{(1)}, Y^{(2)}, \dots, \allowbreak Y^{(t)} = \emptyset)$, and $(Z^{(1)}, Z^{(2)}, \dots, Z^{(t)})$ denote the sequences of sets assigned to $X$, $Y$, and $Z$ during the procedure, respectively.
For $i \in [t-1]$, we define 
\begin{align*}
\overrightarrow{J}^{(2i-1)} \coloneq X^{(i)} \cup Z^{(i)}, \quad &\overrightarrow{J}^{(2i)} \coloneq X^{(i+1)} \cup Z^{(i)},\\ 
\overleftarrow{J}^{(2i-1)} \coloneq Y^{(i)} \cup Z^{(i)}, \quad&\overleftarrow{J}^{(2i)} \coloneq Y^{(i+1)} \cup Z^{(i)}.    
\end{align*}
Finally, we set $\overrightarrow{J}^{(2t-1)} \coloneq X^{(t)} \cup Z^{(t)} = Y^{(t)} \cup Z^{(t)} \eqqcolon \overleftarrow{J}^{(2t-1)}$.
The algorithm outputs the sequence $\mathscr{J} \coloneq \left(\overrightarrow{J}^{(1)}, \overrightarrow{J}^{(2)}, \dots, \overrightarrow{J}^{(2t-1)} = \overleftarrow{J}^{(2t-1)}, \dots,\overleftarrow{J}^{(2)}, \overleftarrow{J}^{(1)}\right)$.

\subparagraph{Correctness.}
We first prove that the first $2t-1$ elements of $\mathscr{J}$ form a reconfiguration sequence; that is, each element is an independent set of $G$, and for every pair of consecutive elements $A$ and $B$, either $A \subseteq B$ or $A \supseteq B$ holds.
The proof proceeds by induction along the sequence.

Observe that $\overrightarrow{J}^{(1)} = \iini$, which is an independent set of $G$.
Suppose that for some $i \in [t-1]$, the set $\overrightarrow{J}^{(2i-1)}$ is an independent set of $G$.
Consider three consecutive elements $\overrightarrow{J}^{(2i-1)} = X^{(i)} \cup Z^{(i)}$, $\overrightarrow{J}^{(2i)} = X^{(i+1)} \cup Z^{(i)}$, and $\overrightarrow{J}^{(2i+1)} = X^{(i+1)} \cup Z^{(i+1)}$.
By construction, there is a vertex $v$ such that $Z^{(i+1)} = Z^{(i)} \cup \{v\}$ and either 
\begin{enumerate}
    \item[(i)] $X^{(i+1)} = X^{(i)} \setminus \{v\}$ with $v \in X^{(i)}$, or
    \item[(ii)] $X^{(i+1)} = X^{(i)} \setminus N_{X^{(i)},Y^{(i)}}(v)$ with
    $v \in Y^{(i)}$.
\end{enumerate}
In both cases, we have $\overrightarrow{J}^{(2i-1)} \supseteq \overrightarrow{J}^{(2i)}$ and $\overrightarrow{J}^{(2i)} \subseteq \overrightarrow{J}^{(2i+1)}$.
Since $\overrightarrow{J}^{(2i-1)}$ is an independent set, it follows that $\overrightarrow{J}^{(2i)}$ is also an independent set.

It therefore remains to show that $\overrightarrow{J}^{(2i+1)}$ is an independent set.
In case~(i), we have $\overrightarrow{J}^{(2i+1)} = \overrightarrow{J}^{(2i-1)}$, and the claim follows immediately.
In case~(ii), by construction, $N_{X^{(i)},Y^{(i)}}(v)$ contains all neighbours of $v$ in $X^{(i)}$, and hence $v$ has no neighbours in $X^{(i+1)}$.
Moreover, when a vertex $w$ is added to $Z$, it has no neighbours in the current sets $X$ and $Y$.
Therefore, no vertex in $Z^{(i)}$ is adjacent to $v$, and since $\overrightarrow{J}^{(2i)}$ is an independent set, we conclude that $\overrightarrow{J}^{(2i+1)}$ is an independent set as well.
This completes the inductive argument.

By a symmetric argument, the last $2t-1$ elements of $\mathscr{J}$ also form a reconfiguration sequence.
Together with the facts that $\overrightarrow{J}^{(1)} = \iini$ and $\overleftarrow{J}^{(1)} = \itar$, we conclude that $\mathscr{J}$ is an $(\iini, \itar)$-reconfiguration sequence as required.

It remains to analyse the approximation factor.
Observe that since $G$ is $d$-degenerate, the chosen vertex $v$ has degree at most $d$ in $G[X, Y]$.
In each iteration $i = 1,2,\dots,t-1$, the algorithm removes at most $d$ vertices from $X^{(i)}$ and adds exactly one vertex to $Z^{(i)}$.
Hence, the size of $X^{(i)} \cup Z^{(i)}$ decreases by at most $d-1$ per iteration.
On the other hand, after $i$ iterations, exactly $i$ vertices have been added to $Z^{(i+1)}$, and therefore $|X^{(i+1)} \cup Z^{(i+1)}| \ge i$.
Combining these observations, after $i$ iterations, we obtain 
\[
|X^{(i+1)} \cup Z^{(i+1)}| \geq \max\{i,\ |\iini| - (d-1) \cdot i \} \ge \frac{|\iini|}{d}.
\]
In particular, for $i \in [t]$, it holds that $|\overrightarrow{J}^{(2i-1)}| \geq \frac{|\iini|}{d}$.

Recall that, for each $i\in [t-1]$, we have $|\overrightarrow{J}^{(2i)}| = |\overrightarrow{J}^{(2i+1)}|-1$. 
Consequently, every independent set $\overrightarrow{J}^{(1)}, \overrightarrow{J}^{(2)}, \dots, \overrightarrow{J}^{(2t-1)}$ has size at least $\frac{|\iini|}{d} - 1$.

By symmetry, the same argument implies that every independent set $\overleftarrow{J}^{(1)}, \overleftarrow{J}^{(2)}, \dots, \overleftarrow{J}^{(2t-1)}$ has size at least $\frac{|\itar|}{d} - 1$.

Finally, since $\opt_G(\iini \reco \itar) \le \min\{|\iini|,|\itar|\}$, we conclude that
\[
    \val_G(\mathscr{J}) \ge \min\left\{\frac{|\iini|}{d}-1,\ \frac{|\itar|}{d}-1\right\} \ge \frac{1}{d}\,\opt_G(\iini \reco \itar) - 1,
\]
which completes the correctness proof of \Cref{thm:algo:degeneracy}.

\subparagraph{Run time.} 
The number of iterations in the procedure above is at most $n$.
In each iteration, a vertex of minimum degree in $G[X,Y]$ can be found in $\bigO(nd)$ time, and the sets $X$, $Y$, and $Z$ can be updated in linear time.
Therefore, the overall run time is polynomial.
\end{proof}

\subsection{Bounded treewidth}
In this subsection, we discuss several algorithms for graphs with bounded treewidth.
Recall that a graph $G$ with treewidth $\tw(G)$ is $\tw(G)$-degenerate.
Therefore, the algorithm presented in the previous subsection applies directly to this setting.
However, by utilising the existence of a small balanced separation, we can design an alternative algorithm that achieves a better approximation factor in certain cases.
Moreover, we can combine this algorithm with an \fpt algorithm for \mmisr to devise an \fpt-approximation scheme for the problem.

Throughout this subsection, let $(G, \iini, \itar)$ be an instance of \mmisr.
For brevity, we define $\imin \coloneq \min\{|\iini|, |\itar|\}$.

\subsubsection{A polynomial-time approximation algorithm}
The main result of this section is the following approximation algorithm.

\begin{theorem}
\label{thm:algo:treewidth}
    Let $t$ be a positive integer.
    There exists a polynomial-time algorithm that, given a graph $G$ together with a tree decomposition of width $t-1$ and two independent sets $\iini$ and $\itar$ of $G$, outputs an $(\iini, \itar)$-reconfiguration sequence $\mathscr{J}$ such that 
    \[
        \val_G(\mathscr{J}) \geq \imin - t \left(\log_{3/2} \frac{\imin}{t} +1 \right).
    \]    
\end{theorem}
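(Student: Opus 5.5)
We prove the statement by recursion on balanced separators, as sketched in the technical overview. Work with a weighted version. For a vertex set $U \subseteq V(G)$, let $A = \iini \cap U$ and $B = \itar \cap U$. We build a reconfiguration sequence from $A$ to $B$ inside $G[U]$; because it only touches vertices of $U$, any fixed outside independent set that is non-adjacent to $U$ can be carried along unchanged. The tree decomposition restricts to $G[U]$ with width at most $t-1$. Hence there is a separator $S \subseteq U$ with $|S| \le t$ (a bag) such that the components of $G[U]\setminus S$ can be grouped into two sides $P, Q$, with no edges between $P$ and $Q$ and $|A \cap P|, |A \cap Q| \le \frac{2}{3}|A|$. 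To keep the recursion symmetric, I would balance with respect to the weight $|A|+|B|$ (or alternate between $A$ and $B$). Such a separator is found in polynomial time from the given decomposition, either by the standard centroid-bag argument or by grouping components greedily.

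The recursive step on $(U, A, B)$ does three things in order.
\begin{enumerate}
\item Remove $A \cap S$.
\item Reconfigure one side, say $P$, from $A \cap P$ to $B \cap P$ recursively. Throughout, keep $A \cap Q$ fixed on the other side; this is legal because there are no $P$--$Q$ edges and $S$ currently contains no tokens.
\item Reconfigure $Q$ from $A \cap Q$ to $B \cap Q$ recursively, keeping $B \cap P$ fixed.
\end{enumerate}
Finally add $B \cap S$, which is legal because $B$ is independent. The base case is when $|A|$ (or the weight) is at most $t$. There we simply remove all of $A$ and add all of $B$, losing at most $t$ below $\imin$ (size $\ge \min(|A|,|B|) - t \ge 0$ trivially). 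Correctness, meaning every set is independent and consecutive sets are nested, follows by induction.

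For the value, track the deficit $D(U) = \min\{|A|,|B|\} - (\text{min size along the sequence})$, measured globally. The key claim is that the sets in the whole sequence always have size at least $\imin$ minus $t$ times the number of separators "currently open" on the recursion stack. At depth $j$ the token-count on the untouched parts equals the original count there. The only loss is the tokens removed from the separators of the ancestor calls, at most $t$ per level, together with the gap between $|A\cap \cdot|$ and $|B\cap \cdot|$ on sides already processed. To control that gap, order the two recursive calls so that we first process the side where $|B\cap P| \ge |A \cap P|$ (the "gaining" side). Then after finishing a side, the total never drops below the original $\min\{|A|,|B|\}$ minus open separator losses. This is the subtle point I expect to be the main obstacle. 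In the worst case the ordering choice must guarantee a bound of $\min(|\iini|,|\itar|)$, not just $|\iini|$, at every intermediate moment. If the greedy ordering fails, I would instead charge the difference to the final $\min$ via $|X|+|Y|$ accounting within the recursion.

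The depth of recursion is where the $\log$ enters. Each level shrinks the relevant token count by a factor $2/3$, and recursion stops once the count is at most $t$. So the number of levels with a nonempty separator loss is at most $\log_{3/2}(\imin/t)+1$. Note that a branch whose count is $\le t$ contributes only a base-case loss of at most $t$, absorbed in the "+1". Multiplying by $t$ per level gives $\val_G(\mathscr{J}) \ge \imin - t(\log_{3/2}\frac{\imin}{t}+1)$. The running time is polynomial because each call does polynomial work and the recursion tree has $O(n)$ nodes, since the sides are disjoint.
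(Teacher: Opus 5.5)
Your recursion is the paper's Algorithm~$T$: remove $S\cap A$, recurse on one side while the other is frozen at $A$, recurse on the second side while the first is frozen at $B$, then add $S\cap B$. However, the value analysis has a genuine hole exactly where you flag it. Write $\mu=\min\{|A|,|B|\}$ for the current call. Your invariant ``$\mu$ minus $t$ per open separator'' needs the frozen part plus the child's own minimum to be at least $\mu-t$ in both calls:
\[
|A\cap Q|+\min\{|A\cap P|,|B\cap P|\}\ge\mu-t
\quad\text{and}\quad
|B\cap P|+\min\{|A\cap Q|,|B\cap Q|\}\ge\mu-t .
\]
Since $|A\cap P|+|A\cap Q|=|A\setminus S|\ge\mu-t$ and $|B\cap P|+|B\cap Q|=|B\setminus S|\ge\mu-t$, both reduce to the single inequality $|B\cap P|+|A\cap Q|\ge\mu-t$, which you never prove. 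The paper chooses $P$ to maximise $|B\cap P|-|A\cap P|$ and averages: $|B\cap P|+|A\cap Q|\ge\frac12(|A\setminus S|+|B\setminus S|)\ge\mu-t$. Your ``gaining side first'' rule also works once completed. If $|B\cap P|\ge|A\cap P|$, then $|B\cap P|+|A\cap Q|\ge|A\setminus S|$. If neither side gains, any order gives $|B\cap P|+|A\cap Q|>|B\setminus S|$. So one $t$ per level pays for both the separator and the $A$/$B$ imbalance, and no extra charging argument is needed.

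The depth count is a second gap. Balancing with respect to $|A|+|B|$ does not force $\min\{|A\cap P|,|B\cap P|\}\le\frac23\mu$. When $|B|\gg|A|$, the separators can keep all of $A$ on one side for many levels. Charging $t$ per level then only gives roughly $t\log_{3/2}((|\iini|+|\itar|)/t)$, and alternating gives at best roughly $2t\log_{3/2}(\imin/t)$. You must balance with respect to the \emph{smaller} current set, as the paper does: swap roles so that $|\iini|\le|\itar|$ and use an $\iini$-balanced separation, which gives $\mu^P\le\frac23\mu$. Even then, counting levels down to a base case $\mu\le t$ yields $t(\lceil\log_{3/2}(\imin/t)\rceil+1)$, not the stated bound. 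Instead, induct on the worst deficit, $D(\mu)\le t+\max_{\mu'\le 2\mu/3}D(\mu')$, which $t(\log_{3/2}(\mu/t)+1)$ satisfies with equality.

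Be aware also that the statement fails for small $\imin$. If $0<\imin<\frac23t$, the right-hand side exceeds $\imin$. Already $K_{s,s}$, with $\iini,\itar$ its two sides and a width-$s$ decomposition ($t=s+1$, $\imin=t-1$), has optimum $0$, while the right-hand side is at least $\frac{1}{\ln(3/2)}-1>1$. A correct argument therefore assumes $\imin\ge t$. It handles sub-instances with $\mu<t$ by the trivial bound $D(\mu)\le\mu$, together with $x\le t(\log_{3/2}(x/t)+1)$ for $x\in[t,\frac32 t]$. The paper's induction skips this too: it applies the bound to such sub-instances, including $\imin=0$, without comment.
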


The algorithm runs in polynomial time if the tree decomposition is given.
Otherwise, we additionally compute a tree decomposition with width $\tw(G)$ in \fpt time~\cite{Bodlaender96}.

Note that the lower bound in \cref{thm:algo:treewidth} is positive when $\imin$ is at least roughly $5t$.
(For example, it is clearly positive when $\imin \geq (3/2)^{4} t$.)
However, when $t$ is large relative to $\imin$, we can instead use the algorithm in \cref{thm:algo:degeneracy}.
As discussed before, a graph $G$ with treewidth $t-1$ is $(t-1)$-degenerate.
Hence, simply by combining \cref{thm:algo:degeneracy} and \cref{thm:algo:treewidth}, we can obtain a better algorithm, as follows.

\begin{corollary}
\label{cor:treewidth}
    Let $t$ be a positive integer.
    There exists a polynomial-time algorithm that, given a graph $G$ together with a tree decomposition of width $t-1$ and two independent sets $\iini$ and $\itar$ of $G$, outputs an $(\iini, \itar)$-reconfiguration sequence $\mathscr{J}$ such that 
    \[
        \val_G(\mathscr{J}) \geq \max\left\{\frac{\imin - t \left(\log_{3/2} (\imin/t) +1 \right)}{\imin} \cdot\opt_G(\iini \reco \itar), 
        \frac{1}{t-1} \cdot \opt_G(\iini \reco \itar) - 1
        \right\}.
    \]
\end{corollary}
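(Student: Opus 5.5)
The plan is to run both earlier algorithms on the same input and output whichever reconfiguration sequence has the larger value; both values are computable directly as minimum set sizes along the sequences. Each algorithm runs in polynomial time, and so does the combination.

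For the first term, apply \cref{thm:algo:treewidth}. This gives a sequence $\mathscr{J}_1$ with $\val_G(\mathscr{J}_1) \geq \imin - t\left(\log_{3/2}(\imin/t)+1\right)$. Every reconfiguration sequence contains $\iini$ and $\itar$, so $\opt_G(\iini \reco \itar) \leq \imin$. We split on the sign of the lower bound.
\begin{itemize}
\item If the lower bound is nonnegative, then
\[
\imin - t\left(\log_{3/2}(\imin/t)+1\right) = \frac{\imin - t\left(\log_{3/2}(\imin/t)+1\right)}{\imin}\cdot \imin \geq \frac{\imin - t\left(\log_{3/2}(\imin/t)+1\right)}{\imin}\cdot\opt_G(\iini \reco \itar),
\]
and this inequality step needs the coefficient to be nonnegative.
\item If the lower bound is negative, the first term of the maximum is at most $0 \leq \val_G(\mathscr{J}_1)$, so the claim holds trivially.
\end{itemize}
The degenerate case $\imin = 0$ also needs care, since the fraction is then undefined. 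There $\opt_G(\iini \reco \itar)=0$, and we interpret the term as $0$.

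For the second term, use that a graph with a tree decomposition of width $t-1$ is $(t-1)$-degenerate. To see this, take any subgraph and a leaf bag of a reduced decomposition: some vertex appears only in that bag, so its neighbours lie in the bag and its degree is at most $t-1$. Then \cref{thm:algo:degeneracy} with $d=t-1$ gives a sequence $\mathscr{J}_2$ with $\val_G(\mathscr{J}_2) \geq \frac{1}{t-1}\opt_G(\iini \reco \itar) - 1$. When $t=1$ the graph is edgeless, $d=0$, and the expression is undefined. That edge case is handled directly: $\iini \cup \itar$ is independent, so adding $\itar$ and then removing $\iini \setminus \itar$ achieves value $\imin \geq \opt_G(\iini \reco \itar)$.

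Taking the better of $\mathscr{J}_1$ and $\mathscr{J}_2$ then meets both bounds, hence their maximum. There is no real obstacle; the only care needed is the sign and degenerate cases of the first term.
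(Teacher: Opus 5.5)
Your proposal is correct and is the paper's own argument. The paper justifies the corollary in one line: run the algorithms of \cref{thm:algo:treewidth} and \cref{thm:algo:degeneracy} (a width-$(t-1)$ decomposition gives $(t-1)$-degeneracy) and keep the better sequence; your handling of the sign of the first bound and of the cases $\imin=0$ and $t=1$ fills in details it leaves implicit.

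One caveat comes from \cref{thm:algo:treewidth} itself, not from your derivation. For $0<\imin<\frac{2}{3}t$ the coefficient of the first term exceeds $1$, so that bound cannot be met whenever $\opt_G(\iini \reco \itar)>0$. For example, take $t=10$ and $\iini=\{a\}$, $\itar=\{b\}$ for isolated vertices $a,b$: then $\opt_G(\iini \reco \itar)=1$, but the claimed bound exceeds $47$. So the first term is only justified in the regime $\imin \geq t$, in the paper and in your write-up alike.
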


The proof of \cref{thm:algo:treewidth} relies on the existence of a so-called balanced $\iini$-separator.
A pair of vertex subsets $(A, B)$ is a \defi{separation} of $G$ if $A \cup B = V(G)$ and there is no edge between any vertex in $A \setminus B$ and any vertex in $B \setminus A$.
For a nonnegative weight function $w\colon V(G) \to \mathbb{R}_{\ge 0}$ and a set $V'\subseteq V(G)$, we define $w(V') = \sum_{v \in V'} w(v)$.
For a constant $\alpha \in (0, 1)$, we say that a separation $(A, B)$ of $G$ is an \defi{$\alpha$-balanced separation} if $w(A\setminus B) \le \alpha \cdot w(V(G))$ and $w(B\setminus A) \le \alpha \cdot w(V(G))$.

\begin{lemma}[\cite{CyganFKLMPPS15book}] \label{lem:tw_balance}
    Let $G$ be a graph with a nonnegative weight function $w\colon V(G) \to \mathbb{R}_{\ge 0}$.
    Then there exists a $\frac{2}{3}$-balanced separation $(A, B)$ of $G$ such that $|A\cap B| \le \tw(G)+1$.
    Moreover, given a tree decomposition of $G$ of width $\tw(G)$, such a separation can be found in polynomial time.
\end{lemma}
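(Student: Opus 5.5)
The plan is to locate a single bag $X_s$ of the given tree decomposition $(T,\{X_i\}_{i\in V_T})$ that separates $G$ into pieces of small weight, and then to group those pieces into the two sides of the separation. Write $W \coloneq w(V(G))$ and root $T$ at an arbitrary node. For a node $i$, let $V_i$ denote the union of the bags in the subtree rooted at $i$.

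\emph{Finding the bag.} Since $V_r = V(G)$ for the root $r$, some node $i$ satisfies $w(V_i)\ge W/2$. Let $s$ be a deepest such node, so every child $c$ of $s$ has $w(V_c) < W/2$.

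By the connectivity property (condition~3) and the edge property (condition~2) of tree decompositions, every connected component $C$ of $G - X_s$ lies in one of two places:
\begin{itemize}
  \item inside $V_c \setminus X_s$ for a single child $c$ of $s$, in which case $w(C) < W/2$; or
  \item inside $V(G)\setminus V_s$, in which case $w(C)\le W - w(V_s)\le W/2$.
\end{itemize}
To verify this, note that a vertex outside $X_s$ occurs only in bags of one of the branches of $T - s$. An edge of $G - X_s$ joining vertices from two different branches would need a common bag, and such a bag cannot exist. Hence every component of $G - X_s$ has weight at most $W/2$, and $|X_s|\le \tw(G)+1$.

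\emph{Grouping the components.} Sort the components $C_1, C_2, \dots$ of $G - X_s$ by non-increasing weight. Add them in this order to a group $\mathcal{A}$ until $w\bigl(\bigcup\mathcal{A}\bigr)\ge W/3$ or no components remain. Put the remaining components in $\mathcal{B}$. There are two cases.
\begin{itemize}
  \item If $w(C_1)\ge W/3$, then $\mathcal{A}=\{C_1\}$ has weight at most $W/2$. Also $\mathcal{B}$ has weight at most $W - W/3 = 2W/3$.
  \item Otherwise every component has weight less than $W/3$. The sum just before the final addition is below $W/3$, so $\mathcal{A}$ has weight below $2W/3$. If the threshold was reached, $\mathcal{B}$ has weight at most $2W/3$; if it was not reached, $\mathcal{B}$ is empty.
\end{itemize}
Set $A \coloneq X_s \cup \bigcup\mathcal{A}$ and $B \coloneq X_s \cup \bigcup\mathcal{B}$. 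Then $A \cup B = V(G)$ and $A\cap B = X_s$. There is no edge between $A\setminus B$ and $B\setminus A$, because these sets are unions of distinct components of $G - X_s$. Finally, $w(A\setminus B)$ and $w(B\setminus A)$ are both at most $\frac23 W$, so $(A,B)$ is the required separation.

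\emph{Running time.} Computing all $w(V_i)$ by a bottom-up pass, finding $s$, computing the components of $G - X_s$, sorting them, and grouping them all take polynomial time in the size of $G$ and of the decomposition. The only delicate point is the case analysis in the grouping step: a naive greedy grouping only yields a bound of $5/6$. Ordering the components by decreasing weight is what gives the $2/3$ bound.
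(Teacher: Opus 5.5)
Your proof is correct and follows essentially the standard argument from the cited book of Cygan et al.; the paper relies on that source without reproducing a proof. The argument has two parts: the bag $X_s$ at a deepest node whose subtree has weight at least $W/2$ is a $\frac{1}{2}$-balanced separator of size at most $\tw(G)+1$, and grouping the components of $G - X_s$ greedily in non-increasing order of weight until reaching $W/3$ turns it into a $\frac{2}{3}$-balanced separation with $A \cap B = X_s$. Your case analysis and running-time argument match that proof.
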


An \defi{$\alpha$-balanced $\iini$-separation} of $G$ is a separation $(A,B)$ such that  $|\iini \cap (A\setminus B)| \le \alpha \cdot |\iini|$, and $|\iini \cap (B\setminus A)| \le \alpha \cdot |\iini|$.
The intersection $A \cap B$ is usually called an \defi{$\alpha$-balanced $\iini$-separator}.
The following corollary is obtained by applying \cref{lem:tw_balance} with a weight function $w$ such that $w(v) = 1$ if $v \in \iini$ and $w(v) = 0$ otherwise.

\begin{corollary} \label{cor:tw_inibalance}
    There exists a $\frac{2}{3}$-balanced $\iini$-separation $(A, B)$ in $G$ such that $|A\cap B| \le \tw(G)+1$.
    Moreover, given a tree decomposition of $G$ of width $\tw(G)$, such a separation can be found in polynomial time.
\end{corollary}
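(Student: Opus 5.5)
This is a direct instantiation of \cref{lem:tw_balance}. Define the weight function $w\colon V(G)\to\mathbb{R}_{\ge 0}$ by $w(v)=1$ if $v\in\iini$ and $w(v)=0$ otherwise. Then for every $V'\subseteq V(G)$ we have $w(V')=|\iini\cap V'|$, and in particular $w(V(G))=|\iini|$.

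Apply \cref{lem:tw_balance} to $G$ with this $w$. It yields a $\frac{2}{3}$-balanced separation $(A,B)$ with $|A\cap B|\le \tw(G)+1$. By the definition of a $\frac{2}{3}$-balanced separation,
\[
|\iini\cap(A\setminus B)| = w(A\setminus B)\le \tfrac{2}{3}\,w(V(G))=\tfrac{2}{3}|\iini|,
\]
and the same bound holds for $B\setminus A$. These two inequalities are exactly the conditions for a $\frac{2}{3}$-balanced $\iini$-separation. This proves the existence claim.

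For the algorithmic claim, the weight function is computable in linear time from $\iini$. The second part of \cref{lem:tw_balance} then finds the separation in polynomial time from the given tree decomposition of width $\tw(G)$.

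The argument has no real obstacle. The only point to check is that \cref{lem:tw_balance} allows arbitrary nonnegative weights, including zero weights, which it does.
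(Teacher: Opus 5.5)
Your proposal is correct and is exactly the paper's argument: apply \cref{lem:tw_balance} with the indicator weight $w(v)=1$ for $v\in\iini$ and $w(v)=0$ otherwise. Then $w(V')=|\iini\cap V'|$, so the weighted balance conditions become the $\iini$-balance conditions, and the polynomial-time claim carries over directly.
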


We are now ready to prove \cref{thm:algo:treewidth}.

\begin{proof}[Proof of \cref{thm:algo:treewidth}]
    In the following, we describe a recursive algorithm, denoted by Algorithm~$T$, that takes as input a graph $G$, a tree decomposition of $G$ with width $t-1$, and two independent sets $\iini$ and $\itar$ of $G$. 
    The algorithm constructs an $(\iini, \itar)$-reconfiguration sequence by maintaining a current independent set $I$, initialised as $\iini$ and specifying the modifications on $I$ (i.e., adding or removing vertices).

    Algorithm $T$ is as follows:
    \begin{description}
        \item[Step 1.]
        If $\iini = \emptyset$,
        we simply add all $\itar$ to $I$ and stop processing further.
        \item[Step 2.] 
        Without loss of generality, we assume $|\iini| \leq |\itar|$; otherwise, in this and subsequent steps, we swap the roles of $\iini$ and $\itar$.
        Compute a $\frac{2}{3}$-balanced $\iini$-separation $(A, B)$ of $G$. Define $S \coloneq A \cap B$, $X \coloneq A \setminus S$, $Y \coloneq B \setminus S$, $\iini^X \coloneq \iini \cap X$, $\iini^Y \coloneq \iini \cap Y$, $\itar^X \coloneq \itar \cap X$, and $\itar^Y \coloneq \itar \cap Y$.
        \item[Step 3.] Remove $S \cap \iini$ from $I$.
        \item[Step 4] If $|\itar^X| - |\iini^X| \geq |\itar^Y| - |\iini^Y|$, define $P \coloneq X$ and $Q \coloneq Y$.
            Otherwise, define $P \coloneq Y$ and $Q \coloneq X$.
        \item[Step 5.] Recurse on $P$. (That is, if $P \neq \emptyset$, we call Algorithm~$T$ on input $G[P]$, the tree decomposition restricted to $P$, $\iini^P$, and $\itar^P$. 
        Apply the sequence of modifications returned by this call to the current set $I$.
        If $P = \emptyset$, we do nothing.)
        \item[Step 6.] Recurse on $Q$.
        \item[Step 7.] Add $S \cap \itar$ to $I$.
    \end{description}

    \subparagraph{Correctness.}
    We first prove by induction that $I$ is always an independent set throughout the execution, and after Step 7, $I = \itar$.
    This is trivially true for the base case when $G$ has no vertices. 
    At Step 3, we only remove vertices, and hence, after the modification, $I$ is still an independent set.
    At Step 5, by recursion, throughout the modifications in $P$, $I \cap P$ is an independent set.
    Note that there is no change in $I \cap Q$ during this process.
    By the definition of a separation, there are no edges between $P$ and $Q$, and hence, $I$ remains an independent set throughout the recursion.
    A similar argument also applies to Step 6.
    After this step, we have $I \cap P = \itar^P$ and $I \cap Q = \itar^Q$.
    Hence, $I = \itar^P \cup \itar^Q = \itar \cap (X \cup Y)$.
    Therefore, after adding $S \cap \itar$ to $I$, we obtain $I = \itar$, which is an independent set.
    This completes the inductive proof.

    Next, we argue about the approximation guarantee.
    We denote by $\lambda(G, \iini, \itar)$ the value of the reconfiguration sequence obtained from Algorithm $T$ on the input $(G, \iini, \itar)$ and a tree decomposition of width $t-1$ of $G$.
    The approximation guarantee follows immediately from the following claim.
    
    \begin{claim}
        $\lambda(G, \iini, \itar) \geq \imin - t \left(\log_{3/2} \frac{\imin}{t} +1 \right).$
    \end{claim}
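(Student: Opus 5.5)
The plan is to prove the claim by strong induction on $m \coloneq \min\{|\iini|,|\itar|\}$, reading the bound as the statement that the total loss below $m$ is at most $t$ per level of recursion, with at most $\log_{3/2}(m/t)+1$ levels. (For $m \le t$ the right-hand side is no longer the meaningful quantity, since $\log_{3/2}(m/t)<0$. I would state the inductive hypothesis as ``the sequence never drops below $m - t\cdot D(m)$'', where $D(m) \coloneq \max\{1, \lceil \log_{3/2}(m/t)\rceil + 1\}$, and recover the claim for $m \ge t$. For $m\le t$ only the trivial bound $\lambda \ge 0$ will be used when the theorem is applied.) After the swap in Step~2 we may assume $|\iini| = m \le |\itar|$.

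First I track $|I|$ through one call. Write $a_P \coloneq |\iini^P|$, $b_P \coloneq |\itar^P|$, and similarly $a_Q, b_Q$.
\begin{itemize}
\item After Step~3, $|I| \ge m - |S| \ge m - t$, by \cref{cor:tw_inibalance}.
\item During Step~5, $I = I_P \cup \iini^Q$, where $I_P$ runs through the sequence produced on $G[P]$, so $|I| \ge \lambda(G[P],\iini^P,\itar^P) + a_Q$.
\item During Step~6, $I = \itar^P \cup I_Q$, so $|I| \ge b_P + \lambda(G[Q],\iini^Q,\itar^Q)$.
\item Step~7 only adds vertices.
\end{itemize}
Since $(A,B)$ is $\frac23$-balanced with respect to $\iini$, we get $a_P, a_Q \le \frac23 m$. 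Hence each subproblem has smaller parameter, and the induction applies with at least one fewer level available, as long as $m$ is large compared with $t$.

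Next I use the rule in Step~4. From $(b_P - a_P) + (b_Q - a_Q) = |\itar \setminus S| - |\iini\setminus S| \ge |\itar| - |\iini| - t \ge -t$ and $b_P - a_P \ge b_Q - a_Q$, it follows that $b_P \ge a_P - t/2$. The choice of $P$ is also what keeps the Step~6 bound under control, since it pays for $b_P$ in place of $a_P$.

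Then I apply the inductive hypothesis to both subcalls, each of which has parameter at most $\frac23 m$ and therefore needs at most $D(m)-1$ levels. In Step~5 the induction gives $\lambda(G[P],\dots) \ge \min\{a_P,b_P\} - t(D(m)-1)$, so the total is at least $\min\{a_P,b_P\} + a_Q - t(D(m)-1)$. The point to verify is that $\min\{a_P,b_P\} + a_Q \ge m - t$. When $a_P \le b_P$ this holds because $a_P + a_Q = |\iini \setminus S| \ge m - t$. When $b_P < a_P$ there is an extra deficit of at most $t/2$. I would handle that deficit by sharpening the induction so that the first removal at each level costs only $|S\cap\iini|$, and charging the $t/2$ against it. Failing that, I would accept a per-level constant of $\frac32 t$, or absorb it into the ``$+1$'' term of the bound. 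Step~6 is symmetric: $b_P + \min\{a_Q,b_Q\} \ge m - t$ holds because $b_P + b_Q \ge |\itar| - t \ge m - t$ and $b_P + a_Q \ge a_P + a_Q - t/2$. Summing over the levels then gives $\lambda \ge m - t\,D(m)$, which is the claim.

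I expect the main obstacle to be this second phase of the accounting, namely making the per-level loss exactly $t$ rather than $\frac32 t$ once $\itar^P$ has replaced $\iini^P$, together with the base case of the recursion where $m \le t$.
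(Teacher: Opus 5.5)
\paragraph{The gap: the per-level loss of $t$.}
Your skeleton matches the paper's. You track $|I|$ through Steps~3, 5, 6 and 7, use the $\frac23$-balance to shrink the parameter, and use the Step~4 rule to pay for replacing $\iini^P$ by $\itar^P$. The step you flag as the obstacle is a genuine gap, and it is exactly the key inequality of the paper's proof.

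In the Step~5 case $b_P<a_P$, and in the Step~6 case $\min\{a_Q,b_Q\}=a_Q$, you need $b_P+a_Q\ge m-t$. Your estimates only give $m-\frac32 t$, so your Step~6 sentence does not follow from the reasons you give either. The loss comes from paying for $S$ twice. You first bound $|\itar\cap S|\le t$ to get $b_P\ge a_P-t/2$, and then separately bound $|\iini\cap S|\le t$ in $a_P+a_Q\ge m-t$.

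The paper averages instead. The Step~4 rule is equivalent to $b_P+a_Q\ge b_Q+a_P$, so
\[
b_P+a_Q\ \ge\ \tfrac12\bigl(a_P+a_Q+b_P+b_Q\bigr)=\tfrac12\bigl(|\iini|+|\itar|-|\iini\cap S|-|\itar\cap S|\bigr)\ \ge\ m-t .
\]
In your language: if you keep $|\iini\cap S|$, you get $b_P-a_P\ge\frac12(|\iini\cap S|-|\itar\cap S|)$. So the extra deficit is at most $\frac12(t-|\iini\cap S|)$, not $t/2$; this is the precise form your ``charging'' idea needs.

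Neither fallback gives the stated bound. A per-level loss of $\frac32 t$ proves a weaker inequality. The extra $t/2$ recurs at every level of the recursion, so it cannot be absorbed into the single ``$+1$''.

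\paragraph{The ceiling, and small $m$.}
The ceiling in $D(m)$ yields only $\lambda\ge m-t(\lceil\log_{3/2}(m/t)\rceil+1)$. This is weaker than the claim by up to $t$, so it does not ``recover the claim''. Use the real-valued $\log_{3/2}(m/t)+1$ instead, which meshes exactly with $m_P=\min\{a_P,b_P\}\le\frac23 m$.

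Your caution about small $m$, on the other hand, is justified: as literally stated, the claim fails for $m<t$.
\begin{itemize}
\item For $0<m<\frac23 t$ the right-hand side exceeds $m\ge\lambda$.
\item For $G=K_{s,s}$ with $\iini,\itar$ its two sides and $t=s+1$, the right-hand side equals $(s+1)\log_{3/2}(1+\frac1s)-1>1$, while $\lambda=0$.
\end{itemize}
The paper's induction applies its hypothesis to subinstances in this range without comment.

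A clean repair is to prove the claim only for $m\ge t$:
\begin{itemize}
\item For $t\le m\le\frac32 t$ the right-hand side is nonpositive, so there is nothing to prove.
\item For $m>\frac32 t$ and a subinstance with $m_P\ge t$, the hypothesis gives $\lambda\ge m_P-t\log_{3/2}(m/t)$.
\item For $m>\frac32 t$ and a subinstance with $m_P<t$, you need only $\lambda\ge 0$. The averaged inequality together with $a_P+a_Q\ge m-t$ gives $m_P+a_Q\ge m-t$. Hence $|I|\ge a_Q>m-2t\ge m-t(\log_{3/2}(m/t)+1)$.
\item Step~6 is symmetric, using $b_P+m_Q\ge m-t$.
\end{itemize}
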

    \begin{claimproof}
    We prove by induction on the size of 
    $\iini$.
    Note that this is trivially true for the base case when 
    $|\iini| = 0$, as in this case, $\imin = 0$.
    For the inductive step, suppose that 
    $\iini$ is not empty.
    We show that during the execution of Algorithm~$T$,
    \begin{equation}
    \label{eq:tw_recursion_value}
        |I| \geq \imin - t \left(\log_{3/2} \frac{\imin}{t} +1 \right).
    \end{equation}

    Initially, $|I| = |\iini| \ge \imin$, satisfying the bound.
    In Step~3, we remove at most $|S| \leq t$ vertices.
    Thus, $|I| =  |\iini \setminus S| \geq \imin - t$, which satisfies \Cref{eq:tw_recursion_value}.

    Next, we consider the recursive call on $P$ (Step~5).
    Define $\imin^P \coloneq \min\{|\iini^P|, |\itar^P|\}$.
    Firstly, as argued above, at the beginning of the step, we have $I = \iini^P \cup \iini^Q$.    
    Secondly, as per the algorithm's description, $I \cap Q$ does not change during this step.
    Thirdly, by the definition of a $\frac{2}{3}$-balanced $\iini$-separation, we have $\frac{2}{3}\imin = \frac{2}{3}|\iini| \geq |\iini^P| \geq \imin^P$.
    Fourthly, by the inductive hypothesis, we assume that 
    \begin{equation*}
        \lambda(P, \iini^P, \itar^P) \geq \imin^P -  t \left(\log_{3/2} \frac{\imin^P}{t} +1 \right) = \imin^P -  t \log_{3/2} \frac{3\imin^P}{2t}.
    \end{equation*}
    Lastly, by the definitions of $P$ and $Q$, we have $|\itar^P| - |\iini^P| \geq |\itar^Q| - |\iini^Q|$, or equivalently, $|\itar^P| + |\iini^Q| \geq |\itar^Q| + |\iini^P|$.
    Combining this with the facts that $|\iini| = |\iini^P| + |\iini^Q| +|\iini \cap S|$ and $|\itar| = |\itar^P| + |\itar^Q| +|\itar \cap S|$, we have
    \begin{equation}
    \label{eq:PQ}
        |\itar^P| + |\iini^Q| \geq \frac{1}{2}\left( |\iini^P| + |\iini^Q| + |\itar^P| + |\itar^Q| \right)
        = \frac{1}{2}\left( |\iini| + |\itar| - 2|S| \right) \geq \imin - t.
    \end{equation}
    This implies that if $\imin^P = |\itar^P|$, then we have $\imin^P + |\iini^Q| \geq \imin - t$.
    Otherwise, $\imin^P = |\iini^P|$, and hence, $\imin^P + |\iini^Q| = |\iini^P| + |\iini^Q| = |\iini \setminus S| \geq \imin - t$.
    Combining the preceding two sentences, we have $\imin^P + |\iini^Q| \geq \imin - t$ in all cases.

    Putting all of the above together, during Step 5, we always have
    \begin{align*}
        |I| &\geq |\iini^Q| + \lambda(P, \iini^P, \itar^P) \geq |\iini^Q| + \imin^P -  t \log_{3/2} \frac{3\imin^P}{2t} \\
        &\geq \imin - t - t \log_{3/2} \frac{3\imin^P}{2t} \geq \imin - t - t \log_{3/2} \frac{\imin}{t}.
    \end{align*}
    Hence, \cref{eq:tw_recursion_value} holds during Step 5.

    Now, at the beginning of Step 6, we have $I = \itar^P \cup \iini^Q$, which is at least $\imin - t$, by \cref{eq:PQ}.
    Hence, we can follow a similar analysis as in Step 5 and obtain \cref{eq:tw_recursion_value} throughout Step 6.

    Lastly, for Step 7, since we add vertices to $I$, its size can only increase, and hence \cref{eq:tw_recursion_value} continues to hold.    
    \end{claimproof}
    
    \subparagraph{Run time.}
    It is easy to see that Steps 1--4 and 7 take polynomial time.
    Since $|\iini \cap P| \leq \frac{2}{3} |\iini|$ and $|\iini \cap Q| \leq \frac{2}{3} |\iini|$, the recursion depth is $\bigO(\log |\iini|)$.
    Hence, overall, Algorithm $T$ runs in polynomial time.
\end{proof}

\subsubsection{\texorpdfstring{\fpt}-approximation scheme for bounded-treewidth graphs}
\label{sec:algo:tw:fpt}
This subsection is dedicated to proving the following.

\begin{theorem}
    \label{thm:treewidth_PTAS}
    Let $\varepsilon$ be a positive real number and $t$ a positive integer.
    Let $(G, \iini, \itar)$ be an instance of \mmisr with $|V(G)| = n$.
    Then there exists an algorithm that 
    outputs an $(\iini,\itar)$-reconfiguration sequence $\mathscr{J}$ in time
    $f(\tw(G), \varepsilon)\cdot n^{\bigO(1)}$ for some computable function $f$ such that
    \[
        \val_G(\mathscr{J}) \geq \frac{1}{1+\varepsilon}\,\opt_G(\iini \reco \itar).
    \]
\end{theorem}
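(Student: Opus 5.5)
The plan is to split into two cases according to whether $\imin$ is large relative to $t \coloneq \tw(G)+1$ and $\varepsilon$. In the large case we apply \cref{thm:algo:treewidth}; in the small case we solve \mmisr exactly with an \fpt algorithm. First, compute a tree decomposition of width $\tw(G)$ in time $2^{\bigO(\tw(G)^3)}\cdot n$ using Bodlaender's algorithm~\cite{Bodlaender96}. Since $\opt_G(\iini \reco \itar) \le \imin$, it suffices to output a sequence of value at least $\imin/(1+\varepsilon)$ in the first case. This holds whenever
\[
t\left(\log_{3/2}\frac{\imin}{t}+1\right) \le \frac{\varepsilon}{1+\varepsilon}\,\imin .
\]
Writing $x = \imin/t$, this becomes $\log_{3/2} x + 1 \le \frac{\varepsilon}{1+\varepsilon}x$. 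The left side grows logarithmically and the right side linearly, so the inequality holds for all $x \ge c(\varepsilon)$, where $c(\varepsilon) = \bigO\bigl(\varepsilon^{-1}\log(1/\varepsilon)\bigr)$. Hence if $\imin \ge c(\varepsilon)\, t$, the polynomial-time algorithm of \cref{thm:algo:treewidth} already achieves the $(1+\varepsilon)$-approximation.

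Otherwise $\imin < c(\varepsilon)\, t$, which is bounded by a function of $\tw(G)$ and $\varepsilon$. In this case we solve \mmisr exactly in time $g(\imin, \tw(G))\cdot n^{\bigO(1)}$. The plan is to reduce optimisation to decision: for each threshold $k \in \{0,\dots,\imin\}$, decide whether there is a reconfiguration sequence all of whose sets have size at least $k$, and take the largest feasible $k$. Sequences of value $\ge k$ may be assumed to only ever shrink sets to exactly size $k$ and grow to $k+1$. So we relate the decision to ISR-TJ at size $k$, in the spirit of Kamiński et al.~\cite{kaminski2012complexity}. First delete vertices from $\iini$ down to some size-$k$ subset, then run a TJ sequence, then add vertices up to $\itar$. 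Any TAR sequence with threshold $k$ can be normalised to pass through size-$k$ sets linked by token jumps. There are at most $\binom{\imin}{k}$ choices of starting and ending subsets, which is bounded in terms of $\imin$. For each pair we invoke the \fpt algorithm for ISR-TJ parameterised by degeneracy and token count~\cite{AgrawalHM25}. Since degeneracy is at most $\tw(G)$, this runs in $h(\tw(G), \imin)\cdot n^{\bigO(1)}$ time, and we need it to output a sequence, not only decide.

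The main obstacle is the exact TAR-to-TJ equivalence at threshold $k$. A TAR sequence of value $\ge k$ need not interleave additions and removals one by one, and sets can grow large in between, where large sets are fine but the TJ correspondence must be shown. I would first prove that any TAR sequence with threshold $k$ can be converted into one alternating only between sizes $k$ and $k+1$, following the proof in~\cite{kaminski2012complexity}. From a set of size $k$, consider the first time the sequence returns to size $k$. Each intermediate set contains $k$ elements that are eventually kept, so we can greedily pick a single added vertex and a single removed vertex to form a jump. The size-$k$ endpoints are forced to be subsets of $\iini$ and $\itar$. Combining both cases gives total time $f(\tw(G),\varepsilon)\cdot n^{\bigO(1)}$.
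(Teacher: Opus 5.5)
Your two-case structure and your first case match the paper's proof: with $t=\tw(G)+1$ and $\imin\ge c(\varepsilon)\,t$, \cref{thm:algo:treewidth} already gives value at least $\imin/(1+\varepsilon)$. The gap is in the exact algorithm for small $\imin$, where your TAR/TJ correspondence is off by one. Token Jumping with $k$ tokens is what the algorithm of \cite{AgrawalHM25} decides. It only requires each size-$k$ set in the sequence to be independent. A jump $X\to Y$ may move a token onto a neighbour of the vertex it leaves, so $X\cup Y$ need not be independent. Your normalised threshold-$k$ sequences alternate between sizes $k$ and $k+1$, so they link consecutive size-$k$ sets by jumps with $X\cup Y$ independent, which is strictly stronger. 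Hence a YES answer for TJ between size-$k$ subsets of $\iini$ and $\itar$ certifies only value $k-1$, by realising each jump through $X\cap Y$. Realising it through $X\cup Y$, as your plan implicitly does, can produce non-independent sets. For a single edge $uv$ with $\iini=\{u\}$ and $\itar=\{v\}$, one-token TJ succeeds although $\opt_G(\iini\reco\itar)=0$. In general your search returns $\opt_G(\iini\reco\itar)+1$ whenever $\opt_G(\iini\reco\itar)<\imin$. If you instead realise jumps through intersections, you lose one unit when $\opt_G(\iini\reco\itar)=\imin$. Adding an isolated vertex $w$ to the example makes the optimum $1$ (via $\{u,w\},\{w\},\{w,v\}$), yet every intersection-realisation of a one-token TJ sequence passes through $\emptyset$. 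An additive loss of one is fatal for a $(1+\varepsilon)$ factor exactly in this small-$\imin$ regime.

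The missing idea is to test threshold $k$ by running TJ on sets of size $k+1$. By \cref{thm:isr_tj_tar_equiv}, two independent sets of size $k+1$ are TJ-connected iff their optimum is at least $k$. For $k<\imin$, take size-$(k+1)$ subsets of $\iini$ and $\itar$. For $k=\imin$, you must take independent \emph{supersets} of size $\imin+1$ (after trimming $\iini,\itar$ to size $\imin$). So your claim that the relevant endpoints are subsets of $\iini$ and $\itar$ fails precisely there. \cref{claim:subsets_opt} gives $\opt_G(I\reco J)=\min\{|I|,|J|,\opt_G(I'\reco J')\}$ for $I\subseteq I'$ and $J\subseteq J'$. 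Hence one arbitrary choice of subsets or supersets suffices, and enumerating pairs is unnecessary (though harmless). This is the paper's \cref{cor:equiv}. Combined with \cref{thm:ISR_FPT_degeneracy}, it yields \cref{thm:FPT_degeneracy}, the exact solver the paper uses in its second case.
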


The main idea is to combine \cref{thm:algo:treewidth} and the following theorem.

\begin{theorem}
\label{thm:FPT_degeneracy}
    Let $G$ be a $d$-degenerate $n$-vertex graph, and $\iini, \itar$ two independent sets of $G$ of size at least $\imin$.
    Then there exists an algorithm that outputs an $(\iini,\itar)$-reconfiguration sequence with value $\opt_G(\iini \reco \itar)$ that runs in time $f(d, \imin) \cdot n^{\bigO(1)}$ for some computable function $f$.
\end{theorem}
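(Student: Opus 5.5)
Our plan follows the route announced in the technical overview: first reduce exact \mmisr to a family of ISR-TJ instances, then invoke the \fpt algorithm of Agrawal et al.~\cite{AgrawalHM25} for ISR-TJ parameterised by degeneracy and token count. Note that $\opt_G(\iini \reco \itar) \le \imin$, so we may guess the optimal value $k \in \{0,1,\dots,\imin\}$ and, for each $k$ in decreasing order, decide whether a reconfiguration sequence of value at least $k$ exists. For $k=0$ the answer is trivially yes: remove everything, then add $\itar$.

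The key step is an extension of the equivalence between Token Jumping and Token Addition/Removal from~\cite{kaminski2012complexity}. We claim that a sequence of value at least $k$ exists if and only if some $k$-subset $A \subseteq \iini$ can be TJ-reconfigured to some $k$-subset $B \subseteq \itar$.

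For the backward direction, we first remove $\iini \setminus A$. Each jump $u \to v$ is simulated by removing $u$ and then adding $v$; the intermediate set has size $k-1$. To avoid this dip, we instead first add $v$ and then remove $u$. This is valid because, in a TJ sequence, the set $(I \setminus \{u\}) \cup \{v\}$ is independent, and $v$ could only conflict with $u$. If $u v$ is an edge, we need the standard trick of~\cite{kaminski2012complexity}: TAR with threshold $k-1$ is equivalent to TJ with $k$ tokens, i.e.\ the real equivalence is between TAR sequences with minimum size $\ge k$ and TJ sequences on $k+1$ tokens... I would check the exact off-by-one against~\cite{kaminski2012complexity}. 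The statement there is that TAR with lower bound $k$ between sets of size $k+1$... To sidestep the off-by-one, I would define the reduction as follows: a value-$\ge k$ sequence exists iff TAR with threshold $k$ connects some $k$-subset of $\iini$ to some $k$-subset of $\itar$. This holds because any TAR sequence can be trimmed to start and end at minimum-size subsets: remove surplus tokens first and last, and truncate additions greedily. Kamiński et al.\ show that TAR with threshold $k$ between size-$k$ sets corresponds to TJ with $k+1$... Either way, the TAR$(k)$ reachability among $k$-sets equals TJ reachability among $k$-sets up to a fixed shift, which I will pin down.

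For the forward direction, given an optimal sequence, we normalise it so that every set has size $k$ or $k+1$. This uses the standard argument: whenever the size exceeds $k+1$, delete extra tokens lazily, keeping only those needed later. The normalised sequence then reads off as a TJ sequence.

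The remaining obstacle is that we must try all pairs $(A,B)$, giving $\binom{|\iini|}{k}\binom{|\itar|}{k}$ pairs. This is bounded only if $|\iini|,|\itar|$ are bounded, but the hypothesis only says their size is at least $\imin$. We fix this by preprocessing: since only $k \le \imin = \min$ matters, we first shrink the larger set. However, removing tokens from, say, $\itar$ changes nothing because value is capped at $\imin$ anyway; that is, any sequence from $\iini$ to a $\imin$-subset of $\itar$ extends with additions. So we choose $\iini,\itar$ both of size exactly $\imin$ (the "size at least $\imin$" phrasing I read as $\imin=\min$). Then there are at most $4^{\imin}$ pairs per $k$, each solved in $g(d,k)\cdot n^{O(1)}$ time by~\cite{AgrawalHM25}, and the algorithm also returns the sequence. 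The main step to get right is the precise TAR/TJ correspondence with the threshold, which I will verify carefully.
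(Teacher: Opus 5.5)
\textbf{There is a genuine gap, at the step you yourself flagged.} Your overall route matches the paper: the Kami\'nski et al.\ TJ/TAR equivalence combined with \cite{AgrawalHM25}. The problem is the key equivalence, which is false as stated and which you never repair. You claim that value $\ge k$ holds iff some $k$-subset of $\iini$ is TJ-reconfigurable to some $k$-subset of $\itar$; the backward direction fails.

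Take $G=K_{2,2}$ with parts $L=\{u,u'\}$, $R=\{v,v'\}$, and set $\iini=L$, $\itar=R$. Every independent set lies inside $L$ or inside $R$, so every reconfiguration sequence passes through $\emptyset$ and $\opt_G(\iini\reco\itar)=0$. Yet $\{u\}\to\{v\}$ is a legal TJ move, so your decreasing search stops at $k=1$.

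In general, for $|A|=|B|=s$, a TJ sequence only certifies $\opt_G(A\reco B)\ge s-1$; this is precisely the statement of \cite{kaminski2012complexity}. A jump along an edge $uv$ cannot be simulated locally without passing through a set of size $s-1$, and the example shows this loss is real. Your algorithm therefore has additive error $1$ in both the reported value and the returned sequence.

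Your ``sidestep'' to TAR with threshold $k$ between $k$-subsets is correct. The conclusion that this equals ``TJ reachability among $k$-sets up to a shift'' is not: it corresponds to TJ with $k+1$ tokens. For the top value $k=\imin$, no $(\imin+1)$-subsets of $\iini$ or $\itar$ exist, so enumerating subset pairs can never certify $\opt=\imin$.

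\textbf{The missing ingredient} is the monotonicity identity
$\opt_G(I\reco J)=\min\{|I|,|J|,\opt_G(I'\reco J')\}$
for independent sets $I\subseteq I'$ and $J\subseteq J'$ (the paper's \cref{claim:subsets_opt}). It is proved by prepending and appending the relevant sets on either side. After trimming both sets to size $\imin$, which this identity also justifies, it gives the following.
For $k<\imin$ and \emph{arbitrary} $(k+1)$-subsets $A\subseteq\iini$, $B\subseteq\itar$, we have $\opt_G(\iini\reco\itar)\ge k$ iff $(G,A,B)$ is a YES-instance of ISR-TJ.
For $k=\imin$ you must pass to \emph{supersets}. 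Ask ISR-TJ on $(G,\iini\cup\{a\},\itar\cup\{b\})$ for any $a,b$ keeping both sets independent. If no such $a$ or $b$ exists and $\iini\neq\itar$, then $\opt_G(\iini\reco\itar)<\imin$, since the first strict change from a maximal independent set (or the last strict change into one) must be a removal (respectively, an addition).

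Because any choice of subsets or supersets works, your enumeration over $\binom{\imin}{k}^2$ pairs is unnecessary, though harmless for \fpt time. The paper exploits this to make only $O(\log\imin)$ oracle calls via binary search. With this correction, every instance passed to \cite{AgrawalHM25} has at most $\imin+1$ tokens. Converting the returned TJ sequence (insert $X\cap Y$ between consecutive sets, then prepend the original $\iini$ and append the original $\itar$) yields an optimal sequence in $f(d,\imin)\cdot n^{O(1)}$ time.
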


\cref{thm:FPT_degeneracy} above in turn relies on the equivalence between the Token Jumping and Token Addition/Removal rule, which we explain below.

\subparagraph{Equivalence between \mmisr and ISR-TJ.}
Kami\'{n}ski, Medvedev, and Milani\v{c}~\cite{kaminski2012complexity} showed the following equivalence between the Token Jumping rule and the Token Addition/Removal rule of ISR, which we phrase as follows.

\begin{theorem}[\cite{kaminski2012complexity}]
\label{thm:isr_tj_tar_equiv}
    Let $G$ be a graph, and let $\iini, \itar$ be two independent sets of $G$ with $|\iini| = |\itar| = \imin$.
    Then there exists a TJ-reconfiguration sequence from $\iini$ to $\itar$, if and only if $\opt_G(\iini \reco \itar) \geq \imin -1$.
\end{theorem}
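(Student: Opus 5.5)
We prove the two directions separately. Throughout, $|\iini| = |\itar| = \imin$, and we may assume $\iini \neq \itar$.

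\emph{($\Rightarrow$)} Let $(\iini = I^{(0)}, I^{(1)}, \dots, I^{(s)} = \itar)$ be a TJ-reconfiguration sequence. Each step has the form $I^{(i)} = (I^{(i-1)} \setminus \{u\}) \cup \{v\}$ with $|I^{(i-1)}| = |I^{(i)}| = \imin$. We replace this step by the two Token Addition/Removal moves
\[
I^{(i-1)} \supseteq I^{(i-1)} \setminus \{u\} \subseteq I^{(i)}.
\]
The middle set $I^{(i-1)} \setminus \{u\}$ is a subset of an independent set, so it is independent, and it has size $\imin - 1$. Concatenating these replacements over all steps gives an $(\iini,\itar)$-reconfiguration sequence whose every set has size at least $\imin - 1$. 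Hence $\opt_G(\iini \reco \itar) \geq \imin - 1$.

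\emph{($\Leftarrow$)} Take a reconfiguration sequence of value at least $\imin - 1$. By splitting multi-vertex steps, we may assume each step adds or removes a single vertex; this does not lower the minimum size. Every set then has size in $\{\imin - 1, \imin, \imin + 1, \dots\}$. The plan is to convert this sequence into a TJ-sequence on sets of size exactly $\imin$, by the classical argument of Kami\'{n}ski, Medvedev, and Milani\v{c}.

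The cleanest way is to prove, by induction along the sequence, the following invariant: for each set $I^{(j)}$ with $|I^{(j)}| \geq \imin$, there is an independent set $K_j \subseteq I^{(j)}$ of size exactly $\imin$ that is TJ-reachable from $\iini$. For the base case take $K_0 = \iini$. At the end, since $|\itar| = \imin$, we have $K_s = \itar$, which gives the required TJ-sequence.

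The inductive step splits by the type of move.
\begin{itemize}
\item \emph{Addition $I^{(j)} \to I^{(j)} \cup \{v\}$ with $|I^{(j)}| \geq \imin$.} Keep $K_{j+1} = K_j$, since $K_j \subseteq I^{(j)} \cup \{v\}$.
\item \emph{Removal of $u$ with $|I^{(j)}| - 1 \geq \imin$.} If $u \notin K_j$, keep $K_j$. Otherwise $|I^{(j)}| \geq \imin + 1 > |K_j|$, so there is some $w \in I^{(j)} \setminus K_j$ with $w \neq u$. The set $(K_j \setminus \{u\}) \cup \{w\}$ lies inside $I^{(j)}$, so it is independent, and it is obtained from $K_j$ by one TJ step.
\item \emph{Dip to size $\imin - 1$.} A set of size $\imin - 1$ can only be entered from a set $I$ of size $\imin$ by removing a vertex $u$. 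It must then be left by adding a vertex $v$, giving a set $I'$ of size $\imin$, since no set of size $\imin - 2$ may appear. In this situation $K = I$ (as $|I| = \imin$), and $I' = (I \setminus \{u\}) \cup \{v\}$ is a single TJ step from $I$, or $I' = I$ if $v = u$. So we set $K' = I'$.
\end{itemize}
The expected main obstacle is making this case analysis watertight when the sequence passes through size $\imin - 1$, since the invariant is only defined for sets of size at least $\imin$. Treating each dip as a remove-then-add pair, as in the third case, is what handles it.

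Both implications together prove the theorem.
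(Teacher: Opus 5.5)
Your proof is correct. The ($\Rightarrow$) direction is exactly the observation the paper uses in the proof of \cref{cor:equiv}, namely inserting $X \cap Y$ between consecutive sets of the TJ-sequence. For ($\Leftarrow$) the paper gives no argument and simply cites Kami\'{n}ski, Medvedev, and Milani\v{c}. Your invariant argument is therefore a self-contained proof of the part the paper takes on citation, not a variant of anything in the text.

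Your case analysis is complete. After splitting into single-vertex moves, a removal from a set of size at least $\imin+1$ is handled by swapping in a spare vertex $w \in I^{(j)} \setminus K_j$. Such a $w$ is automatically distinct from $u$ because $u \in K_j$. Every visit to size $\imin-1$ is preceded and followed by sets of size exactly $\imin$, because both endpoints have size $\imin$ and no set smaller than $\imin-1$ occurs. At those flanking sets $K$ is forced to be the whole set, so each remove--add pair collapses to a single jump or to no change.

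A side benefit of your formulation is that it is a single forward pass. It turns a Token Addition/Removal sequence of value at least $\imin-1$ into a TJ-sequence with at most as many jumps as there were addition/removal moves, in time linear in the sequence length. So both directions of the equivalence are explicitly constructive.
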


We can extend the result above to a reduction between ISR-TJ and \mmisr in the sense that if we can solve one problem, then we can solve the other with a similar run time.
\cref{thm:isr_tj_tar_equiv} directly implies that an algorithm to solve \mmisr can be used to decide ISR-TJ.
For the other direction, we use the following lemma.

\begin{lemma}
\label{claim:subsets_opt}
    Let $G$ be a graph, and let $I, J, I', J'$ be its independent sets such that $I \subseteq I'$ and $J \subseteq J'$.
    Then $\opt_G(I \reco J) = \min\{|I|, |J|, \opt_G(I' \reco J')\}.$
\end{lemma}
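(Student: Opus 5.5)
We prove the two inequalities separately.

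\emph{Upper bound.} Trivially $\opt_G(I \reco J) \le \min\{|I|,|J|\}$, since both endpoints appear in every reconfiguration sequence. It remains to show $\opt_G(I \reco J) \le \opt_G(I' \reco J')$. Take an optimal $(I,J)$-reconfiguration sequence $\mathscr{S}$ and extend it to an $(I',J')$-sequence: first remove the vertices of $I' \setminus I$ one by one to go from $I'$ to $I$, then follow $\mathscr{S}$, then add the vertices of $J' \setminus J$ to reach $J'$. Every intermediate set in the prefix contains $I$, and every intermediate set in the suffix contains $J$. Hence these sets are subsets of the independent sets $I'$ or $J'$, so they are independent, and their sizes are at least $|I|$ or $|J|$ respectively. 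Both of these are at least $\val_G(\mathscr{S})$. Thus $\opt_G(I'\reco J') \ge \val_G(\mathscr{S}) = \opt_G(I\reco J)$.

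\emph{Lower bound.} Let $\mathscr{S}' = (I'=I^{(0)},\dots,I^{(s)}=J')$ be optimal for $(I',J')$, and set $\beta \coloneq \min\{|I|,|J|,\opt_G(I'\reco J')\}$. We want an $(I,J)$-sequence of value at least $\beta$. The natural attempt is to add $I'\setminus I$ to $I$, follow $\mathscr{S}'$, and then remove $J'\setminus J$. This fails because the intermediate sets near the start stay large, but the transition requires passing through $I'$ and $J'$, which are only reached from $I$ and $J$ by additions and removals, and those are fine. Concretely: go from $I$ to $I'$ by adding vertices, which keeps every set a subset of $I'$, hence independent, and of size at least $|I|$. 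Then follow $\mathscr{S}'$, where every set has size at least $\opt_G(I'\reco J')$. Finally go from $J'$ to $J$ by removing vertices, keeping sizes at least $|J|$. Every set along the way therefore has size at least $\beta$, which gives $\opt_G(I\reco J) \ge \beta$.

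Combining the two directions yields equality. The only subtle point is the upper bound $\opt_G(I\reco J)\le\opt_G(I'\reco J')$. Here one must check that the monotone prefix and suffix do not lower the minimum below $\val_G(\mathscr{S})$, and this holds because $|I|$ and $|J|$ are themselves terms of $\mathscr{S}$. Everything else is routine.
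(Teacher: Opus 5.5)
Your proof is correct and is essentially the paper's argument: in each direction you pad an optimal sequence with the monotone passage between $I$ and $I'$ (and between $J$ and $J'$), and combine this with the trivial bound $\opt_G(I\reco J)\le\min\{|I|,|J|\}$. The one blemish is the remark in the lower-bound paragraph that the ``natural attempt'' fails; the construction you then carry out is exactly that attempt and it works, so that sentence should be deleted.
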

\begin{proof}
    On the one hand, any $(I',J')$-reconfiguration sequence can be turned into a $(I, J)$-reconfiguration sequence by appending $I$ at the beginning and $J$ at the end.
    Therefore, 
    \begin{equation*}
        \opt_G(I \reco J) \geq \min\{|I|, |J|, \opt_G(I' \reco J')\}.
    \end{equation*}
    On the other hand, any $(I, J)$-reconfiguration sequence can be turned into a $(I',J')$-reconfiguration sequence by appending $I'$ at the beginning and $J'$ at the end.
    Hence, 
    \begin{equation*}
        \opt_G(I' \reco J') \geq \opt_G(I \reco J).
    \end{equation*}
    The lemma then follows from the two expressions above and the fact that $\min\{|I|,|J|\} \geq \opt_G(I \reco J)$.
\end{proof}

\cref{thm:isr_tj_tar_equiv} and \cref{claim:subsets_opt} imply that using an algorithm for ISR-TJ as a subroutine, we can solve \mmisr in a binary search fashion, as follows.

\begin{corollary}
\label{cor:equiv}
    Let $G$ be a graph, and let $\iini$ and $\itar$ be two independent sets of $G$.
    Define $\imin \coloneq \min \{ |\iini|, |\itar|\}$.
    Suppose there exists an algorithm~$A$ that solves ISR-TJ.
    Then there exists an algorithm~$B$ that solves the \mmisr instance $(G, \iini, \itar)$ by calling algorithm~$A$ $\bigO(\log \imin)$ times, with each call on an input of the form $(G, \iini',\itar')$ for some independent sets $\iini', \itar'$ of $G$ such that $|\iini'|, |\itar'| \leq \imin+ 1$.
    
    Moreover, if algorithm~$A$ additionally outputs a TJ-reconfiguration sequence in the case of a YES-instance, then algorithm~$B$ also outputs a $(\iini,\itar)$-reconfiguration sequence with value $\val_G(\iini \reco \itar)$.
\end{corollary}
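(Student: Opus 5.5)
The plan is a binary search over the target value $k \in \{0,1,\dots,\imin\}$. For each candidate $k$ we decide whether $\opt_G(\iini \reco \itar) \geq k$ using a single call to algorithm~$A$. Monotonicity of the predicate ``$\opt \geq k$'' in $k$ then gives the $\bigO(\log \imin)$ bound on the number of calls.

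The test for a fixed $k$ would go as follows. The cases $k=0$ and $k\le 1$ are trivial, since removing everything and then adding everything back always has value $0$. For $k \geq 1$, fix arbitrary subsets $\iini' \subseteq \iini$ and $\itar' \subseteq \itar$ with $|\iini'| = |\itar'| = k+1$ when $k+1 \le \imin$. We call $A$ on $(G,\iini',\itar')$. By \cref{thm:isr_tj_tar_equiv}, $A$ answers YES if and only if $\opt_G(\iini' \reco \itar') \geq k$. By \cref{claim:subsets_opt}, applied with $I=\iini'$, $J=\itar'$, $I'=\iini$, $J'=\itar$, we get
\[
\opt_G(\iini'\reco\itar') = \min\{k+1, \opt_G(\iini\reco\itar)\}.
\]
Hence $\opt_G(\iini' \reco \itar') \geq k$ if and only if $\opt_G(\iini\reco\itar)\geq k$, which is exactly the test we need.

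The boundary case $k = \imin$ needs separate treatment, because then $k+1$ may exceed one of $|\iini|,|\itar|$. In that case we want to check whether $\opt = \imin$. Assume $|\iini| \le |\itar|$. Take $\iini' \coloneq \iini$, and add one vertex to it if possible, i.e.\ if some vertex can be added while keeping independence. This is where the allowance $|\iini'|,|\itar'| \le \imin + 1$ in the statement comes from. If no vertex can be added, then $\iini$ is maximal, so the first move must be a removal, and therefore $\opt \le \imin - 1$ unless $\iini \subseteq \itar$. The case $\iini \subseteq \itar$ is handled directly by \cref{claim:subsets_opt}. I expect this boundary case to be the main fiddly point, and would resolve it by checking directly whether $\imin$ is attainable; equivalently, one can just search $k$ up to $\imin-1$ and treat $\opt=\imin$ separately by such a maximality argument.

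For the ``moreover'' part, at the largest $k$ for which $A$ answers YES, $A$ returns a TJ-sequence from $\iini'$ to $\itar'$. We convert it into a TAR sequence by replacing each jump with ``add the new vertex, remove the old one'', or ``remove then add'' if the sizes force that; this keeps every set at size $\ge k$, following the construction in \cite{kaminski2012complexity}. Finally, we prepend the removals taking $\iini$ down to $\iini'$ and append the additions taking $\itar'$ up to $\itar$, exactly as in the proof of \cref{claim:subsets_opt}. The resulting sequence has value $\min\{|\iini'|,|\itar'|,k\} = k = \opt_G(\iini\reco\itar)$.
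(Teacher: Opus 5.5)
Your main argument is the paper's. Your test for $\opt_G(\iini\reco\itar)\ge k$ via subsets of size $k+1$ is the paper's Step~4 with $p=k+1$, and your derivation from \cref{thm:isr_tj_tar_equiv} and \cref{claim:subsets_opt} is correct for every $k\le\imin-1$. That includes $k=1$, so drop the remark that $k\le 1$ is trivial: for $K_{2,2}$ with $\iini=L$ and $\itar=R$ the optimum is $0$.

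The gap is the boundary case $k=\imin$, which as written is not closed. Your fallback, ``treat $\opt=\imin$ separately by a maximality argument'', only certifies $\opt<\imin$ when some endpoint cannot be extended. When both endpoints can be extended, it decides nothing. For example, add an isolated vertex $z$ to $K_{2,2}$: both $L$ and $R$ extend by $z$, yet the optimum is $1<2$.

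You do have the right object, a one-vertex extension. The missing steps are the ones the paper's Steps~1--2 supply:
\begin{itemize}
\item First shrink to $|\iini|=|\itar|=\imin$; the optimum is preserved by \cref{claim:subsets_opt}.
\item Extend \emph{both} endpoints by an arbitrary vertex to get $I_1\supset\iini$ and $I_2\supset\itar$. You only extend $\iini$.
\item Call $A$ on $(G,I_1,I_2)$.
\item Justify that the choice of added vertices is irrelevant. Apply \cref{claim:subsets_opt} in the superset direction, with $(I,J,I',J')=(\iini,\itar,I_1,I_2)$, to get $\opt_G(\iini\reco\itar)=\min\{\imin,\opt_G(I_1\reco I_2)\}$.
\end{itemize}
Combined with \cref{thm:isr_tj_tar_equiv} at size $\imin+1$, $A$ then answers YES iff $\opt_G(\iini\reco\itar)=\imin$. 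If an extension of either endpoint does not exist, your maximality argument applied to the first or last step gives $\opt<\imin$.

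For the ``moreover'' part, the rule ``add the new vertex, then remove the old one'' is not valid in general. A TJ step may move a token from $x$ to a neighbour $y$ of $x$, and then $X\cup Y$ is not independent. What forces remove-then-add is adjacency, not sizes. The construction of \cite{kaminski2012complexity} used in the paper always passes through $X\cap Y$, which is exactly why the value is $k$ rather than $k+1$.

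In the boundary case the padding also differs from what you describe. After removing down to size $\imin$, you add one vertex to reach $I_1$. At the end, you remove one vertex from $I_2$ before adding back up to $\itar$.
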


\begin{proof}
    Without loss of generality, we assume that $|\iini| = |\itar| = \imin$.
    Indeed, if $|\iini| \neq |\itar|$, we can take subsets $\iini' \subseteq \iini$ and $\itar' \subseteq \itar$ such that $|\iini'| = |\itar'| = \min\{|\iini|, |\itar|\}$.
    It can be seen that $\opt_{G}(\iini \leftrightsquigarrow \itar) = \opt_{G}(\iini^\prime \leftrightsquigarrow \itar^\prime) $.
    
    Consider the algorithm $B$ described as follows.
    \begin{description}
        \item[Step 1.] Compute two arbitrary independent sets $I_1$ and $I_2$ such that $|I_1| = |I_2| = \imin+1$, $I_1 \supset \iini$, and $I_2 \supset \itar$.
        If no such sets exist, go to Step~3.
        \item[Step 2.] Call Algorithm~A on input $(G, I_1, I_2)$. If it outputs Yes, return $\imin$ and stop processing further.
        \item[Step 3.] Set $\ell \coloneq 1$, $r \coloneq \imin$, and $s \coloneq 0$. 
        \item[Step 4.] While $\ell < r$,
        \item[Step 4.1.] Let $p = \lceil \frac{\ell+r}{2} \rceil$.
        \item[Step 4.2] Compute two arbitrary subsets $I'_1$ and $I'_2$ of $\iini$ and $\itar$, respectively, such that $|I'_1| = |I'_2|= p$.
        \item[Step 4.3.] Call Algorithm~A on input $(G, I'_1, I'_2)$. If it outputs Yes, set $s \coloneq p-1$ and $\ell \coloneq p + 1$. Else, set $r \coloneq p - 1$.
        \item[Step 5] Return $s$.
    \end{description}

    It is easy to see that algorithm~$B$ above calls Algorithm~$A$ $\bigO(\log \imin)$ times.
    Hence, it remains to show its correctness.
    
    If algorithm~A outputs Yes in Step 2, then $\opt_G(I_1 \reco I_2) \geq \imin$.
    Combined with \cref{claim:subsets_opt}, this implies that $\opt_G(\iini \reco \itar) = \imin$, which is the value that the procedure returns.
    Otherwise, we then have $\opt_G(I_1 \reco I_2) < \imin$, and hence by \cref{claim:subsets_opt}, we obtain $\opt_G(\iini \reco \itar) < \imin$.
    
    Next, in Step 4.3, if algorithm~$A$ outputs No, then $\opt_G(I'_1 \reco I'_2) < p-1$.
    Combined with \cref{claim:subsets_opt}, we then have $\opt_G(\iini \reco \itar) < p-1$.
    Otherwise, if algorithm~$A$ outputs Yes at this step, then $\opt_G(I'_1 \reco I'_2) \geq p-1$, and again by \cref{claim:subsets_opt}, we conclude that $\opt_G(\iini \reco \itar) \geq p-1$.
    From these observations, combined with the fact that $0 \leq \opt_G(\iini \reco \itar) \leq \imin - 1$, we can easily conclude that the procedure correctly outputs $\opt_G(\iini \reco \itar)$ in Step 5.

    Finally, as observed in~\cite{kaminski2012complexity}, from a TJ-reconfiguration sequence $\mathscr{J}$ between two independent sets $I$ and $J$ of size $s$, we can construct an $(I,J)$-reconfiguration sequence of value $s-1$ with length $2|\mathscr{J}|-1$.
    We can do this by simply inserting $X \cap Y$ between every two consecutive elements $X$ and $Y$ of $\mathscr{J}$.
    Using this observation, it is easy to see that algorithm $B$ above can be modified to output an optimal reconfiguration sequence for the \mmisr problem when algorithm $A$ outputs a TJ-reconfiguration sequence in the case of a YES ISR-TJ instance.
\end{proof}

\subparagraph{\fpt algorithm for \mmisr by degeneracy.}
Besides the equivalence discussed above, another ingredient for the proof of \cref{thm:FPT_degeneracy} is the following recent result.

\begin{theorem}[\cite{AgrawalHM25}]\label{thm:ISR_FPT_degeneracy}
    Let $G$ be a $d$-degenerate $n$-vertex graph, and $\iini, \itar$ two independent sets of $G$ of size $\imin$.
    Then there exists an algorithm that decides the ISR-TJ instance $(G, \iini, \itar)$ in time $f(\imin, d) \cdot n^{\bigO(1)}$ for some computable function $f$.
    Moreover, if $(G, \iini, \itar)$ is a YES-instance, the algorithm also outputs a TJ-reconfiguration sequence from $\iini$ to $\itar$.
\end{theorem}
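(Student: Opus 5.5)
Since \cref{thm:ISR_FPT_degeneracy} is quoted from \cite{AgrawalHM25}, I will only sketch how I would reprove it. The plan is to kernelise: reduce $(G,\iini,\itar)$ to an equivalent ISR-TJ instance whose size is bounded by a function of $k \coloneq \imin$ and $d$. On that instance I would brute-force the configuration graph, whose vertices are the independent sets of size $k$ and whose edges are token jumps. It has at most $\binom{n'}{k}$ vertices, so BFS decides reachability in $f(k,d)$ time. BFS also returns a TJ-reconfiguration sequence, which I then lift back to $G$.

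\emph{Step 1 (buffer rule).} Let $R \coloneq \iini \cup \itar$, so $|R| \le 2k$. Suppose $G - N[R]$ contains an independent set $W$ of size $k$. Then the instance is YES. Indeed, no vertex of $W$ is adjacent to any vertex of $R$, so every intermediate set below is independent. First jump the tokens of $\iini \setminus \itar$ one at a time onto distinct vertices of $W$. Then jump them from $W$ onto $\itar \setminus \iini$. A $d$-degenerate graph on $m$ vertices has an independent set of size at least $m/(d+1)$, found greedily. So the rule either applies, or $|V(G)\setminus N[R]| < (d+1)k$. In the latter case the ``core'' $C \coloneq V(G) \setminus N(R)$ has size bounded by $2k+(d+1)k$.

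\emph{Step 2 (bounding $N(R)$).} I would partition $N(R)$ into at most $2^{|C|}$ classes according to $N(v)\cap C$. Inside a class, I keep only a bounded number of representatives, namely those from a greedily computed maximal family of pairwise non-adjacent low-degree vertices. The point is that degeneracy forces all but at most $O(dk)$ vertices of a huge class to have few neighbours among any fixed set of $k$ vertices. The reduction rule then reads: if some class $\mathcal{C}$ contains an independent set $M$ of size $g(k,d)$, delete a vertex $v \in \mathcal{C}\setminus M$ that is non-adjacent to $M$, or otherwise a vertex of large degree into $\mathcal{C}$.

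\emph{Safety of Step 2.} This is the main obstacle. I must show that any TJ-sequence that places a token on a deleted vertex $v$ can be rerouted onto some $u \in M$, where $u$ has the same neighbourhood in $C$ and no neighbour among the at most $2k$ vertices occupied just before and after. The difficulty is that a single token vertex may be adjacent to many vertices of $M$. I would handle this with the counting fact that in a $d$-degenerate graph, at most $O(dk)$ vertices $x$ have $|N(x)\cap M| > 2d$. So if $g(k,d) \gg dk^2$, some $u\in M$ avoids all current tokens, provided those tokens are themselves low-degree into $M$. To ensure that, I would first mark and keep all high-degree vertices of each class, of which there are only $O(dk)$, and argue that rerouting never needs to touch them. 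I would iterate rerouting along the sequence, taking care that one $u$ serves the whole time interval during which the token sits on $v$.

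\emph{Step 3 (conclusion).} After exhaustive application of both rules, the graph has at most $|C|+2^{|C|}\cdot g(k,d)$ vertices, which is bounded by some $h(k,d)$. BFS on the configuration graph then decides the instance. Finally I translate the sequence back: reductions only delete vertices, so a found sequence is already valid in $G$. Together with \cref{cor:equiv}, this also yields \cref{thm:FPT_degeneracy}.
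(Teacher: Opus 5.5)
Note first that the paper does not reprove this statement. It only remarks that it follows from Theorem~1.2 and the proof of Lemma~7.1 of \cite{AgrawalHM25}, so a self-contained kernel would be a genuinely different route. Your Step~1, the final brute force and the lifting back to $G$ are fine.

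\textbf{The gap is in Step~2, and the rule as stated is unsafe.} Your classes only fix $N(\cdot)\cap C$, so they say nothing about adjacency to vertices outside $C$. A token sitting on such a vertex $h$, adjacent to all of $M$ but not to $v$, blocks every rerouting of the token on $v$, and keeping $h$ does not help. Concretely, take the following instance.
\begin{itemize}
\item $k=2$, $\iini=\{a_1,a_2\}$, $\itar=\{b_1,b_2\}$, with all four edges $a_ib_j$.
\item Extra vertices $h,v,m_1,\dots,m_g$, with edges $ha_1,hb_1,hb_2$ and $va_1,va_2,vb_2$.
\item Edges $m_ia_1,m_ia_2,m_ib_2,m_ih$ for all $i$.
\end{itemize}
This graph is $4$-degenerate, and $V(G)\setminus N[R]=\emptyset$, so Step~1 does not fire. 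The vertex $v$ and every $m_i$ have the same neighbourhood in $C$, whether or not you put $R$ into $C$. Moreover, $v$ has degree $3$, and $M=\{m_1,\dots,m_g\}$ is independent with no edges to $v$. So a legitimate application of your rule deletes $v$.

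Yet $\{a_1,a_2\},\{a_2,h\},\{h,v\},\{v,b_1\},\{b_1,b_2\}$ is a TJ-sequence in $G$. In $G-v$, the non-neighbours of $a_1$, $a_2$, $h$ are only $\{a_2\}$, $\{a_1,h\}$, $\{a_2\}$, respectively. Hence only $\{a_2,h\}$ is reachable from $\iini$, and a YES-instance has become a NO-instance.

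\textbf{Two further steps of the safety argument are also wrong.}
\begin{itemize}
\item The counting fact fails with threshold $2d$. Disjoint stars with centres outside $M$ and $2d+1$ leaves in $M$ give $\lfloor |M|/(2d+1)\rfloor$ such vertices in a forest, which is not bounded in $k$ and $d$. What is true, from $e(M,Z)\le d(|M|+|Z|)$, is that only $O(dk)$ vertices have more than $|M|/(2k)$ neighbours in $M$.
\item A single $u$ serving the whole interval need not exist, because the other tokens may move unboundedly often in the meantime. The substitute token must be allowed to hop to a fresh representative (an extra TJ move) before each conflicting move.
\end{itemize}

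\textbf{A repair.} The deleted vertex must look like almost all representatives to \emph{every} vertex of $G$, not just to $C$. One way to achieve this:
\begin{itemize}
\item Fix a degeneracy ordering, so each $u$ has a set $N^+(u)$ of at most $d$ later neighbours, and let $p=k(d+2)+2$.
\item If $|V(G)\setminus R|>p\cdot d!\,p^d$, the sunflower lemma (allowing repeated sets) gives $u_1,\dots,u_p\notin R$ whose sets $N^+(u_i)$ form a sunflower with core $Y$.
\item Vertices of $Y$ are adjacent to all $u_i$, so they never carry a token together with $u_1$.
\item Every vertex outside $Y$ is adjacent to at most $d+1$ of the $u_i$: it lies in at most one petal, and its own $N^+$ contains at most $d$ of them.
\end{itemize}
Hence at every moment some $u_j\neq u_1$ is neither equal nor adjacent to the $k-1$ other tokens, nor to the vertex the next of them jumps to. Letting the substitute hop between such $u_j$ shows that $G-u_1$ is equivalent to $G$. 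Iterating bounds $|V(G)|$ by a function of $k$ and $d$, with no need for the buffer rule or the classes, and your Step~3 then applies verbatim.
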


While \cref{thm:ISR_FPT_degeneracy} is not stated explicitly in \cite{AgrawalHM25}, it can easily be derived from Theorem~1.2 and the proof of Lemma~7.1 in the paper.

\begin{proof}[Proof of \cref{thm:FPT_degeneracy}]
    This theorem is a direct consequence of \cref{cor:equiv} and \cref{thm:ISR_FPT_degeneracy}.
\end{proof}

\subparagraph{\fpt-approximation scheme for \mmisr.}
We are now ready to prove \cref{thm:treewidth_PTAS}.

\begin{proof}[Proof of \cref{thm:treewidth_PTAS}]
    Define $t \coloneq \tw(G)+1$.
    If $0 < \frac{\imin}{\imin - t \left(\log_{3/2} (\imin/t) +1 \right)} \leq 1 + \varepsilon$, then we can use the algorithm in \cref{thm:algo:treewidth} to compute an $(\iini, \itar)$-reconfiguration sequence with value at least $\frac{1}{1+\varepsilon}\opt_G(\iini \reco \itar)$ in polynomial time, provided that the tree decomposition is given.
    If this tree decomposition is not given, we can precompute it in \fpt time~\cite{Bodlaender96}.
    
    Otherwise, we have $\frac{\imin}{\imin - t \left(\log_{3/2} (\imin/t) +1 \right)} > 1 + \varepsilon$ or $\imin - t \left(\log_{3/2} (\imin/t) +1 \right) \leq 0$. 
    In either case, $\imin$ is bounded from above by $h(t, \varepsilon)$ for some computable function $h$. 
    Since $\tw(G) = t-1$, $G$ is $(t-1)$-degenerate.
    \cref{thm:FPT_degeneracy} then implies
    that we can compute the optimal $(\iini,\itar)$-reconfiguration sequence in time $g(\varphi, t)\cdot n^{\bigO(1)}$ for some computable function $g$.
    Since $\varphi \leq h(t, \varepsilon)$, it means that this can be done in time $f'(\varphi, t) \cdot n^{\bigO(1)}$ for some computable function $f'$.
\end{proof}

\subsection{\texorpdfstring{$H$}-minor-free graphs}

In this subsection, we present an \fpt-approximation scheme for \mmisr on $H$-minor-free graphs.
Our main result is the following theorem.

\begin{theorem}
    \label{thm:minor_free}
    Let $\varepsilon > 0$ and let $H$ be a graph.
    There exists an algorithm that, given an $H$-minor-free graph $G$ on $n$ vertices and two independent sets $\iini$ and $\itar$ of $G$, outputs an $(\iini,\itar)$-reconfiguration sequence $\mathscr{J}$ in time
    $f(\varepsilon,|V(H)|)\cdot n^{O(1)}$ for some computable function $f$ such that
    \[
        \val_G(\mathscr{J}) \geq \frac{1}{1+\varepsilon}\,\opt_G(\iini \reco \itar).
    \]
\end{theorem}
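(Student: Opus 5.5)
We follow the Baker-type strategy sketched in \cref{sec:tech_overview}, using the decomposition theorem of Demaine, Hajiaghayi, and Kawarabayashi~\cite{DemaineHK05bidimension}. That theorem states that for every fixed $H$ and every integer $k \ge 1$, the vertex set of an $H$-minor-free graph $G$ can be partitioned in polynomial time into $V_1,\dots,V_{k+1}$ such that for each $i$ the graph $G_i \coloneq G - V_i$ has treewidth at most $c_H \cdot k$, where $c_H$ depends only on $H$. Set $\eta \coloneq \max\{|\iini|,|\itar|\}$. Since $\sum_i (|\iini\cap V_i| + |\itar \cap V_i|) \le 2\eta$, \cref{lem:pigeonhole} gives an index $j$ with $|\iini \cap V_j| + |\itar\cap V_j| \le \frac{2\eta}{k+1}$. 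We find $j$ by trying all $k+1$ indices. Put $\iini' \coloneq \iini \setminus V_j$ and $\itar' \coloneq \itar\setminus V_j$, both independent sets of $G_j$.

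The algorithm has three phases. First, remove $\iini\cap V_j$ from $\iini$. Second, run the FPT-AS of \cref{thm:treewidth_PTAS} on $(G_j,\iini',\itar')$ with error parameter $\varepsilon'$. Third, add $\itar\cap V_j$. Every set produced in the second phase lies in $V(G_j)$ and is independent in $G_j$, hence in $G$. The first and third phases only shrink $\iini$ or only grow from $\itar'$, so the concatenation is a valid $(\iini,\itar)$-reconfiguration sequence. The running time is $(k+1)\cdot f(c_H k,\varepsilon')\cdot n^{O(1)}$ plus the polynomial decomposition time, which is of the required form once $k$ and $\varepsilon'$ are chosen as functions of $\varepsilon$ and $H$.

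For the value bound, the first phase never goes below $|\iini'| \ge \min\{|\iini|,|\itar|\} - \frac{2\eta}{k+1}$. The third phase never goes below $|\itar'|$, which satisfies the same bound. The key comparison is for the second phase: we must relate $\opt_{G_j}(\iini'\reco\itar')$ to $\opt_G(\iini\reco\itar)$. Take an optimal sequence in $G$ and intersect every set with $V(G_j)$. Intersection with a fixed set preserves both the inclusion relation between consecutive sets and independence. It yields an $(\iini',\itar')$-sequence in $G_j$ in which each set loses at most $|V_j \cap I|$ vertices.

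This step is the main obstacle: the intermediate sets of the optimal sequence may have many vertices in $V_j$, and the pigeonhole choice of $j$ was made only with respect to $\iini$ and $\itar$.

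Two fixes are available.

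\emph{Fix 1 (averaging over an optimal sequence).} Run the algorithm for every $j\in[k+1]$ and output the best sequence. Then argue by averaging that some $j$ intersects every set of an optimal sequence in few vertices. This fails, because different sets may concentrate on different $V_j$.

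\emph{Fix 2 (weaker target).} Observe that $\opt_G \le \min\{|\iini|,|\itar|\} \le \eta$, and aim only for the bound $\val \ge \opt_G - O(\eta/k)$. This needs $\opt_{G_j}(\iini'\reco\itar') \ge \opt_G - O(\eta/k)$, which is again the hard comparison.

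Fix 2 alone is therefore insufficient in general when $\opt_G \ll \eta$. We would pair it with the degeneracy algorithm, \cref{thm:algo:degeneracy}. $H$-minor-free graphs are $d_H$-degenerate~\cite{Thomason2001}, so that algorithm gives value at least $\min\{|\iini|,|\itar|\}/d_H - 1$. This constant-factor handle is used to reduce to the regime $\opt_G = \Theta_H(\min\{|\iini|,|\itar|\})$, and also to handle small $\eta$ exactly through \cref{thm:FPT_degeneracy}. In that regime the loss of $\frac{2\eta}{k+1}$ in the endpoints must still be controlled. For this we would first shrink $\iini$ and $\itar$ to size $\min\{|\iini|,|\itar|\}$, which does not change $\opt$ by \cref{claim:subsets_opt}, so that $\eta \le O_H(\opt_G)$.

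For the intermediate sets, we would try to prove that some $j$ works for the whole optimal sequence. The plan is to pick, for each $i$, the best index as in the proof of \cref{lem:computing_ganma_seq}, and to patch transitions between different indices at endpoints through sets avoiding both layers. If this patching fails, we would instead choose $k \approx d_H/\varepsilon$ and, as a fallback, accept the endpoint-only guarantee with the stated $1+\varepsilon$ factor.
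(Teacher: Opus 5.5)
Your proposal has a genuine gap: it never establishes the value bound for the second phase. You correctly identify the obstacle, namely comparing $\opt_{G_j}(\iini'\reco\itar')$ with $\opt_G(\iini\reco\itar)$. However, Fix~1 fails (as you say), and Fix~2 only restates that comparison. The patching plan is unproved and also non-algorithmic: choosing a best layer $V_j$ for each set of an optimal sequence is an existential argument about a sequence your algorithm does not have. The fallback of ``accepting the endpoint-only guarantee'' says nothing about the intermediate sets. So nothing in the proposal actually yields the factor $1+\varepsilon$.

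The missing idea is that you never need to compare with $\opt_G$ through intermediate sets. The reason is that \cref{thm:algo:treewidth} gives an \emph{absolute} guarantee, measured against $\varphi \coloneq \min\{|\iini'|,|\itar'|\}$ rather than against $\opt_{G_j}$. It outputs a sequence in $G_j$ of value at least $\varphi - t(\log_{3/2}(\varphi/t)+1)$, where $t = \tw(G_j)+1 \le c_H k+1$. After shrinking to $|\iini|=|\itar|=\eta$, you have the trivial bound $\opt_G(\iini\reco\itar)\le\eta$, and your choice of $j$ gives $\varphi \ge (1-\frac{2}{k+1})\eta$. Fix $\varepsilon'<\varepsilon$ and $k$ with $(1-\frac{2}{k+1})/(1+\varepsilon') \ge 1/(1+\varepsilon)$. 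Then fix a threshold $N \ge t$ such that $x - t(\log_{3/2}(x/t)+1) \ge x/(1+\varepsilon')$ for all $x \ge N$; $N$ is bounded by a function of $\varepsilon$ and $H$ because $t \le c_H k+1$.

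If $\varphi \ge N$, running \cref{thm:algo:treewidth} in phase two gives value at least $\varphi/(1+\varepsilon') \ge \eta/(1+\varepsilon) \ge \opt_G(\iini\reco\itar)/(1+\varepsilon)$, and phases one and three never drop below $\varphi$. If $\varphi<N$, then $\eta < 2N$ (as $k\ge 3$). In that case you solve the original instance $(G,\iini,\itar)$ exactly by \cref{thm:FPT_degeneracy}, since $H$-minor-free graphs have degeneracy bounded in terms of $|V(H)|$.

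The paper instead splits on the sign of $g(\varphi)=\frac{\varepsilon'}{1+\varepsilon'}\varphi - t(\log_{3/2}(\varphi/t)+1)$. A threshold $N\ge t$ is safer: for $\varphi<t$ the logarithm is negative, and the bound of \cref{thm:algo:treewidth} can then exceed the true optimum, e.g.\ for $K_{a,a}$ with $\iini,\itar$ its two sides.

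This argument also disposes of your ``hard comparison'' in the large regime, since $\opt_{G_j}(\iini'\reco\itar')$ is at least that absolute bound. It also makes your detour through \cref{thm:algo:degeneracy} unnecessary: aiming at $(1-O(1/k))\eta$ instead of $\opt_G - O(\eta/k)$ gives a multiplicative guarantee directly.
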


The proof of \Cref{thm:minor_free} follows the framework underlying Baker's technique, which yields PTASs for a wide range of combinatorial optimisation problems on planar graphs~\cite{Baker94} and, more generally, on $H$-minor-free graphs~\cite{DemaineHK05bidimension}.
In particular, we rely on the following structural decomposition theorem.

\begin{theorem}[{\cite[Theorem~3.1]{DemaineHK05bidimension}}]
    \label{lem:partition_minor_free}
    For any graph $H$ on $h$ vertices, there exists a constant $c_h$, depending only on $h$, such that for every integer $k \geq 1$ and every $H$-minor-free graph $G$, the vertex set of $G$ can be partitioned into $k+1$ subsets $V_1, V_2, \ldots, V_{k+1}$ with the property that the subgraph induced by the union of any $k$ of these subsets has treewidth at most $c_h \cdot k$.
    Moreover, such a partition can be computed in polynomial time.
\end{theorem}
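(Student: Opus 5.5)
The plan is to follow Baker's layering idea, using the Robertson--Seymour structure theorem for $H$-minor-free graphs. That theorem says every $H$-minor-free graph $G$ is a clique-sum, along a tree $\mathcal{T}$, of pieces that are \emph{$h$-almost-embeddable}: each piece becomes a graph on a surface of genus $g(h)$ after deleting at most $a(h)$ apex vertices and at most $v(h)$ vortices of pathwidth at most $p(h)$. Each clique-sum is along a clique of size at most $s(h)$. I will use the algorithmic version (also due to Demaine--Hajiaghayi--Kawarabayashi), which computes such a decomposition in polynomial time for fixed $h$. The target is a colouring $c\colon V(G)\to[k+1]$ with the following property: deleting any one colour class leaves, in every piece, a graph of treewidth $O_h(k)$. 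Since a clique-sum of graphs of treewidth at most $w$ has treewidth at most $w$, this yields the theorem with $V_i\coloneq c^{-1}(i)$.

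First, the single-piece case. For a bounded-genus graph $G_0$ without apices and vortices, fix a root $r$, compute BFS layers $L_0,L_1,\dots$, and colour $L_j$ with $j \bmod (k+1)$. Deleting one colour class leaves disjoint unions of at most $k$ consecutive layers. By the linear local treewidth of bounded-genus graphs (Eppstein; Demaine--Hajiaghayi), $k$ consecutive layers have treewidth $O_g(k)$. For the standard proof, contract the layers below the band into one vertex; this gives radius $O(k)$ and hence treewidth $O(k)$. Vortices are handled as in the DHK argument. Each vortex bag is added to the layer of the face vertex it is attached to, and since the vortex has pathwidth at most $p(h)$, this multiplies the width by $O(p(h))$. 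Apex vertices are given an arbitrary colour and added to every bag afterwards, which increases the treewidth by at most $a(h)$. BFS is computed in the graph with the apices removed, so that distances reflect the embedding.

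Second, gluing along the clique-sum tree. I process $\mathcal{T}$ from a root piece. When a child piece $P$ is attached along a clique $K$ that is already coloured, the non-apex vertices of $K$ lie in the surface part or a vortex of $P$ and are pairwise adjacent, so they span at most $O(1)$ BFS layers if the BFS in $P$ starts from $K$, i.e.\ all of $K$ is placed at layer $0$ or $1$. I then shift the colours in $P$ by an offset so that they agree with the colours already assigned to $K$. If $K$ occupies two layers whose inherited colours are not consecutive, I recolour the few clique vertices accordingly. Because $K$ is a clique, its vertices occupy at most two BFS levels of any BFS, so a consistent shift always exists. Edges added by clique-sums may be deleted in $G$, but that only decreases treewidth. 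Finally, the clique-sum lemma for treewidth gives the global bound $c_h k$, and every step is polynomial.

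The main obstacle is the gluing step: the colourings of different pieces must agree on the shared cliques while each piece keeps the banded structure with bounded-width bands. In particular, clique vertices that are apices in one piece but surface vertices in its neighbour must be handled consistently. I would first try to treat every such vertex as an apex in every piece containing it, i.e.\ put it in all bags. This costs only an additive $O(a(h)+s(h))$ per piece, so it does not affect the linear dependence on $k$.
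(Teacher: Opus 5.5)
The paper gives no proof of this statement; it quotes it from Demaine, Hajiaghayi and Kawarabayashi. Your outline (structure theorem, delete apices, Baker-style layering of the almost-embedded part, glue along the clique-sum tree) is the natural route, and the bounded-genus layering and the apex handling are fine. However, the gluing step, which you rightly call the crux, fails as written. Colours inherited on an adhesion clique need not fit any layering of the child piece, and different children can make conflicting demands.

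For example, let $P$ be a large grid plus an apex $v$. Glue to $P$, along edges $vx_j$, planar pieces $Q_j$ in which $v$ and $x_j$ are adjacent non-apex vertices, with the $x_j$ spread over all colour classes; the result is $K_6$-minor-free. Any BFS of $Q_j$ puts $v$ within one layer of $x_j$. So for $k\ge 3$ no single colour of $v$ is compatible with the layerings of all $Q_j$, whatever offsets you choose.

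Recolouring does not repair this. It changes colours already fixed higher up, a piece can have unboundedly many children, and many recoloured vertices of one piece can survive deletion of their layer's colour class and reconnect the bands.

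Your fallback, as stated, overshoots. Suppose a grid piece has, at each of its vertices $v$, a child glued by a $1$-sum in which $v$ is the apex. Then every grid vertex is an apex in one piece and a surface vertex in a neighbour. Treating all such vertices as apices in every piece containing them puts unboundedly many apices into the grid piece, not $O(a(h)+s(h))$.

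The missing observation is that no compatibility is needed at all. Root the decomposition tree. The pieces containing a vertex $v$ form a subtree, so in every piece $P_t$ containing $v$ other than the topmost one, $v$ lies in the adhesion set $K_t$ between $t$ and its parent. Colour each vertex exactly once, in its topmost piece: arbitrarily if it is an apex there, otherwise by its BFS layer modulo $k+1$ in the apex-free part. In $P_t$, add the at most $s(h)$ vertices of $K_t$, keeping whatever colours they already have, to every bag, exactly like apices. Then $P_t$ minus any colour class lies in a disjoint union of bands of at most $k$ consecutive layers plus at most $a(h)+s(h)$ extra vertices. The clique-sum lemma then finishes the argument with no offsets or recolouring.

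Separately, your vortex step does not work as described. A vortex vertex may lie in a long interval of bags whose cuff vertices sit in very different BFS layers of the surface part. So ``the layer of the face vertex it is attached to'' is not well defined, vortex edges can jump many layers, and deleting a colour class no longer separates the bands. The natural remedy is to run the BFS in the apex-free graph including the vortex edges, so every edge joins equal or consecutive layers. The $O(k)$ bound for $k$ consecutive layers then rests on linear local treewidth of apex-free almost-embeddable graphs (Grohe; Demaine and Hajiaghayi). That is a genuine ingredient, not a routine multiplication of the width by $O(p(h))$.
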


\begin{proof}[Proof of \cref{thm:minor_free}]
Let $k$ be a number that we will choose later.
Let $V_1,V_2,\ldots,V_{k+1}$ be a partition of $V(G)$ obtained by \Cref{lem:partition_minor_free}.
For each $j \in [k+1]$, let $G_j$ denote the subgraph of $G$ induced by $V(G)\setminus V_j$.

Without loss of generality, we assume that $|\iini| = |\itar|$. 
If this is not the case, let $\iini' \subseteq \iini$ and $\itar' \subseteq \itar$ be arbitrary subsets such that
$|\iini'| = |\itar'| = \min\{|\iini|, |\itar|\}$.
This reduction preserves the optimal value of the reconfiguration problem, that is, $\opt_{G}(\iini \leftrightsquigarrow \itar)
=
\opt_{G}(\iini' \leftrightsquigarrow \itar')$.

Let $\eta \coloneq |\iini| = |\itar|$.
Since $V_1, V_2, \ldots, V_{k+1}$ is a partition of $V(G)$, an argument analogous to the proof of \Cref{lem:pigeonhole} implies the existence of an index $j \in [k+1]$ such that 
\[
|(\iini \cup \itar) \cap V_j| \leq \frac{|\iini \cup \itar|}{k+1} \leq \frac{2\eta}{k+1}.
\]
Hence,
\[
\max\{|\iini \cap V_j|,|\itar \cap V_j| \}\leq |(\iini \cup \itar) \cap V_j| \leq \frac{2\eta}{k+1}.
\]
Therefore,
\[
|\iini \cap V(G_j)| \ge |\iini| - \frac{2\eta}{k+1} \ge \left(1 - \frac{2}{k+1}\right)\eta \quad \text{and} \quad |\itar \cap V(G_j)| \ge \left(1 - \frac{2}{k+1}\right)\eta.
\]

We construct an $(\iini,\itar)$-reconfiguration sequence $\mathscr{S}$ for $G$ as follows:
\begin{description}
    \item[Step 1.] Remove all vertices of $\iini \cap V_j$ from $\iini$ to obtain $\iini \cap V(G_j)$.
    \item[Step 2.] Compute a tree decomposition of $G_j$ and use the algorithm in \Cref{thm:algo:treewidth} to compute an $(\iini \cap V(G_j), \itar \cap V(G_j))$-reconfiguration sequence $\mathscr{S}_j$ on $G_j$. 
    \item[Step 3.] Add all vertices of $\itar \cap V_j$ to obtain $\itar$.
\end{description}

We now analyse the approximation factor.
Let $t \coloneq \tw(G_j) + 1$ and $\varphi \coloneq \min\{|\iini\cap V(G_j)|, |\itar\cap V(G_j)|\}$.
Recall that the choice of $j$ ensures that
\begin{align}
    \label{ineq:intersection_bound}
    \varphi \geq \left(1-\frac{2}{k+1}\right)\eta.
\end{align}

Define the function 
$g(x) \coloneq \left(1 - \frac{1}{1+\varepsilon'}\right)x - t\left(\log_{3/2}\frac{x}{t} + 1\right)$, where $\varepsilon'$ is a constant satisfying $0<\varepsilon'<\varepsilon$.
For example, we can set $\varepsilon' = \varepsilon/2$.
By \Cref{thm:algo:treewidth} and \Cref{ineq:intersection_bound}, we obtain 
\begin{align}
    \label{ineq:val_lower_bound_1}
    \val_G(\mathscr{S}) = \val_{G_j}(\mathscr{S}_j) 
    \geq g(\varphi) + \frac{1}{1+\varepsilon'}\varphi \ge g(\varphi) + \frac{1}{1+\varepsilon'} \cdot \left(1-\frac{2}{k+1}\right) \eta.
\end{align}

We now choose $k$ such that $\frac{1}{1+\varepsilon'} \cdot \left(1-\frac{2}{k+1}\right) \geq \frac{1}{1+\epsilon}$.
For example, it suffices to set $k = 2\lceil \frac{1+\epsilon}{\epsilon-\epsilon'} \rceil-1$.
With this choice of $k$, we conclude that
\begin{align}
    \label{ineq:val_lower_bound}
    \val_G(\mathscr{S})  
    \geq  g(\varphi) + \frac{1}{1+\varepsilon'} \cdot \left(1-\frac{2}{k+1}\right) \eta \geq g(\varphi) + \frac{1}{1+\epsilon} \eta.
\end{align}

We now distinguish two cases.

\medskip
\noindent
\textbf{Case~1: $g(\varphi) \geq 0$.}
In this case, \Cref{ineq:val_lower_bound} directly implies that 
\[
    \val_G(\mathscr{S}) \geq \frac{1}{1+\varepsilon}\eta \geq \frac{1}{1+\varepsilon}\opt_G(\iini \reco \itar).
\]
Thus, the algorithm achieves the desired approximation ratio.

\medskip
\noindent
\textbf{Case~2: $g(\varphi) < 0$.}
A straightforward calculation yields the second derivative function $g''(x) = \frac{t}{x^2\ln(3/2)}$.
Since $g''(x) > 0$ for all $x \in \mathbb{R}_{>0}$ assuming $t > 0$, it follows that $g$ is convex on $\mathbb{R}_{>0}$. 
Moreover, since $1 - \frac{1}{1+\epsilon'} = \frac{\epsilon'}{1+\epsilon'}> 0$,
we have $\lim_{x \to +\infty} g(x) = +\infty$.
Together with the assumption $g(\varphi)<0$, this implies that $\varphi$ is bounded by a function depending only on $t$ and $\varepsilon'$.

Recall that $t=\tw(G_j)+1 \le c_h \cdot k$, where $k = 2\left\lceil \frac{1+\varepsilon}{\varepsilon-\varepsilon'} \right\rceil - 1$ and $\varepsilon' < \varepsilon$.
Consequently, $\varphi$ 
is bounded by a function depending only on $\varepsilon$ and $h$.
It is known that every $H$-minor-free graph has degeneracy $O(h\log h)$~\cite{Thomason2001}.
Since $\varphi$
is bounded, we can apply \Cref{thm:FPT_degeneracy} to solve the instance exactly.

\medskip
In both cases, we obtain an $(\iini,\itar)$-reconfiguration sequence
whose objective value is at least $\opt_G(\iini\reco\itar)/(1+\varepsilon)$.

We conclude by analysing the running time of the algorithm.
By \Cref{lem:partition_minor_free}, the partition $V_1, V_2, \ldots, V_{k+1}$ can be computed in polynomial time.
The selection of the index $j$, as well as Steps~1 and~3, also requires only polynomial time.
In Step~2, we have $\tw(G_j) \le c_h \cdot k$.
Thus, a tree decomposition of $G_j$ can be computed in \fpt time parameterised by $\varepsilon$ and $h$~\cite{Bodlaender96}.
Given such a decomposition, the algorithm of \Cref{thm:algo:treewidth} runs in polynomial time.
In Case~2, we additionally invoke the FPT algorithm of \Cref{thm:FPT_degeneracy}, parameterised by $\varepsilon$ and $h$.
Therefore, the overall running time of the algorithm is \fpt time with respect to $\varepsilon$ and $h = |V(H)|$.
This completes the proof of \Cref{thm:minor_free}.
\end{proof}

\begin{remark}
    Consider the special case of planar graphs.
    On the one hand, since every $n$-vertex planar graph has at most $3n-6$ edges, it is 5-degenerate.
    Hence, \cref{thm:algo:degeneracy} implies a polynomial-time algorithm that, given a planar graph $G$ and two independent sets $\iini$ and $\itar$ of $G$, produces an $(\iini, \itar)$-reconfiguration sequence $\mathscr{J}$ such that
    \[
        \val_G(\mathscr{J}) \geq \frac{1}{5} \cdot \opt_G(\iini \reco \itar) - 1.
    \]
    On the other hand, since every planar graph is $K_5$-minor-free, if we allow a slower run time (i.e., \fpt time parameterised by $\epsilon$ instead of polynomial time), we can use \cref{thm:minor_free} to obtain a $(1+\epsilon)$-approximation algorithm for planar graphs for any $\epsilon > 0$.
\end{remark}

\section{Inapproximability results}
\label{sec:hardness}

In this section,
we present inapproximability results for \mmisr on
bounded-degree graphs,
graphs of bandwidth $n^{\frac{1}{2}+\Theta(1)}$, and
bipartite graphs.

\subsection{Bounded degree graphs}
\label{sec:hardness:degree}

We show that for a graph of maximum degree $\Delta$,
\mmisr is $\NP$-hard to approximate within a factor of $\Theta\bigl(\sqrt{\Delta}\bigr)$.

\begin{theorem}
\label{thm:hardness:degree}
For a graph of maximum degree $\Delta \geq 3$,
it is $\NP$-hard (under randomised reductions)
to approximate \mmisr  
within a factor of $\Theta\bigl(\sqrt{\Delta}\bigr)$.
\end{theorem}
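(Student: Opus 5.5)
We follow the gap-preserving reduction behind \cref{obs:intro:hard}, replacing the complete bipartite graph $K_{k,k}$ by a sparse bipartite expander so that the degree stays bounded.

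\textbf{Source problem.} We start from the hardness of \mis on bounded-degree graphs \cite{bhangale2022ug}. For every sufficiently large $\Delta$, given an $n$-vertex graph $G$ of maximum degree $\Delta$, it is $\NP$-hard (under randomised reductions) to distinguish two cases:
\begin{itemize}
\item the YES case, $\alpha(G) \geq c\, n/\Delta$ for some fixed constant $c>0$ (in fact $\alpha(G)$ can be taken to be a constant fraction of $n$ in the completeness case of these reductions);
\item the NO case, $\alpha(G) \leq n\polylog(\Delta)/\Delta^{2}$.
\end{itemize}
We only need a gap of roughly $\sqrt{\Delta}$, so any such statement with completeness $\alpha(G)\geq s$ and soundness $\alpha(G)\leq s/\sqrt{\Delta}$ suffices. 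For bounded $\Delta\geq 3$, a $\Theta(\sqrt\Delta)$ factor is just a constant, and we can instead invoke the $\PSPACE$-/$\NP$-hardness of constant-factor approximation on degree-$3$ graphs, padded suitably.

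\textbf{Construction.} We take a $\Delta$-regular bipartite Ramanujan graph $X$ with sides $L,R$, $|L|=|R|=k$, where $k$ is chosen as $s$, the completeness threshold. Such graphs exist by Marcus--Spielman--Srivastava and can be constructed, or sampled, as a near-Ramanujan random $\Delta$-regular bipartite graph. Sampling is where the randomised reduction enters: a random graph is near-Ramanujan with high probability. We then set $H \defeq G \sqcup X$, $\iini \defeq L$, $\itar \defeq R$. The maximum degree of $H$ is $\Delta$.

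\textbf{Completeness.} Steps 1--4 of \cref{obs:intro:hard} go through verbatim, since they only use that $L$ and $R$ are independent and that $I_G$ lies in a different component. This gives
\[
\opt_H(\iini\reco\itar)\geq\min\{k,\alpha(G)\}=s .
\]

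\textbf{Soundness.} This is the main step. Let $\sqI$ be any reconfiguration sequence. Each set in $\sqI$, restricted to $X$, is an independent set $A\cup B$ with $A\subseteq L$ and $B\subseteq R$. Since every step only adds or only removes vertices, we can follow $\sqI$ from $\iini=L$ (where $|A|=k$, $|B|=0$) to $\itar=R$ (where $|A|=0$, $|B|=k$) and find a set where the restriction is unbalanced in neither direction. Concretely, we take the first set with $|B|\geq|A|$; the set just before it has $|A|>|B|$, and consecutive sets are nested, so one of the two has both $|A|$ and $|B|$ bounded in terms of the smaller side.

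By the bipartite expander mixing lemma, an independent set $A\cup B$ satisfies $e(A,B)=0$, hence
\[
\frac{\Delta|A||B|}{k}\leq \lambda\sqrt{|A||B|}, \qquad \lambda\leq 2\sqrt{\Delta-1},
\]
so $|A||B|\leq 4k^{2}/\Delta$. At the crossing point, the min-side size is at most $2k/\sqrt{\Delta}$. This must be combined with the nesting argument to bound the total: I expect to show that some set has $|A|+|B|=O(k/\sqrt\Delta)$. The danger is a set with one side large and the other small, and this is the delicate point. I would handle it by arguing that whenever $|A|\geq|B|$ is large, $B$ must be tiny, and that the transition between "$A$ large" and "$B$ large" forces passing through a set where both are $O(k/\sqrt\Delta)$. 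This holds because while $|A|>2k/\sqrt\Delta$ we have $|B|<2k/\sqrt\Delta$, and vice versa, and each step changes only one side monotonically. If this is too lossy, I would replace single-step reasoning by the general bound $\min(|A|,|B|)\leq 2k/\sqrt\Delta$ together with the observation that at the first time $|A|\leq 2k/\sqrt\Delta$, the side $B$ is still small.

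Given that bound, the set in question has size at most $O(k/\sqrt\Delta)+\alpha(G)$. In the NO case this is $O(s/\sqrt\Delta)$, giving a $\Theta(\sqrt\Delta)$ gap.
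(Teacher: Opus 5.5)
Your proposal is correct and follows essentially the same route as the paper: you take the disjoint union of $G$ with a $\Delta$-regular bipartite Ramanujan graph and set $\iini=L$, $\itar=R$, use the four-step completeness sequence, and bound soundness via the bipartite expander mixing lemma at an index where $|I^{(i)}\cap L|\approx 2k/\sqrt{\Delta}$. The step you flag as delicate is already closed by your own ``first time $|A|\le 2k/\sqrt{\Delta}$'' argument, just as the paper's choice $|I^{(i)}\cap L|=\lceil\epsilon_\Delta m\rceil$ does. The other differences are cosmetic: the paper takes $m=\Theta(\frac{\log\Delta}{\sqrt{\Delta}}n)$ rather than $k=s$, and uses the Bhangale--Khot thresholds $\Theta(n/\log\Delta)$ versus $\Theta(n\log\Delta/\Delta)$ rather than your misremembered ones. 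It also builds $X$ deterministically via Cohen's construction, so the randomisation comes only from the source hardness, not from sampling $X$.
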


The proof of \cref{thm:hardness:degree} is based on a gap-preserving reduction
from \mis on bounded-degree graphs,
which exhibits the following $\NP$-hardness of approximation
due to Bhangale and Khot \cite{bhangale2022ug}.

\begin{theorem}[\cite{bhangale2022ug}]
\label{thm:BK22}
For an $n$-vertex graph $G$ of maximum degree $\Delta$,
it is $\NP$-hard (under randomised reductions) to distinguish whether
$\alpha(G) \geq \Theta\left(\tfrac{1}{\log \Delta}n\right)$
or
$\alpha(G) \leq \Theta\left(\tfrac{\log \Delta}{\Delta}n\right)$.
\end{theorem}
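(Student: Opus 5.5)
The statement is imported from \cite{bhangale2022ug}, so the plan is to assemble it from the literature rather than prove it from first principles. I would follow the route of Austrin, Khot and Safra \cite{MR2804389}. Their result shows that, assuming the Unique Games Conjecture, it is hard to distinguish degree-$\Delta$ graphs with $\alpha(G) \geq \Theta\bigl(\tfrac{1}{\log\Delta}\bigr)n$ from those with $\alpha(G) \leq \Theta\bigl(\tfrac{\log\Delta}{\Delta}\bigr)n$. I would then replace the UGC by the unconditional $\NP$-hardness of \cite{bhangale2022ug}.

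\textbf{Step 1 (starting point).} Take the label-cover-type instance for which \cite{bhangale2022ug} proves $\NP$-hardness. This instance plays the role of Unique Games with imperfect completeness. Its soundness is strong enough that no labelling satisfies more than a tiny $\eta$-fraction of the constraints in any ``structured'' sense. The point of \cite{bhangale2022ug} is that this weaker soundness ("losing half") is still enough for hardness of independent set.

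\textbf{Step 2 (long-code gadget).} Apply the Austrin--Khot--Safra composition. Each vertex of the outer instance is replaced by a copy of the $p$-biased hypercube $\{0,1\}^{R}$ with $p \approx \tfrac{1}{\log\Delta}$. Edges are placed between points whose supports ``disagree'' according to a correlated noise distribution: two strings are adjacent when they share a $1$ on coordinates that are matched by a constraint.

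\emph{Completeness.} Given a good labelling, take in each cube all strings having a $1$ at the assigned label. This is independent and has density $p = \Theta(1/\log\Delta)$.

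\emph{Soundness.} A set of density $\gg \tfrac{\log\Delta}{\Delta}$ must contain an edge in a typical cube pair unless the two indicator functions have influential coordinates in common. This uses a Mossel-type invariance principle or an ``it ain't over till it's over'' bound for $p$-biased space. Decoding those influential coordinates into labels then contradicts the outer soundness.

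\textbf{Step 3 (degree reduction).} The composed graph has large degree, so I would randomly sparsify it: keep each edge independently so that the expected degree is $\Theta(\Delta)$, and then delete the few high-degree vertices. Completeness survives trivially, since subgraphs only gain independent sets. For soundness, use a Chernoff bound plus a union bound over all sets of size $\Theta\bigl(\tfrac{\log\Delta}{\Delta}\bigr)n$. Such a set spans $\Omega\bigl(\tfrac{\log^2\Delta}{\Delta}\bigr)n$ sampled edges in expectation, and this beats the entropy $\binom{n}{\alpha n}$ exactly at the $\tfrac{\log\Delta}{\Delta}$ threshold. This sampling is the source of ``under randomised reductions''.

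\textbf{Main obstacle.} I expect the soundness analysis in Step 2 to be the hardest part. With only the ``half'' soundness of \cite{bhangale2022ug} instead of full UG soundness, the decoding must work for a set that is large on only a fraction of the cubes. It must also tolerate the imperfect completeness, which costs only constant factors in density. I would follow \cite{bhangale2022ug} directly, since they verify that the Austrin--Khot--Safra analysis goes through unchanged, and cite that verification rather than redo it.
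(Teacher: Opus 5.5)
Citing \cite{bhangale2022ug} is exactly what the paper does. It states this theorem without proof, and its acknowledgements note that \cite{bhangale2022ug} is what removes the Unique Games assumption from the Austrin--Khot--Safra result \cite{MR2804389}. The sketch you give of what lies inside that citation, however, contains concrete errors. First, your edge rule is reversed. Suppose two strings are adjacent when they share a $1$ on matched coordinates. Then, under a consistent labelling, strings with a $1$ at the assigned labels do share a $1$ on matched coordinates, so they are adjacent and your completeness set is not independent. Edges must instead join strings with \emph{no} common $1$ on matched coordinates, with coordinate pairs supported on $\{(0,0),(1,0),(0,1)\}$ (correlation about $-p$). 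Then the star families $\{x : x_\ell = 1\}$ are intersecting and hence independent.

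Second, ``losing half'' refers to completeness, not soundness. What \cite{bhangale2022ug} show is $\NP$-hard is distinguishing two kinds of Unique Games instances: those in which about half of the vertices on one side have \emph{all} their constraints simultaneously satisfiable, and those in which no labelling satisfies more than an $\epsilon$ fraction. The soundness is the usual one, so decoding proceeds essentially as in \cite{MR2804389}. The real difficulty is completeness: a labelling that merely satisfies half of the constraints gives no independent set. A violated constraint at $u$ creates edges between the star of that block and the stars of the other blocks around $u$. The structured guarantee lets you place stars only on the good blocks, giving density about $p/2$; this constant loss is absorbed into $\Theta(1/\log\Delta)$. So your ``main obstacle'' paragraph looks at the wrong side of the reduction.

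Third, Step~2 as phrased (a dense set ``must contain an edge'') is too weak to feed Step~3. Your Chernoff-plus-union bound needs every set of measure $\beta \approx \frac{\log\Delta}{\Delta}$ in the dense graph to span an $\Omega(\beta^2)$ fraction of the edge weight. The Gaussian bound for low-influence sets at correlation about $-p$ gives roughly $\beta^{2+\Theta(p)}$. This is $\Omega(\beta^2)$ exactly when $p\log(1/\beta) = O(1)$, which is the reason for the completeness value $p = \Theta(1/\log\Delta)$ that your sketch asserts without justification.
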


\subparagraph{Expander graphs.}

Before describing the gap-preserving reduction, we introduce \emph{bipartite expander graphs}.
Let $G$ be an $n$-vertex $d$-regular (multi)graph.
For each $i \in [n]$, let $\lambda_i(G)$ be the $i$-th largest eigenvalue of its adjacency matrix.
Note that $\lambda_1(G) = d$ and
that $\lambda_n(G) = -d$ if and only if $G$ is bipartite.
See also the survey of Hoory, Linial, and Wigderson \cite{hoory2006expander}.

\begin{definition2}[Bipartite expander graph]
For positive integers $n$ and $d \geq 3$ and 
a positive real $\lambda < d$,
a \defi{bipartite $(n,d,\lambda)$-expander graph}
is defined as a $2n$-vertex $d$-regular balanced bipartite (multi)graph $G$ such that
$|\lambda_i(G)| \leq \lambda$ for every $2 \leq i \leq 2n-1$.\footnote{
Note that a bipartite $(n,d,\lambda)$-expander graph contains $2n$ vertices.
}
\lipicsEnd
\end{definition2}

A bipartite $(n,d,\lambda)$-expander graph is called \defi{Ramanujan} \cite{lubotzky1988ramanujan}
if $\lambda \leq 2\sqrt{d-1}$; namely,
$|\lambda_i(G)| \leq 2\sqrt{d-1}$ for every $2 \leq i \leq 2n-1$.
Cohen \cite{cohen2016ramanujan},
along with the interlacing polynomials method due to {Marcus, Spielman, and Srivastava}
\cite{marcus2015interlacing,marcus2018interlacing},
demonstrated an explicit family of $d$-regular bipartite Ramanujan (multi)graphs
for every $d \geq 3$.

\begin{theorem}[\cite{cohen2016ramanujan,marcus2015interlacing,marcus2018interlacing}]
\label{thm:Coh16}
For any positive integers $d \geq 3$ and $n$,
one can construct a $2n$-vertex $d$-regular bipartite Ramanujan graph in polynomial time in $n$.
\end{theorem}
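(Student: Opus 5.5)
This statement is imported from the literature, so the plan is to reconstruct the argument of \cite{marcus2015interlacing,marcus2018interlacing} and \cite{cohen2016ramanujan} rather than to derive it from earlier results of this paper.

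\textbf{Construction and reduction.} Take the graph on $2n$ vertices whose edge set is the union of $d$ perfect matchings between the two sides $L$ and $R$, each given by a permutation $\sigma_1,\dots,\sigma_d$ of $[n]$. This is automatically $d$-regular, balanced and bipartite, possibly with multi-edges. Its nontrivial eigenvalues are the singular values of $\sum_i P_{\sigma_i}$ restricted to the space orthogonal to the all-ones vector. By bipartite symmetry, the Ramanujan condition reduces to showing that the largest root of the characteristic polynomial of $A A^{\top}$, restricted to $\mathbf{1}^{\perp}$, is at most $4(d-1)$, where $A = \sum_i P_{\sigma_i}$.

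\textbf{Existence.} Following \cite{marcus2018interlacing}, replace uniform random permutations by random signed or orthogonal conjugations; this does not change the expected characteristic polynomial. The expectation $\mathbb{E}\,\chi(A A^{\top})$ is then a finite free convolution of the individual matching polynomials. The finite free convolution inequalities bound its largest root by $2\sqrt{d-1}$, in the appropriately square-rooted sense. The family of polynomials $\{\chi(A A^{\top})\}$, indexed by choices of permutations and built one transposition at a time, forms an interlacing family. Hence some leaf polynomial has largest root at most that of the expectation, and so some choice of $\sigma_1,\dots,\sigma_d$ gives a Ramanujan graph.

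\textbf{Derandomization (Cohen).} Walk down the interlacing tree. At each node, compute the conditional expected characteristic polynomial for each child, and move to a child whose largest root is no larger than the parent's. Interlacing guarantees such a child exists, and at the leaf we obtain an explicit Ramanujan graph.

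\textbf{Main obstacle.} Polynomial running time hinges on two points:
\begin{enumerate}
\item each conditional expected polynomial must be computable exactly in $\mathrm{poly}(n)$ time;
\item the tree must have polynomial depth.
\end{enumerate}
For the first point, I would express the conditional expectations as finite free convolutions of explicitly known polynomials. These admit closed-form coefficient formulas computable with rational arithmetic. For the second point, I would decompose each random permutation into $O(n)$ random transpositions or swaps, as Cohen does. Largest roots of the resulting rational polynomials can then be compared to sufficient precision by Sturm sequences, which completes the polynomial-time bound.
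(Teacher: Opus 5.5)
The paper does not prove this statement. It imports it from \cite{marcus2018interlacing} (existence via unions of $d$ random perfect matchings) and \cite{cohen2016ramanujan} (the polynomial-time construction). Measured against those sources, your existence outline is the right one: union of matchings, quadrature, finite free convolution bound, and interlacing via random swaps. One slip: random \emph{signed} permutations do not fix $\mathbf{1}$ and do not preserve the expected polynomial; the quadrature uses orthogonal matrices fixing $\mathbf{1}$. The genuine gap is in the algorithmic part, exactly at your point~(1).

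The finite free convolution formula, through the MSS quadrature, computes $\mathbb{E}\,\chi$ only when every still-random permutation is uniform on all of $S_n$. At an internal node of the swap tree this fails. There, $\sigma_1,\dots,\sigma_{j-1}$ are fixed but $\sigma_j$ is only partly determined, so conditionally $\sigma_j=\tau\rho$ with $\tau$ fixed and $\rho$ the product of the remaining independent random swaps. This $\rho$ cannot be uniform on $S_n$ at any node where the branching matters: otherwise $\tau\rho$ would also be uniform, and the conditional polynomial would not depend on the choices already made.

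The conditional polynomial is then not a function of the characteristic polynomials of the fixed part and of the matchings. It depends on how the fixed matrix sits relative to the coordinates still being permuted. In the extreme case $\rho=I$ it equals $\chi\bigl((C+\tau)(C+\tau)^{\top}\bigr)$, which is not determined by the spectra of $C$ and $\tau$. So ``express the conditional expectations as finite free convolutions of explicitly known polynomials'' is not available, and this is exactly why the MSS existence proof is not by itself an algorithm.

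An exact polynomial-time procedure for these partially conditioned expected characteristic polynomials is the actual content of Cohen's paper. Your proposal assumes it rather than supplying it. The remaining ingredients you list are fine once that step is in place: real-rootedness and common interlacing from MSS, polynomial depth of the swap tree, and exact comparison of largest roots of rational polynomials via Sturm sequences.
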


For a bipartite graph $G$ with bipartition $(L,R)$ and
any two vertex sets $S \subseteq L$ and $T \subseteq R$,
let $e_G(S,T)$ denote the number of edges between $S$ and $T$; namely,
\begin{align}
    e_G(S,T)
    \defeq \bigl|(S \times T) \cap E(G)\bigr|
    = \left|\bigl\{ (v,w) \in S \times T \bigm| (v,w) \in E(G) \bigr\}\right|.
\end{align}
The bipartite expander mixing lemma \cite{haemers1979eigenvalue,haemers1995interlacing} states that
$e_G(S,T)$ of an expander graph $G$ is close to the expected value in the corresponding random $d$-regular bipartite graph.

\begin{lemma}[Bipartite expander mixing lemma \cite{haemers1979eigenvalue,haemers1995interlacing}]
\label{lem:EML}
Let $G$ be a bipartite $(n,d,\lambda)$-expander graph with bipartition $(L,R)$.
For any two vertex sets $S \subseteq L$ and $T \subseteq R$,
it holds that
\begin{align}
    \left|e_G(S,T) - \frac{d \cdot |S|\cdot|T|}{n}\right| \leq \lambda \sqrt{|S|\cdot|T|}.
\end{align}
\end{lemma}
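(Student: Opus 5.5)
The statement is the bipartite expander mixing lemma (\cref{lem:EML}), and I would prove it by the standard spectral argument applied to the adjacency matrix $A$ of the $2n$-vertex graph $G$.

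\textbf{Spectral decomposition.} Index the rows and columns of $A$ by $L \cup R$. Let $\mathbf{1}_S$ and $\mathbf{1}_T$ be the indicator vectors of $S \subseteq L$ and $T \subseteq R$, viewed as vectors in $\mathbb{R}^{L\cup R}$. Since the graph is bipartite, $e_G(S,T) = \mathbf{1}_S^\top A \mathbf{1}_T$.

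Because $G$ is $d$-regular and bipartite, the top eigenvector is $u_1 = \mathbf{1}/\sqrt{2n}$ with eigenvalue $d$. The bottom eigenvector is $u_{2n} = (\mathbf{1}_L - \mathbf{1}_R)/\sqrt{2n}$ with eigenvalue $-d$. Complete these to an orthonormal eigenbasis $u_1,\dots,u_{2n}$. By the definition of a bipartite $(n,d,\lambda)$-expander, all middle eigenvalues satisfy $|\lambda_i| \le \lambda$.

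\textbf{Main terms.} Expand $\mathbf{1}_S = \sum_i a_i u_i$ and $\mathbf{1}_T = \sum_i b_i u_i$, so that $\mathbf{1}_S^\top A\mathbf{1}_T = \sum_i \lambda_i a_i b_i$. The extreme coefficients are
\[
a_1 = \frac{|S|}{\sqrt{2n}}, \quad a_{2n} = \frac{|S|}{\sqrt{2n}}, \quad b_1 = \frac{|T|}{\sqrt{2n}}, \quad b_{2n} = -\frac{|T|}{\sqrt{2n}}.
\]
Therefore
\[
d\,a_1 b_1 + (-d)\,a_{2n} b_{2n} = \frac{d|S||T|}{2n} + \frac{d|S||T|}{2n} = \frac{d|S||T|}{n},
\]
which is exactly the main term in the lemma.

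\textbf{Error term.} The remaining sum satisfies
\[
\Bigl|\sum_{i=2}^{2n-1}\lambda_i a_i b_i\Bigr| \le \lambda \sqrt{\textstyle\sum_{i=2}^{2n-1} a_i^2}\,\sqrt{\textstyle\sum_{i=2}^{2n-1} b_i^2}
\]
by Cauchy--Schwarz. Each of these partial sums is at most the full squared norm, i.e.\ $\|\mathbf{1}_S\|^2 = |S|$ and $\|\mathbf{1}_T\|^2 = |T|$. This gives the bound $\lambda\sqrt{|S||T|}$, as required.

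\textbf{Main obstacle.} The only delicate step is accounting correctly for the eigenvalue $-d$. One must use both the all-ones vector and the signed vector $\mathbf{1}_L - \mathbf{1}_R$, since their contributions add up to $d|S||T|/n$ only because $S$ lies in $L$ and $T$ lies in $R$. For multigraphs the argument is identical, with $A$ counting edge multiplicities. If $\lambda_2 = d$ happens (a disconnected graph), the hypothesis $\lambda < d$ excludes it, so the eigenbasis choice above is legitimate.
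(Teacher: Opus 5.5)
Your proof is correct. The paper does not prove this lemma; it only cites Haemers, whose argument is based on eigenvalue (singular-value) interlacing for a small quotient matrix of the partition induced by $S$ and $T$, not on an explicit eigenbasis expansion.

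Your route differs: you expand $\mathbf{1}_S$ and $\mathbf{1}_T$ in an orthonormal eigenbasis containing $u_1=\mathbf{1}/\sqrt{2n}$ and $u_{2n}=(\mathbf{1}_L-\mathbf{1}_R)/\sqrt{2n}$. The two extreme eigenvalues then contribute exactly $d|S||T|/n$; this is where $S\subseteq L$, $T\subseteq R$ and $|L|=|R|=n$ enter. Cauchy--Schwarz on the middle coordinates bounds the rest by $\lambda\sqrt{|S||T|}$.

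This direct argument is self-contained and loses nothing. If you keep the exact value $\sum_{i=2}^{2n-1}a_i^2=|S|-2|S|^2/(2n)=|S|(1-|S|/n)$, and likewise for $T$, instead of bounding it by $|S|$, you recover Haemers' sharper form $\lambda\sqrt{|S||T|(1-|S|/n)(1-|T|/n)}$. The paper only needs the weaker bound you state.

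One small clarification. The reason $u_2,\dots,u_{2n-1}$ carry exactly the eigenvalues $\lambda_2,\dots,\lambda_{2n-1}$ is that $\mathrm{span}\{u_1,u_{2n}\}^{\perp}$ is $A$-invariant, and the characteristic polynomial of $A$ splits off one root $d$ and one root $-d$. This holds regardless of connectivity, so your closing remark about $\lambda_2=d$ is not needed to justify the basis choice. The hypothesis $\lambda<d$ simply means such graphs are not expanders, so that case never arises.
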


\subparagraph{Reduction.}
Our gap-preserving reduction from \mis to \mmisr is described as follows.
Let $G$ be an $n$-vertex graph of maximum degree $\Delta \geq 3$.
Define $m \defeq \Theta\left(\tfrac{\log \Delta}{\sqrt{\Delta}}n\right)$.
Let $X$ be a $2m$-vertex $\Delta$-regular bipartite Ramanujan graph, i.e.,
a bipartite $(m,\Delta,\lambda)$-expander graph with bipartition $(L,R)$,
where $\lambda \defeq 2\sqrt{\Delta-1}$ and $|L|=|R|=m$.
By \cref{thm:Coh16}, such $X$ can be constructed in $\poly(m)$ time.
Let $H \defeq G + X$ be the disjoint union of $G$ and $X$.
The number of vertices in $H$ is $|V(H)| = n+2m = \Theta(n)$, and
the maximum degree of $H$ is at most $\Delta$.
The initial and target independent sets are defined as $\iini \defeq L$ and $\itar \defeq R$.
This completes the description of the reduction.

\subparagraph{Correctness.}
We first show the following completeness.
\begin{lemma}
\label{lem:hardness:degree:completeness}
If $\displaystyle\alpha(G) \geq \Theta\left(\tfrac{1}{\log \Delta}n\right)$,
then $\displaystyle\opt_H(\iini \reco \itar) \geq \Theta\left(\tfrac{\log \Delta}{\sqrt{\Delta}}n\right)$.
\end{lemma}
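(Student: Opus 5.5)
Since $H$ is the disjoint union of $G$ and $X$, I would reuse the four-step reconfiguration sequence from the proof of \cref{obs:intro:hard}. The only change is that the bipartite graph $K$ is replaced by $X$. Let $I_G$ be a maximum independent set of $G$, so $|I_G| = \alpha(G) \geq \Theta\left(\tfrac{1}{\log \Delta}n\right)$.

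Note first that $L \cup I_G$ and $R \cup I_G$ are independent sets of $H$: $L$ and $R$ are the sides of a bipartite graph, and there are no edges between $V(G)$ and $V(X)$. The sequence is then:
\begin{enumerate}
    \item add $I_G$ to $\iini = L$, giving $L \cup I_G$;
    \item remove $L$, giving $I_G$;
    \item add $R$, giving $R \cup I_G$;
    \item remove $I_G$, giving $R = \itar$.
\end{enumerate}
Every intermediate set is a subset of $L \cup I_G$ or of $R \cup I_G$, so it is independent. In Steps 1 and 4 the set contains $L$ or $R$, hence has size at least $m$. In Steps 2 and 3 it contains $I_G$, hence has size at least $\alpha(G)$. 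The value of this sequence is therefore at least $\min\{m, \alpha(G)\}$.

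It remains to compare the two quantities. By definition $m = \Theta\left(\tfrac{\log\Delta}{\sqrt\Delta}n\right)$. Since $\tfrac{\log\Delta}{\sqrt\Delta} \leq c\cdot\tfrac{1}{\log\Delta}$ up to constants (because $(\log \Delta)^2 = O(\sqrt{\Delta})$), the hidden constant in $m$ can be chosen so that $m \leq \alpha(G)$ in the completeness case. Then $\min\{m,\alpha(G)\} = m = \Theta\left(\tfrac{\log \Delta}{\sqrt{\Delta}}n\right)$, as claimed. This gives $\opt_H(\iini \reco \itar) \geq m$.

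The only delicate point is this bookkeeping of constants hidden in the $\Theta$ notation, in particular making sure the choice of $m$ lies below the completeness threshold of \cref{thm:BK22} for all $\Delta \geq 3$. Adjusting the constant in the definition of $m$ handles it.
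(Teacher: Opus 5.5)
Your proposal is correct and matches the paper's proof: the same four-step sequence through a maximum independent set $I_G$ of $G$, value $\min\{|L|,|R|,\alpha(G)\}$, and the comparison via $(\log\Delta)^2 = O(\sqrt{\Delta})$. Rescaling the constant in $m$ is harmless but unnecessary, since both terms of the minimum are already $\Omega\bigl(\tfrac{\log\Delta}{\sqrt{\Delta}}n\bigr)$.
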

\begin{proof}
Let $I_G$ be a maximum independent set of $G$, whose size is $\alpha(G)$.
Consider an $(\iini,\itar)$-reconfiguration sequence $\sqI$ obtained by the following procedure:
\begin{description}
    \item[Step 1.] add all $\alpha(G)$ vertices of $I_G$ 
    to obtain $L \cup I_G$.
    \item[Step 2.] remove all $m$ vertices of $L$ 
    to obtain $I_G$.
    \item[Step 3.] add all $m$ vertices of $R$ 
    to obtain $R \cup I_G$.
    \item[Step 4.] remove all $\alpha(G)$ vertices of $I_G$ 
    to obtain $R$.
\end{description}
The value of $\sqI$ can be bounded from below as
\begin{align}
\begin{aligned}
    \val_H(\sqI)
    = \min\bigl\{ |L|, |R|, |I_G| \bigr\}
    = \min\left\{ \Theta\left(\tfrac{\log \Delta}{\sqrt{\Delta}}n\right), \Theta\left(\tfrac{1}{\log \Delta}n\right) \right\}
    = \Theta\left(\tfrac{\log \Delta}{\sqrt{\Delta}}n\right),
\end{aligned}
\end{align}
as desired.
\end{proof}

We then show the following soundness.
\begin{lemma}
\label{lem:hardness:degree:soundness}
    If $\displaystyle\alpha(G) \leq \Theta\left(\tfrac{\log \Delta}{\Delta}n\right)$,
    then $\displaystyle\opt_H(\iini \reco \itar) \leq \Theta\left(\tfrac{\log \Delta}{\Delta}n\right)$.
\end{lemma}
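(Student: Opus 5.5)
Fix an arbitrary $(\iini,\itar)$-reconfiguration sequence $\sqI$ in $H$; we show $\val_H(\sqI) \le \Theta\bigl(\tfrac{\log\Delta}{\Delta}n\bigr)$. Every independent set $I$ of $H = G + X$ splits as $I = (I \cap V(G)) \cup (I \cap L) \cup (I \cap R)$, and $|I \cap V(G)| \le \alpha(G) \le \Theta\bigl(\tfrac{\log\Delta}{\Delta}n\bigr)$. It therefore suffices to find one set $I \in \sqI$ whose part in $X$ has size $O\bigl(\tfrac{\log\Delta}{\Delta}n\bigr)$.

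The key structural step is a bound on independent sets of $X$ that meet both sides. Let $S \subseteq L$ and $T \subseteq R$ with $S \cup T$ independent in $X$, so $e_X(S,T)=0$. By \cref{lem:EML} with $\lambda = 2\sqrt{\Delta-1}$,
\[
  \frac{\Delta |S||T|}{m} \le 2\sqrt{\Delta-1}\sqrt{|S||T|},
  \qquad\text{hence}\qquad
  \sqrt{|S||T|} \le \frac{2m}{\sqrt{\Delta}}.
\]
Thus $\min\{|S|,|T|\} \le 2m/\sqrt{\Delta} = \Theta\bigl(\tfrac{\log\Delta}{\Delta}n\bigr)$ by the choice $m = \Theta\bigl(\tfrac{\log\Delta}{\sqrt{\Delta}}n\bigr)$. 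This balancing of $m$ against the Ramanujan bound is the crux of the argument.

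Next we locate the right moment in the sequence. Put $\tau \coloneq 2m/\sqrt{\Delta}$ and, for each $I$, write $a(I) = |I \cap L|$ and $b(I) = |I \cap R|$. Initially $a = m > \tau$ (assuming $\Delta > 4$; for small constant $\Delta$ the claim is trivial up to constants, since $\val \le m$). At the end $a = 0$ and $b = m$. Let $I^*$ be the first set in $\sqI$ with $a(I^*) \le \tau$, and let $I'$ be its predecessor, so $a(I') > \tau$. Since consecutive sets are nested, either $I' \supseteq I^*$ or $I' \subseteq I^*$. If $I' \subseteq I^*$, then $a(I^*) \ge a(I')$, a contradiction. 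Hence $I^* \subseteq I'$, so $b(I^*) \le b(I')$. Because $I'$ is independent with $a(I') > \tau$, the bound above gives $b(I') \le \tau$.

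Therefore $|I^*| \le \alpha(G) + a(I^*) + b(I^*) \le \alpha(G) + 2\tau = \Theta\bigl(\tfrac{\log\Delta}{\Delta}n\bigr)$. The only delicate point is the nesting argument just given; the same argument works for the relaxed (multi-vertex) step rule, since only nestedness of consecutive sets is used.
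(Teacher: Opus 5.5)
Your proof is correct and takes essentially the same route as the paper. The bipartite expander mixing lemma shows that no independent set of $X$ has more than $\Theta(m/\sqrt{\Delta})$ vertices on both sides, so the set where $|I\cap L|$ first drops to that threshold has size at most $\alpha(G)+O(m/\sqrt{\Delta})$. The differences are cosmetic: you bound $\sqrt{|S||T|}$ directly instead of reducing to $|S|=|T|$, and your first-crossing/nestedness argument replaces the paper's assumption of single-vertex steps. Using the sharper threshold $2\sqrt{\Delta-1}\,m/\Delta<m$, valid for all $\Delta\ge 3$, would also remove your separate treatment of $\Delta\le 4$.
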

\begin{proof}
Define
\begin{align}
    \epsilon_\Delta
    \defeq \frac{2.01 \sqrt{\Delta-1}}{\Delta}
    = \Theta\left(\frac{1}{\sqrt{\Delta}}\right).
\end{align}
Note that $\epsilon_\Delta \in (0,1)$ for $\Delta \geq 3$.
We use the following claim.
\begin{claim}
\label{clm:hardness:degree}
    Let $S \subseteq L$ and $T \subseteq R$ be any two vertex sets such that 
    $|S| \geq \epsilon_\Delta m$ and $|T| \geq \epsilon_\Delta m$.
    Then, $S \cup T$ is not an independent set of $X$.
\end{claim}
\begin{claimproof}
It is sufficient to consider the case where $|S|=|T|$.
Let $\epsilon \defeq \frac{|S|}{m} = \frac{|T|}{m}$.
By assumption, $\epsilon \geq \epsilon_\Delta$.
By \cref{lem:EML}, we have
\begin{align}
\begin{aligned}
    e_X(S,T)
    & \geq \frac{\Delta \cdot |S|\cdot |T|}{m} - \lambda \sqrt{|S| \cdot |T|} \\
    & = \frac{\Delta \cdot \epsilon m \cdot \epsilon m}{m} - \lambda\sqrt{\epsilon m \cdot \epsilon m} \\
    & \geq (\Delta \epsilon_\Delta - \lambda)\epsilon_\Delta m \\
    & = 0.01 \sqrt{\Delta-1} \cdot \epsilon_\Delta m > 0,
\end{aligned}
\end{align}
implying that $X[S \cup T]$ contains an edge; i.e.,
$S \cup T$ is not an independent set of $X$, as desired.
\end{claimproof}
Consider now any optimal $(\iini,\itar)$-reconfiguration sequence $\sqI^* = (I^{(1)}, \ldots, I^{(t)})$.
Without loss of generality, we assume that every two consecutive sets differ by exactly one vertex.
Since $|I^{(1)} \cap L| = m$ and $|I^{(t)} \cap L| = 0$,
there exists $i \in [t]$ such that
$|I^{(i)} \cap L| = \lceil \epsilon_\Delta m \rceil$.
By \cref{clm:hardness:degree}, $|I^{(i)} \cap R| \leq \lceil \epsilon_\Delta m \rceil$.
The size of $I^{(i)}$ is bounded as
\begin{align}
    \bigl|I^{(i)}\bigr|
    = \underbrace{\bigl|I^{(i)} \cap L\bigr|}_{= \lceil \epsilon_\Delta m \rceil}
    + \underbrace{\bigl|I^{(i)} \cap R\bigr|}_{\leq \lceil \epsilon_\Delta m \rceil}
    + \underbrace{\bigl|I^{(i)} \cap V(G)\bigr|}_{\leq \alpha(G)}
    \leq 2 \lceil \epsilon_\Delta m \rceil + \alpha(G).
\end{align}
Consequently, we have
\begin{align}
\begin{aligned}
    \val_H(\sqI^*)
    & \leq \bigl|I^{(i)}\bigr|
    \leq 2 \lceil \epsilon_\Delta m \rceil + \alpha(G) \\
    & \underbrace{\leq}_{\epsilon_\Delta = \Theta\left(\frac{1}{\sqrt{\Delta}}\right)}
            \Theta\left(\tfrac{1}{\sqrt{\Delta}}\tfrac{\log \Delta}{\sqrt{\Delta}}n\right) + \Theta\left(\tfrac{\log \Delta}{\Delta} n\right) \\
    & = \Theta\left(\tfrac{\log \Delta}{\Delta} n\right),
\end{aligned}
\end{align}
as desired.
\end{proof}

We are now ready to prove \cref{thm:hardness:degree}.

\begin{proof}[Proof of \cref{thm:hardness:degree}]
By \cref{thm:BK22},
for an $n$-vertex graph $G$ of maximum degree $\Delta \geq 3$,
it is $\NP$-hard (under randomised reductions) to distinguish whether
$\alpha(G) \geq \Theta\left(\tfrac{1}{\log \Delta}n\right)$
or
$\alpha(G) \leq \Theta\left(\tfrac{\log \Delta}{\Delta}n\right)$.
Let $(H, \iini, \itar)$ be an instance of \mmisr obtained by applying the reduction above to $G$.
Note that the maximum degree of $H$ is $\Delta$.
By \cref{lem:hardness:degree:completeness,lem:hardness:degree:soundness},
it is $\NP$-hard to distinguish whether
$\opt_G(\iini \reco \itar) \geq \Theta\left(\tfrac{\log \Delta}{\sqrt{\Delta}}n\right)$
or
$\opt_G(\iini \reco \itar) \leq \Theta\left(\tfrac{\log \Delta}{\Delta}n\right)$; i.e.,
it is $\NP$-hard to approximate \mmisr with in a factor of
$\Theta\bigl(\sqrt{\Delta}\bigr)$,
which completes the proof.
\end{proof}

\subsection{Bandwidth $n^{\frac{1}{2}+\Theta(1)}$}
\label{sec:hardness:bandwidth}

We show that
for an $n$-vertex graph of bandwidth $n^{\frac{1}{2}+\Theta(1)}$,
\mmisr is hard to approximate within a polynomial factor.

\begin{theorem}
\label{thm:hardness:bandwidth}
Let $\delta \in \left(0,\frac{1}{2}\right)$ be any positive real and
$\epsilon \in \left(0, \frac{1}{2}-\delta\right)$ be any small positive real.
For an $n$-vertex graph of bandwidth $\bigO\bigl(n^{\frac{1}{2}+\delta}\bigr)$,
it is $\NP$-hard to approximate \mmisr within a factor of
$\Theta\bigl(n^{4\delta^2-\epsilon}\bigr)$.
\end{theorem}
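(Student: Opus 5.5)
Following the technical overview, I would first prove an \mis-hardness statement for graphs of bandwidth $N^{\frac{1}{2}+\delta}$, and then apply the reduction of \cref{obs:intro:hard}.

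\textbf{Step 1 (padding).} Start from the hardness of \cite{MR1687331,zuckerman2007linear}. For any small $\epsilon_0>0$ it is $\NP$-hard, on $s$-vertex graphs $G_0$, to distinguish $\alpha(G_0)\ge s^{1-\epsilon_0}$ from $\alpha(G_0)\le s^{\epsilon_0}$. Any $s$-vertex graph has bandwidth at most $s$. Build $G$ as the disjoint union of $G_0$ with $P$ isolated copies of a padding gadget. The gadget should have bandwidth at most $s$ and only a small, known independence number. A natural choice is copies of $K_s$: each has bandwidth $s-1$, contributes exactly $1$ to $\alpha$, and has $s$ vertices. Order the components consecutively; the bandwidth is then at most $s$ and the vertex count is $n=(P+1)s$. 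Choosing $P$ so that $s=n^{\frac{1}{2}+\delta}$, i.e.\ $n=s^{2/(1+2\delta)}$, gives bandwidth $O(n^{\frac{1}{2}+\delta})$.

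\textbf{Step 2 (gap after padding).} Padding adds $P\approx n^{\frac{1}{2}-\delta}$ to $\alpha$ in both the yes and no cases, which swamps the gap. So instead I plan to use padding that adds nothing to the independent set usable in the reconfiguration.

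The cleanest fix is to use a different reduction: keep $G$ as many disjoint copies $G_0^{(1)},\dots,G_0^{(P)}$ of the hard instance. Then $\alpha$ scales by $P$ in both cases, so the ratio $s^{1-2\epsilon_0}$ is preserved. However, this ratio is only about $n^{(\frac{1}{2}+\delta)(1-2\epsilon_0)}$, which exceeds the claimed $n^{4\delta^2}$. That mismatch suggests the authors lose something else, presumably because copying also breaks bandwidth, or because they use the one-copy approach with a K-gadget whose $k$ must be small.

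Concretely, I would take $H=G\,+\,K_{k,k}$. Here $K_{k,k}$ has bandwidth about $k$, so we need $k\lesssim n^{\frac{1}{2}+\delta}$. Then $\opt=\min\{k,\alpha(G)\}$ by \cref{obs:intro:hard}. In the yes case $\alpha(G)\ge Ps^{1-\epsilon_0}=n\,s^{-\epsilon_0}$, which exceeds $k$, so $\opt=k$. In the no case $\opt\le Ps^{\epsilon_0}\approx n^{\frac{1}{2}-\delta}s^{\epsilon_0}$.

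This yields a ratio of about $n^{2\delta-O(\epsilon_0)}$, which is better than $4\delta^2$. I would therefore check the bandwidth of disjoint copies carefully (it is at most $s$, fine) and accept whichever exponent the calculation really gives. If some constraint forces $s=n^{2\delta}$-type parameters, the exponent becomes $4\delta^2$. Indeed, if Håstad's hardness is used in its weaker form, $\alpha$ in $[s^{1-\epsilon},\,s^{\epsilon}]$ relative to a bandwidth constraint $s\le n^{2\delta}$ combined with $k=n^{\frac{1}{2}+\delta}$ is plausibly where $4\delta^2$ arises.

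\textbf{Step 3 (conclude).} Combine the two cases via \cref{obs:intro:hard}. Choose $\epsilon_0$ small relative to $\epsilon$ and verify that the total bandwidth is $\max\{s,k\}=O(n^{\frac{1}{2}+\delta})$.

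The main obstacle is the bookkeeping in Step 2. One has to balance the bandwidth of the hard core, the number of copies, and $k$, so that the yes case has $\alpha\ge k$ while the no case is a polynomial factor below $k$; I would set up exponents as unknowns and optimize.
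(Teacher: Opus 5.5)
Stripped of the false starts, your concrete construction is correct, and it proves a stronger bound than the statement.

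\textbf{Your construction.} Take $P=\lceil s^{(1-2\delta)/(1+2\delta)}\rceil$ disjoint copies of the hard instance $G_0$. This gives $N=Ps$ vertices and bandwidth at most $s\le N^{\frac12+\delta}$. Set $k=\lfloor N^{\frac12+\delta}\rfloor$. Then $H=G+K_{k,k}$ has $\Theta(N)$ vertices and bandwidth $O(N^{\frac12+\delta})$. By \cref{obs:intro:hard}, for small $\epsilon_0$:
\begin{itemize}
\item in the yes case, $\opt\ge\min\{k,N^{1-\epsilon_0(\frac12+\delta)}\}=k$;
\item in the no case, $\opt\le Ps^{\epsilon_0}=O(N^{\frac12-\delta+\epsilon_0(\frac12+\delta)})$.
\end{itemize}
The gap is therefore $N^{2\delta-O(\epsilon_0)}$. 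Since $2\delta>4\delta^2$ on $(0,\frac12)$, hardness for this larger factor implies hardness for $n^{4\delta^2-\epsilon}$. You should state this explicitly rather than hedge with ``accept whichever exponent the calculation gives''.

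\textbf{The paper's route.} The paper uses the same two-stage template: an \mis lemma on bandwidth-bounded graphs, then the $K_{k,k}$ gadget. It pads $G_0$ on $n$ vertices with $\lceil n^{1-2\delta}\rceil$ copies of $K_n$ and sets $k=n^{1-\epsilon}$. This over-pads: the resulting bandwidth $n=\Theta(m^{1/(2-2\delta)})$ is polynomially below the budget $m^{\frac12+\delta}$. The gap is thus only $m^{\frac{\delta}{1-\delta}-\epsilon'}$, which the paper then relaxes to $m^{4\delta^2-\epsilon'}$ using $(1-2\delta)^2>0$.

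That is where $4\delta^2$ comes from, not from a forced choice like $s=n^{2\delta}$ as you speculate. You tune the padding to exactly meet the bandwidth budget and put $k$ at that budget. This gives the exponent $2\delta>\frac{\delta}{1-\delta}$, which is the best this template allows: $k=O(\bw(H))$, and any graph of bandwidth $b$ on $\Theta(N)$ vertices has $\alpha=\Omega(N/b)$. The paper's parametrisation is simply looser.

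\textbf{Side remarks to remove.} Two of your remarks are wrong.
\begin{itemize}
\item Clique padding does not swamp the gap. With $P\approx N^{\frac12-\delta}$ copies of $K_s$, the no-case $\alpha$ is $O(N^{\frac12-\delta})$, while the yes case has $\alpha\ge N^{(\frac12+\delta)(1-\epsilon_0)}$. This gives the same $N^{2\delta-O(\epsilon_0)}$.
\item Disjoint copies do not increase bandwidth, as you eventually note yourself.
\end{itemize}
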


To prove \cref{thm:hardness:bandwidth},
we first show the following $\NP$-hardness of \mis.

\begin{lemma}
\label{lem:hardness:bandwidth}
    Let $\delta \in \left(0,\frac{1}{2}\right)$ be any positive real and
    $\epsilon \in (0, 1-2\delta)$ be any small positive real.
    For an $n$-vertex graph $G$ of bandwidth $\bigO\bigl(n^{\frac{1}{2}+\delta}\bigr)$,
    it is $\NP$-hard to distinguish whether
    $\alpha(G) \geq \Theta\bigl(n^{\frac{1-\epsilon}{2-2\delta}}\bigr)$ or
    $\alpha(G) \leq \Theta\bigl(n^{\frac{1-2\delta}{2-2\delta}}\bigr)$.
\end{lemma}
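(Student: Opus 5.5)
The plan is to start from the classical hardness of \mis on general graphs, embed a hard instance into a larger graph so that its bandwidth becomes small relative to the total number of vertices, and pad in a way that increases $\alpha$ only by a controlled amount. By H\r{a}stad and Zuckerman \cite{MR1687331,zuckerman2007linear}, for any fixed $\epsilon>0$ it is $\NP$-hard, given an $N$-vertex graph $G_0$, to distinguish $\alpha(G_0)\geq N^{1-\epsilon}$ from $\alpha(G_0)\leq N^{\epsilon}$. No bandwidth restriction holds for $G_0$, but trivially $\bw(G_0)\leq N-1<N$.

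\textbf{Construction.} I will set
\[
  n \coloneq N^{2-2\delta}
\]
(rounded so that $N$ divides $n$) and define $G$ as the disjoint union of $G_0$ with $n/N-1$ vertex-disjoint cliques $K_N$. Then $|V(G)|=n$ and $G$ consists of $n/N$ blocks of $N$ vertices each. Laying the blocks out consecutively gives a linear order in which every edge joins two vertices of the same block. Hence $\bw(G)<N = n^{\frac{1}{2-2\delta}}$. Since $(\tfrac12+\delta)(2-2\delta)=1+\delta-2\delta^2\geq 1$ for $\delta\in(0,\tfrac12)$, we get $\frac{1}{2-2\delta}\leq \frac12+\delta$, so $\bw(G)=\bigO(n^{\frac12+\delta})$ as required.

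\textbf{Gap analysis.} Each clique contributes exactly $1$ to the independence number, so
\[
  \alpha(G)=\alpha(G_0)+\frac{n}{N}-1 = \alpha(G_0)+N^{1-2\delta}-1 .
\]
In the YES case, $\alpha(G)\geq N^{1-\epsilon}=n^{\frac{1-\epsilon}{2-2\delta}}$. In the NO case, $\alpha(G)\leq N^{\epsilon}+N^{1-2\delta}$. Because $\epsilon<1-2\delta$, this is $\Theta(N^{1-2\delta})=\Theta\bigl(n^{\frac{1-2\delta}{2-2\delta}}\bigr)$. This matches both thresholds in the statement, and the reduction is clearly polynomial since $n$ is polynomial in $N$.

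\textbf{Main obstacle.} The only delicate point is the choice of padding. Padding with isolated vertices, or with further copies of $G_0$, would inflate $\alpha$ by too much and destroy the gap. Cliques of size $N$ are the natural fix: they respect the bandwidth budget $N$ while adding only $n/N$ to $\alpha$. The parameter relation $n=N^{2-2\delta}$ is chosen precisely so that this additive term $N^{1-2\delta}$ dominates the soundness value $N^{\epsilon}$ but remains below the completeness value $N^{1-\epsilon}$. The remaining work is to handle rounding issues (divisibility of $n$ by $N$, and $N$ being large enough) with constants absorbed into the $\Theta(\cdot)$.
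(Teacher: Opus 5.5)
Your proposal is correct and follows essentially the same route as the paper. Both take the H\r{a}stad--Zuckerman hard instance on $N$ vertices and pad it with about $N^{1-2\delta}$ disjoint copies of $K_N$. The total size is then $\Theta(N^{2-2\delta})$, the bandwidth stays below $N=O(n^{\frac{1}{2}+\delta})$, and $\alpha$ grows by only $\Theta(N^{1-2\delta})$, so the given thresholds are met.
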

\begin{proof}
We describe a gap-preserving reduction from \mis to itself.
Let $\delta \in \left(0, \frac{1}{2}\right)$ and $\epsilon \in (0, 1-2\delta)$ be two positive reals.
Let $G$ be an $n$-vertex graph,
$C$ be a graph consisting of $\bigl\lceil n^{1-2\delta} \bigr\rceil$ $n$-cliques $K_n$, and
$H \defeq G + C$ be the disjoint union of $G$ and $C$.
The number of vertices in $H$, denoted by $m \defeq |V(H)|$, is 
\begin{align}
    m
    = |V(G)| + |V(C)|
    = n + n\cdot \bigl\lceil n^{1-2\delta} \bigr\rceil
    = \Theta\bigl(n^{2-2\delta}\bigr).
\end{align}
The bandwidth of $H$ is 
\begin{align}
    \bw(H)
    = \max\bigl\{ \bw(G), \bw(C) \bigr\}
    = \max\bigl\{ n, \bw(K_n) \bigr\}
    = n
    = \Theta\bigl(m^{\frac{1}{2-2\delta}}\bigr)
    = \bigO\bigl(m^{\frac{1}{2}+\delta}\bigr),
\end{align}
where we used the inequality that $\frac{1}{2-2\delta} < \frac{1}{2}+\delta$ for $\delta \in \left(0,\frac{1}{2}\right)$.
This completes the description of the reduction.

On one hand, if $\alpha(G) \geq n^{1-\epsilon}$, then
\begin{align}
    \alpha(H)
    \geq \alpha(G)
    \geq n^{1-\epsilon}
    = \Theta\bigl(m^{\frac{1-\epsilon}{2-2\delta}}\bigr).
\end{align}
\item On the other hand, if $\alpha(G) \leq n^{\epsilon}$, then
\begin{align}
\begin{aligned}
    \alpha(H)
    = \alpha(G) + \alpha(C)
    \leq n^{\epsilon} + \bigl\lceil n^{1-2\delta} \bigr\rceil
    \underbrace{\leq}_{\epsilon < 1-2\delta} \Theta\bigl(n^{1-2\delta}\bigr)
    = \Theta\bigl(m^{\frac{1-2\delta}{2-2\delta}}\bigr).
\end{aligned}
\end{align}
By \cite{MR1687331,zuckerman2007linear}, for an $n$-vertex graph $G$,
it is $\NP$-hard to distinguish whether
$\alpha(G) \geq n^{1-\epsilon}$ or $\alpha(G) \leq n^{\epsilon}$
for any small positive real $\epsilon$.
Therefore,
for an $m$-vertex graph $H$ of bandwidth $\bigO\bigl(m^{\frac{1}{2}+\delta}\bigr)$,
it is $\NP$-hard to distinguish whether
$\alpha(H) \geq \Theta\bigl(m^{\frac{1-\epsilon}{2-2\delta}}\bigr)$ or
$\alpha(H) \leq \Theta\bigl(m^{\frac{1-2\delta}{2-2\delta}}\bigr)$,
completing the proof.
\end{proof}

We are now ready to prove \cref{thm:hardness:bandwidth}.

\begin{proof}[Proof of \cref{thm:hardness:bandwidth}]
We describe a gap-preserving reduction from \mis to \mmisr.
Let $\delta \in \left(0, \frac{1}{2}\right)$ be a positive real and
$G$ be an $n$-vertex graph of bandwidth $\bw(G) = \bigO\bigl(n^{\frac{1}{2}+\delta}\bigr)$.
Define $k \defeq \Theta\bigl( n^{\frac{1-\epsilon}{2-2\delta}} \bigr)$,
where $\epsilon$ is a (sufficiently small) positive real, whose value will be determined later.
Let $H$ be the disjoint union of $G$ and a complete bipartite graph $K_{k,k}$ with bipartition $(L,R)$.
The number of vertices in $H$, denoted by $m \defeq |V(H)|$, is
\begin{align}
    m = |V(G)| + |V(K_{k,k})| = n + 2k = \Theta(n).
\end{align}
The bandwidth of $H$ is bounded as
\begin{align}
    \bw(H)
    = \max\bigl\{ \bw(G), \bw(K_{k,k}) \bigr\}
    = \max\Bigl\{ \bigO\bigl(n^{\frac{1}{2}+\delta}\bigr), \Theta\bigl( n^{\frac{1-\epsilon}{2-2\delta}} \bigr) \Bigr\}
    = \bigO\bigl(n^{\frac{1}{2}+\delta}\bigr)
    = \bigO\bigl(m^{\frac{1}{2}+\delta}\bigr).
\end{align}
The initial and target independent sets are defined as $\iini \defeq L$ and $\itar \defeq R$.
We obtain an instance $(H,\iini,\itar)$ of \mmisr, completing the description of the reduction.

We first show the completeness.
Suppose that $\alpha(G) \geq \Theta\bigl(n^{\frac{1-\epsilon}{2-2\delta}}\bigr)$.
Let $I_G$ be a maximum independent set of $G$, whose size is $\alpha(G)$.
Consider an $(\iini,\itar)$-reconfiguration sequence $\sqI$ obtained by the following procedure:
\begin{description}
    \item[Step 1.] add all $\alpha(G)$ vertices of $I_G$
    to obtain $L \cup I_G$.
    \item[Step 2.] remove all $k$ vertices of $L$
    to obtain $I_G$.
    \item[Step 3.] add all $k$ vertices of $R$ 
    to obtain $R \cup I_G$.
    \item[Step 4.] remove all $\alpha(G)$ vertices of $I_G$
    to obtain $R$.
\end{description}
The value of $\sqI$ can be bounded from below as
\begin{align}
    \val_H(\sqI)
    = \min\bigl\{|L|, |R|, |I_G|\bigr\}
    \geq \Theta\bigl(n^{\frac{1-\epsilon}{2-2\delta}}\bigr)
    = \Theta\bigl(m^{\frac{1-\epsilon}{2-2\delta}}\bigr).
\end{align}
Therefore, $\opt_H(\iini \reco \itar) \geq \Theta\bigl(m^{\frac{1-\epsilon}{2-2\delta}}\bigr)$.

We then show the soundness.
Suppose that $\alpha(G) \leq \Theta\bigl(n^{\frac{1-2\delta}{2-2\delta}}\bigr)$.
Consider any optimal $(\iini, \itar)$-reconfiguration sequence $\sqI^* = (I^{(1)}, \ldots, I^{(t)})$.
Without loss of generality, we assume that every two consecutive sets differ by exactly one vertex.
Since $L \cup R$ induces a complete bipartite graph $K_{k,k}$,
there must exist $i \in [t]$ such that $|I^{(i)} \cap L| = |I^{(i)} \cap R| = 0$; i.e.,
$|I^{(i)}| = |I^{(i)} \cap V(G)| \leq \alpha(G)$.
The value of $\sqI$ can be bounded from above as 
\begin{align}
    \val_H(\sqI^*)
    \leq \bigl|I^{(i)}\bigr|
    \leq \alpha(G)
    \leq \Theta\bigl(n^{\frac{1-2\delta}{2-2\delta}}\bigr)
    = \Theta\bigl(m^{\frac{1-2\delta}{2-2\delta}}\bigr).
\end{align}
Therefore, $\opt_H(\iini \reco \itar) \leq \Theta\bigl(m^{\frac{1-2\delta}{2-2\delta}}\bigr)$.

By \cref{lem:hardness:bandwidth},
for an $m$-vertex graph $H$ of bandwidth $\bigO\bigl(m^{\frac{1}{2}+\delta}\bigr)$ and
a pair of its independent sets $\iini$ and $\itar$,
it is $\NP$-hard to distinguish whether
$\opt_H(\iini \reco \itar) \geq \Theta\bigl(m^{\frac{1-\epsilon}{2-2\delta}}\bigr)$
or
$\opt_H(\iini \reco \itar) \leq \Theta\bigl(m^{\frac{1-2\delta}{2-2\delta}}\bigr)$.
Therefore, it is $\NP$-hard to approximate \mmisr within a factor of 
\begin{align}
\label{eq:hardness:bandwidth:factor}
    \frac{\Theta\bigl(m^{\frac{1-\epsilon}{2-2\delta}}\bigr)}{\Theta\bigl(m^{\frac{1-2\delta}{2-2\delta}}\bigr)}
    = \Theta\bigl(m^{\frac{2\delta-\epsilon}{2-2\delta}}\bigr).
\end{align}
Letting $\epsilon \defeq \epsilon' (2-2\delta) > 0$
for any small positive $\epsilon' < \frac{1}{2}-\delta < \frac{1-2\delta}{2-2\delta}$
so that $\epsilon < 1-2\delta$,
we obtain
\begin{align}
    \text{\cref{eq:hardness:bandwidth:factor}}
    = \Theta\bigl(m^{\frac{2\delta-\epsilon}{2-2\delta}}\bigr)
    = \Theta\bigl(m^{\frac{\delta}{1-\delta}-\epsilon'}\bigr)
    \geq \Omega\bigl(m^{4\delta^2-\epsilon'}\bigr),
\end{align}
where we used the inequality that $\frac{\delta}{1-\delta} > 4\delta^2$ for $\delta \in \left(0,\frac{1}{2}\right)$,
which completes the proof.
\end{proof}

\subsection{Bipartite graphs}
\label{sec:hardness:bipartite}

We show that for an $n$-vertex bipartite graph,
\mmisr cannot be approximated within a factor of $n^{1-\epsilon}$ in polynomial time,
under the Small Set Expansion Hypothesis (SSEH) \cite{raghavendra2010graph} and $\NP \nsubseteq \BPP$.

\begin{theorem}
\label{thm:hardness:bipartite}
    Assuming SSEH and $\NP \nsubseteq \BPP$,
    for an $n$-vertex bipartite graph,
    no polynomial-time algorithm can approximate \mmisr within a factor of $n^{1-\epsilon}$ for any $\epsilon > 0$.
\end{theorem}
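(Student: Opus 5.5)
We reduce from \textsc{Maximum Balanced Biclique} (\mbb), which under SSEH and $\NP \nsubseteq \BPP$ cannot be approximated within a factor of $n^{1-\epsilon}$ on $n$-vertex bipartite graphs \cite{Manurangsi18biclique}. The key observation is that in a bipartite graph $G=(A\cup B,E)$, a set $S\cup T$ with $S\subseteq A$, $T\subseteq B$ is independent in $G$ exactly when $(S,T)$ is a biclique in the bipartite complement $\bar G$, whose edges are the non-edges of $G$ between $A$ and $B$. So from an \mbb instance $\bar G$ we build the bipartite graph $G$ and take $\iini \defeq A$, $\itar \defeq B$, both of which are independent. The aim is to show
\[
\opt_G(A\reco B) = \Theta\bigl(\beta(\bar G)\bigr),
\]
where $\beta$ is the balanced biclique number; the approximation factor then transfers up to constants.

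\textbf{Lower bound.} Take a maximum balanced biclique $(S,T)$ of $\bar G$ with $|S|=|T|=\beta$. In $G$ we go $A \to S \to S\cup T \to T \to B$: remove $A\setminus S$, add $T$, remove $S$, add $B\setminus T$. Every intermediate set is independent and has size at least $\min\{|S|,|T|\}=\beta$, since $|A|,|B|\ge\beta$.

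\textbf{Upper bound.} Fix an optimal sequence that changes one vertex per step. Along it, $|I\cap A|$ starts at $|A|$ and ends at $0$, while $|I\cap B|$ starts at $0$ and ends at $|B|$. The balancing step is to pick the first index $i$ at which $|I^{(i)}\cap B| \ge |I^{(i)}\cap A|$; such an index exists because the condition holds at the end. At that point the two intersections differ by at most one, since single-vertex steps change the difference by exactly one. Then $(I^{(i)}\cap A, I^{(i)}\cap B)$ is a biclique in $\bar G$ with sides of sizes $a$ and $b$, where $|a-b|\le 1$. Trimming to a balanced biclique gives $\beta \ge \min\{a,b\} \ge (|I^{(i)}|-1)/2$. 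Hence $\opt \le 2\beta+1$.

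\textbf{Conclusion.} Combining the two bounds, $\beta \le \opt \le 2\beta+1$. An $n^{1-\epsilon}$-approximation for \mmisr on bipartite graphs, applied to the instance above, would therefore approximate $\beta(\bar G)$ within $O(n^{1-\epsilon})$. The vertex count is unchanged by the reduction, so adjusting $\epsilon$ slightly contradicts the \mbb hardness.

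The main obstacle is the upper bound, specifically making sure a balanced crossing point exists. The discrete intermediate-value argument handles it, and it is also where the factor $2$ comes from. A second point to check is that the hardness in \cite{Manurangsi18biclique} is stated for bipartite input graphs in a form the bipartite complement maps onto bijectively. Complementing with respect to the fixed bipartition $(A,B)$ is an involution, so this should be fine.
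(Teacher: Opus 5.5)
Your proposal is correct and follows essentially the same route as the paper: the same bipartite-complement reduction with $\iini$ and $\itar$ the two sides, the same four-step sequence through a biclique for completeness, and a discrete intermediate-value argument for soundness. The only difference is in how the crossing index is chosen. The paper fixes the crossing point $|I^{(i)} \cap L| = \lceil n^{\epsilon}\rceil$ and uses the absence of $K_{n^\epsilon,n^\epsilon}$ directly, whereas your balanced crossing gives the instance-independent bound $\beta \le \opt \le 2\beta$; at the first index with $|I^{(i)}\cap B| \ge |I^{(i)}\cap A|$ the two sides are in fact equal, so your $+1$ is unnecessary.
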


The proof of \cref{thm:hardness:bipartite} is based on
a gap-preserving reduction from \textsc{Maximum Balanced Biclique}, which we define as follows.
We say that a bipartite graph with bipartition $(L,R)$ is \defi{balanced} if $|L| = |R|$.
For a bipartite graph $G$, the \textsc{Maximum Balanced Biclique} (\mbb) problem asks to find
a balanced biclique (i.e., a balanced complete bipartite graph $K_{k,k}$) of $G$ with the maximum number of vertices.
Manurangsi~\cite{Manurangsi18biclique} proved the following inapproximability result.

\begin{lemma}[\protect{\cite[Lemma A1]{Manurangsi18biclique}}]
\label{thm:Man18}
    Assuming SSEH and $\NP \nsubseteq \BPP$,
    for any $\epsilon > 0$ and for a balanced bipartite graph with bipartition $(L,R)$,
    no polynomial-time algorithm can distinguish whether
    $G$ contains a balanced biclique $K_{n^{1-\epsilon}, n^{1-\epsilon}}$ as a subgraph or
    $G$ does not contain a balanced biclique $K_{n^\epsilon, n^\epsilon}$ as a subgraph,
    where $n=|L|=|R|$.
\end{lemma}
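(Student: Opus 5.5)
The plan is to obtain this statement the way Manurangsi does: start from a constant-gap hardness for balanced biclique under SSEH, amplify the gap to $n^{1-\epsilon}$ versus $n^{\epsilon}$ with a randomised graph product, and finally pad the graph so that both sides have equal size. The randomisation in the product is the only reason the conclusion is stated under $\NP \nsubseteq \BPP$ rather than $\NP \neq \cP$.

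\textbf{Step 1 (constant gap from SSEH).} From a small-set-expansion instance we build a bipartite graph $G_0$ on sides of size $N$ with two properties. In the completeness case it contains $K_{\delta N,\delta N}$. In the soundness case, for a target constant $\gamma \ll \delta$, it contains no $K_{a,b}$ with $a,b \geq \gamma N$. We take this from the main theorem of \cite{Manurangsi18biclique}. Roughly, the left side consists of vertices of the SSE graph and the right side of short random-walk ``tests''; a non-expanding small set then yields a large biclique. I would strengthen the soundness to a product-type statement: every biclique $K_{a,b}$ satisfies $ab \leq \gamma^2 N^2$ whenever $a,b \geq \gamma^{2} N$. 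This costs only a change of constants.

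\textbf{Step 2 (randomised amplification).} Fix $t = t(\epsilon)$. Form the tensor-power bipartite graph $G_0^{\otimes t}$ on $L^t \times R^t$, where $(\bar u,\bar v)$ is an edge iff $(u_i,v_i) \in E(G_0)$ for every coordinate $i$. Then sample $M = \poly(N)$ random tuples on each side. In completeness, $K_{\delta N,\delta N}^{\otimes t}$ is a biclique, and Chernoff bounds show that it keeps about $\delta^t M$ sampled tuples per side.

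In soundness, the key observation is about a biclique $(A,B)$ of sampled tuples. For each coordinate $i$, the projections $\pi_i(A) \times \pi_i(B)$ form a biclique of $G_0$. Hence $A \subseteq \prod_i \pi_i(A)$ and $B \subseteq \prod_i \pi_i(B)$, and each coordinate pair is small by Step 1. A union bound over all possible projection profiles then shows that, with high probability, every such product box contains at most about $\gamma^t M$ sampled tuples on the smaller side, plus a lower-order error term. The number of profiles is at most $2^{2tN}$, so we need $M \gg tN$ with a suitable exponent. Choosing $t$ and $M$ so that $\delta^t M \geq M^{1-\epsilon}$ and $\gamma^t M \leq M^{\epsilon}$ gives the required $n^{1-\epsilon}$ versus $n^{\epsilon}$ gap.

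\textbf{Step 3 (balancing).} If the two sampled sides differ in size, add isolated vertices to the smaller side. Isolated vertices create no new bicliques of positive size, so both the completeness and soundness bounds are preserved. Setting $n = |L| = |R|$ and readjusting $\epsilon$ to absorb constant factors and the polynomial blow-up then gives the stated lemma. Since the reduction is randomised and has one-sided error that is negligible with high probability, a deterministic polynomial-time distinguisher would place the SSEH-hard problem, and hence $\NP$, in $\BPP$.

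\textbf{Main obstacle.} The main difficulty is soundness in Step 2. A biclique in the product can be badly unbalanced in some coordinates, for example a tiny set on one side paired with a huge set on the other. The naive bound ``no $\gamma N$-by-$\gamma N$ biclique'' does not control the product $\prod_i |\pi_i(A)|\cdot|\pi_i(B)|$ in that case. I would first try to handle this by proving the stronger product-form soundness in Step 1. If that fails, I would fall back on a sampling-density argument: a box with very lopsided projections contains, with high probability, almost no sampled tuples on its small side, so its balanced part is still at most $M^{\epsilon}$.
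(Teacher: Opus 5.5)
The paper does not prove this lemma; it imports it from Manurangsi. Your outline matches, in broad terms, how that result is obtained: constant-gap balanced-biclique hardness from SSEH, then randomised tensor-power amplification, which is where $\NP \nsubseteq \BPP$ enters.

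\textbf{The gap: Step~2 fails with $t = t(\epsilon)$.} If $t$ is a constant and $\delta,\gamma$ are the constants of Step~1, completeness is about $\delta^t M$ and soundness about $\gamma^t M$. Both are linear in $M$, so the gap is only the constant $(\delta/\gamma)^t$. This is not just a weakness of the analysis. Nothing prevents $G_0$ from having a biclique $(S,T)$ just below the threshold, and then the box $S^t \times T^t$ keeps about $\gamma^t M$ samples on each side. Consequently your final requirement $\gamma^t M \le M^{\epsilon}$ is false for large $M$.

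The two requirements $\delta^t \ge M^{-\epsilon}$ and $\gamma^t \le M^{-(1-\epsilon)}$ force $t = \Theta(\log M)$. They are compatible only if $\log(1/\gamma) \ge \frac{1-\epsilon}{\epsilon}\log(1/\delta)$. So Step~1 must give soundness arbitrarily small for a fixed completeness constant, with $\gamma$ chosen depending on $\epsilon$; Manurangsi's constant-gap theorem does this, with completeness about $\frac12$.

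Once $t$ is logarithmic, two further things need saying. First, the tensor power is superpolynomial, so only the $M$ sampled tuples are ever written down; adjacency of two tuples costs $O(t)$. Second, the union bound runs over $2^{2tN} = 2^{O(N\log M)}$ boxes. For a fixed box, the probability that its smaller side receives at least $K$ samples is at most $\binom{M}{K}\gamma^{tK} \le (eM\gamma^t/K)^K$. Take $K = M^{\epsilon}$ and $M = N^{\Theta(1/\epsilon)}$, so that (after adjusting constants) $\gamma^t M \le K/e^2$ and $K \gg N\log M$. This closes the argument.

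\textbf{The ``main obstacle'' is not an obstacle.} Suppose $G_0$ has no $K_{a,b}$ with $a,b \ge \gamma N$. Then every biclique $(S,T)$ of $G_0$ has $\min(|S|,|T|) < \gamma N$ and $\max(|S|,|T|) \le N$, so $|S|\cdot|T| < \gamma N^2$. A lopsided coordinate is just the extreme case of this bound. Multiplying over coordinates, every box $\prod_i S_i \times \prod_i T_i$ satisfies $\prod_i |S_i| \cdot \prod_i |T_i| < \gamma^t N^{2t}$, so its smaller side has density below $\gamma^{t/2}$. Your product-form strengthening is exactly this observation with $\gamma^2$ in place of $\gamma$. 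It cannot fail, and the sampling-density fallback is unnecessary.

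Two minor points. The randomised reduction has two-sided, not one-sided, error; this is harmless for the $\BPP$ conclusion. Step~3 is vacuous, since you sample $M$ tuples on each side.
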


\begin{proof}[Proof of \cref{thm:hardness:bipartite}]
We describe a gap-preserving reduction from \mbb to \mmisr on bipartite graphs.
Let $G = (L, R, E)$ be a balanced bipartite graph as an instance of \mbb.
Define $n \defeq |L| = |R|$ and $m \defeq |V(G)| = 2n$.
An instance of \mmisr is defined as $(H, \iini, \itar)$, where
$H \defeq (L, R, (L \times R) \setminus E)$ is a balanced bipartite graph and
$\iini \defeq L$ and $\itar \defeq R$ are the initial and target independent sets.

We first show the completeness.
Suppose that $G$ contains a balanced biclique $K_{n^{1-\epsilon},n^{1-\epsilon}}$ as a subgraph,
denoted by $S \cup T$ such that $S \subseteq L$, $T \subseteq R$, and $|S|=|T| \geq n^{1-\epsilon}$.
Note that $S \cup T$ is an independent set of $H$.
Consider an $(\iini,\itar)$-reconfiguration sequence $\sqI$
passing through $S \cup T$ obtained by the following procedure:
\begin{description}
    \item[Step 1.] remove all vertices of $L \setminus S$
    to obtain $S$.
    \item[Step 2.] add all vertices of $T$
    to obtain $S \cup T$.
    \item[Step 3.] remove all vertices of $S$
    to obtain $T$.
    \item[Step 4.] add all vertices of $R \setminus T$
    to obtain $R$.
\end{description}
The value of $\sqI$ is bounded from below as
\begin{align}
    \val_H(\sqI)
    = \min\bigl\{|L|, |R|, |S|, |T| \bigr\}
    \geq n^{1-\epsilon}.
\end{align}
Therefore, $\opt_H(\iini \reco \itar) \geq n^{1-\epsilon}$.

We then show the soundness.
Suppose that $G$ does not contain a balanced biclique $K_{n^\epsilon,n^\epsilon}$ as a subgraph.
Consider any optimal $(\iini,\itar)$-reconfiguration sequence $\sqI^* = (I^{(1)}, \ldots, I^{(t)})$.
Without loss of generality, we assume that every two consecutive sets differ by exactly one vertex.
Since $|I^{(1)} \cap L| = n$ and $|I^{(t)} \cap L| = 0$,
there must exist $i \in [t]$ such that $|I^{(i)} \cap L| = \bigl\lceil n^\epsilon \bigr\rceil$.
Since $G$ does not contain $K_{n^\epsilon,n^\epsilon}$ as a subgraph, we have $|I^{(i)} \cap R| < \bigl\lceil n^\epsilon \bigr\rceil$, implying that
\begin{align}
    \val_H(\sqI^*)
    \leq \bigl|I^{(i)}\bigr|
    < 2 \bigl\lceil n^\epsilon \bigr\rceil.
\end{align}
Therefore, $\opt_H(\iini \reco \itar) < 2 \bigl\lceil n^\epsilon \bigr\rceil$.

By applying \cref{thm:Man18},
under SSEH and $\NP \nsubseteq \BPP$,
no polynomial-time algorithm can distinguish whether
$\opt_H(\iini \reco \itar) \geq n^{1-\epsilon} \geq m^{1-2\epsilon}$ or
$\opt_H(\iini \reco \itar) < 2 \bigl\lceil n^\epsilon \bigr\rceil \leq m^{2\epsilon}$
for sufficiently large $m = 2n$.
Therefore, under the same assumption,
no polynomial-time algorithm can approximate \mmisr on an $m$-vertex bipartite graph
within a factor of $m^{1-4\epsilon}$ for any $\epsilon > 0$,
which completes the proof.
\end{proof}

\bibliography{refs}

\end{document}